\newtheoremstyle{thm}
{9pt}
{9pt}
{\itshape}
{}
{\bfseries}
{.}
{ }
{}
\theoremstyle{thm}
\newtheorem{theorem}{Theorem}[section]
\newtheorem{lemma}[theorem]{Lemma}
\newtheorem{corollary}[theorem]{Corollary}
\newtheorem{prop}[theorem]{Proposition}
\newcommand{\vertk}{\stackrel{{\cal D}}{\longrightarrow}}
\newcommand{\fsk}{\stackrel{{\rm a.s.}}{\longrightarrow}}
\newtheoremstyle{def}
{9pt}
{9pt}
{}
{}
{\bfseries}
{.}
{ }
{}
\theoremstyle{def}
\newtheorem{remark}[theorem]{Remark}
\newtheorem{example}[theorem]{Example}
\newcommand{\R}{\mathbb{R}} 
\newcommand{\E}{\mathbb{E}} 
\newcommand{\PP}{\mathbb{P}} 
\newcommand{\HH}{\mathbb{H}} 
\newcommand{\fse}{\overset{\emph{a.s.}}{\longrightarrow}}
\renewcommand{\footnoterule}{%
	\kern -3.5pt
	\hrule width \textwidth height 1pt
	\kern 3.5pt
}
\def\blfootnote{\xdef\@thefnmark{}\@footnotetext}
\title{Testing multivariate normality by zeros of the harmonic oscillator in characteristic function spaces}
\author{Philip D\"orr\\
Institute of Stochastics, \\
Karlsruhe Institute of Technology (KIT), \\
Englerstr. 2, D-76133 Karlsruhe. \\
\And  Bruno Ebner\\
 Institute of Stochastics, \\
Karlsruhe Institute of Technology (KIT), \\
Englerstr. 2, D-76133 Karlsruhe. \\
\texttt{Bruno.Ebner@kit.edu}\\
\And
Norbert Henze\\
Institute of Stochastics, \\
Karlsruhe Institute of Technology (KIT), \\
Englerstr. 2, D-76133 Karlsruhe. \\
\texttt{Norbert.Henze@kit.edu}\\
}
\begin{document}

\date{\today}
\maketitle

\blfootnote{ {\em MSC 2010 subject
classifications.} Primary 62E10 Secondary 60E10, 62G10}
\blfootnote{
{\em Key words and phrases} test for multivariate normality; affine invariance; consistency; empirical characteristic function; harmonic oscillator; neighborhood-of-model validation}

\begin{abstract}
We study a novel class of affine invariant and consistent tests for normality	in any dimension.
The tests are based on a characterization of the standard $d$-variate normal distribution as the unique solution
of an initial value problem of a partial differential equation motivated by the harmonic oscillator,
which is a special case of a Schr\"odinger operator.  We derive the asymptotic distribution of the
test statistics under the hypothesis of normality as well as under fixed and contiguous alternatives.
The tests are consistent against general alternatives, exhibit strong power performance for finite samples,
and they are applied to a classical data set due to  R.A. Fisher. The results can also be used for a neighborhood-of-model validation
procedure.
\end{abstract}

\section{Introduction}\label{sec:Intro}
The multivariate normal distribution plays a key role in classical and hence widely used procedures.
Serious statistical inference that involves the assumption of multivariate normality should therefore start with
 a test of fit to this model. There is a continuing interest in this testing problem, as evidenced by a multitude of papers.
 The proposed tests may be roughly classified as follows: \cite{A:2007,BH:1988,HW:1997,HZ:1990,P:2005,T:2009} consider
 tests based on the empirical characteristic function, while \cite{HJG:2019,HJM:2019,HV:2019} employ the
 empirical moment generating function.
A classical (and still popular) approach is to consider measures of multivariate skewness and kurtosis
(see, e.g., \cite{DH:2008,KTO:2007,MA:1973,MAR:1970,MAR:1974,MRS:1993}), as supposedly diagnostic tools
with regard to the kind of deviation from normality when this hypothesis has been rejected, but the deficiencies of those measures in this regard have been clearly demonstrated
(see, e.g., \cite{BH:1991,BH:1992,HEN:1994b,HEN:1994a,H:1997}). Other approaches involve generalizations of tests for univariate normality
\cite{AE:2009,KP:2018,S:2006}, the examination of nonlinearity of dependence \cite{CS:1978,E:2012}, canonical correlations \cite{T:2014}
and the notion of energy, see \cite{SR:2005}. For a survey of affine invariant tests for multivariate normality, see  \cite{H:2002}.
 Monte Carlo studies can be found in  \cite{FSN:2007,MM:2005,VPMV:2016}.

To be specific, let $X,X_1,\ldots,X_n,\ldots$ be a sequence of independent identically distributed (i.i.d.) $d$-dimensional random (column) vectors,
which are defined on some common probability space $(\Omega,{\cal A},\mathbb{P})$. Here, $d\ge1$ is a fixed integer, which means that
the univariate case is deliberately not excluded.  We write $\mathbb{P}^X$ for the distribution of $X$. The $d$-variate normal distribution with expectation
$\mu$ and nonsingular covariance matrix $\Sigma$ will be denoted by N$_d(\mu,\Sigma)$. Furthermore,
\begin{equation*}
\mathcal{N}_d=\{\text{N}_d(\mu,\Sigma):\mu\in\mathbb{R}^d,\,\Sigma\in\mathbb{R}^{d\times d}\; \mbox{positive definite}\}
\end{equation*}
stands for the family of nondegenerate $d$-variate normal distributions. A check of the assumption of multivariate normality means to test the hypothesis
\begin{equation}\label{H0}
H_0:\,\mathbb{P}^X\in\mathcal{N}_d,
\end{equation}
against general alternatives.

Writing I$_d$ for the unit matrix of order $d$, our novel idea for testing $H_0$ is to use a characterization of the Fourier transform
 of N$_d(0,\text{I}_d)$ as the unique solution of an initial value problem of a partial differential equation
motivated by the harmonic oscillator, which is a special case of a Schr\"{o}dinger operator.
The proposed test statistic is based on the squared norm of a functional of the empirical characteristic function in a suitably weighted $L^2$-space.
This statistic is close to zero under the hypothesis (\ref{H0}), and rejection will be for large values of the test statistic.

Let $L^2(\R^d)$ be the space of square integrable functions, equipped with the usual norm and scalar product $\langle\cdot,\cdot\rangle$.
Consider for sufficiently regular $f \in L^2(\R^d)$ the partial differential equation
\begin{equation}\label{PDE}
\left\{\begin{array}{l}\Delta f(x) = (\|x\|^2-d) f(x),\quad x\in\R^d,\\ f(0)=1.\end{array}\right.
\end{equation}
Here, $\Delta$ stands for the Laplace operator, and $\|\cdot\|$ denotes the Euclidean norm. Notice that we can rewrite (\ref{PDE}) as
$(-\Delta + \|x\|^2 - d) f(x)= 0$
or, equivalently, as
\begin{equation}\label{PDE2}
\sum_{j=1}^d\left(-\frac{\partial^2}{\partial x_j^2} + x_j^2 - 1\right) f(x) = 0,\quad x=(x_1,\ldots,x_d) \in\R^d.
\end{equation}
  The operator $-\Delta + \|x\|^2 - d$ is known as the harmonic oscillator, see \cite{G:2011}. In the univariate case,
   (\ref{PDE}) reduces to a fixed point problem (or, equivalently, to the problem of finding the eigenfunction that corresponds
   to the eigenvalue 1) of the Hermite operator, see equation (1.1.9) in \cite{T:1993}.
   The solution of this problem is the $0^{th}$ Hermite function, which coincides with the solution given in the following theorem.
\begin{theorem}\label{unique.sol}
The characteristic function
\begin{equation}\label{defpsit}
\psi(t)=\exp\left(-\frac{\|t\|^2}{2}\right),\quad t\in\R^d,
\end{equation}
of the $d$-variate standard normal distribution {\rm \textrm{N$_d(0,{\rm {\textrm{I}}}_d)$}}
is the unique solution of (\ref{PDE}).
\end{theorem}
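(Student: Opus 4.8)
The plan is to split the argument into the (routine) verification that $\psi$ solves \eqref{PDE} and the (substantive) uniqueness claim. For the verification I would simply differentiate: since $\partial\psi/\partial t_j = -t_j\psi(t)$ and $\partial^2\psi/\partial t_j^2 = (t_j^2-1)\psi(t)$, summing over $j$ yields $\Delta\psi(t) = (\|t\|^2-d)\psi(t)$, and clearly $\psi(0)=1$, so $\psi$ is a solution. The heart of the theorem is uniqueness, and this is where I expect the real work to lie.

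For uniqueness my preferred route exploits the factorization of the harmonic oscillator into ladder operators. Writing $a_j = \tfrac{1}{\sqrt2}\big(x_j + \tfrac{\partial}{\partial x_j}\big)$ with formal adjoint $a_j^* = \tfrac{1}{\sqrt2}\big(x_j - \tfrac{\partial}{\partial x_j}\big)$, a short computation using $\tfrac{\partial}{\partial x_j}x_j - x_j\tfrac{\partial}{\partial x_j} = 1$ gives $a_j^*a_j = \tfrac12\big(-\tfrac{\partial^2}{\partial x_j^2} + x_j^2 - 1\big)$, so that the operator in \eqref{PDE2} factorizes as $-\Delta + \|x\|^2 - d = 2\sum_{j=1}^d a_j^*a_j$. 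If $f$ is a sufficiently regular solution of \eqref{PDE}, then $\sum_{j=1}^d a_j^*a_j f = 0$; pairing this with $f$ in $L^2(\R^d)$ and using that $a_j^*$ is the adjoint of $a_j$ to move it onto $f$ gives $\sum_{j=1}^d \|a_jf\|^2 = 0$, hence $a_jf = 0$ for every $j$.

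It then remains to solve this first-order system. The equation $a_jf=0$ reads $\tfrac{\partial}{\partial x_j}f(x) = -x_jf(x)$, and integrating these $d$ equations successively forces $f(x) = f(0)\exp(-\|x\|^2/2)$; the initial condition $f(0)=1$ fixes the constant and identifies $f$ with $\psi$. An alternative I would keep in reserve is the spectral argument suggested by the link to \cite{T:1993}: expand $f$ in the orthonormal basis of Hermite functions, which diagonalizes $-\Delta+\|x\|^2$ with eigenvalues $2|\alpha|+d$, so that \eqref{PDE} annihilates every coefficient with $|\alpha|\ge1$ and leaves only the one-dimensional ground state $\exp(-\|x\|^2/2)$.

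The main obstacle is the analytic justification of the pairing step, i.e.\ showing that no boundary term at infinity survives and that $a_jf\in L^2(\R^d)$; this is exactly what the phrase ``sufficiently regular $f\in L^2(\R^d)$'' is meant to guarantee. Elliptic regularity makes any $L^2$ solution smooth, and the growing potential $\|x\|^2$ forces Gaussian-type decay of the putative eigenfunction, which is the technical point one must pin down to render the integration by parts rigorous.
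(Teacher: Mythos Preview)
Your argument is correct and follows essentially the same route as the paper: factorize the harmonic oscillator via ladder operators, pair the equation with $f$ to deduce $a_jf=0$ for each $j$, and then integrate the resulting first-order system using the initial condition. The only differences are cosmetic (your $1/\sqrt{2}$ normalization, your explicit verification that $\psi$ solves \eqref{PDE}, and your added remarks on regularity and the Hermite-expansion alternative); the paper integrates $a_jf=0$ by observing $\nabla\bigl(e^{\|x\|^2/2}f\bigr)=0$ rather than solving coordinate by coordinate, but this is the same computation.
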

\begin{proof}
Let $f$ be an arbitrary solution of (\ref{PDE}). Writing i for the imaginary unit, we introduce the creation and annihilation operators
$a_j=x_j+\text{i}p_j$ and $a_j^\star=x_j-\text{i}p_j$, $j=1,\ldots,d$,
where $p_j=-\text{i} \frac{\partial}{\partial x_j}$, $j=1,\ldots,d$. For each $j \in \{1,\ldots,d\}$ we have
\begin{eqnarray*}
a_j^\star a_j & = & \left(x_j-\frac{\partial}{\partial x_j}\right)\left(x_j+\frac{\partial}{\partial x_j}\right) =
 x_j^2+x_j\frac{\partial}{\partial x_j}-\frac{\partial}{\partial x_j}x_j-\frac{\partial^2}{\partial x_j^2}\\
& = & -\frac{\partial^2}{\partial x_j^2}+x_j^2-1.
\end{eqnarray*}
So we can rewrite (\ref{PDE2}) as
$\sum_{j=1}^da_j^\star a_j f = 0$,
which implies
\begin{equation*}
\bigg{\langle} f,\sum_{j=1}^da_j^\star a_j f\bigg{\rangle} \,
= \sum_{j=1}^d\langle f,a_j^\star a_j f \rangle\,=\sum_{j=1}^d\langle a_jf, a_j f\rangle\,=\sum_{j=1}^d\|a_j f\|^2=0
\end{equation*}
and thus $a_jf=0$ for each $j \in \{1,\ldots,d\}$. We therefore have $(x+\nabla)f=0$, where $\nabla$ denotes the gradient operator.
By the last statement and the product rule, it follows that
\begin{equation*}
\nabla \left(\exp\left(\frac{\|x\|^2}{2}\right) f\right)=x\exp\left(\frac{\|x\|^2}{2}\right)f-x\exp\left(\frac{\|x\|^2}{2}\right)f=0
\end{equation*}
which, in view of the condition $f(0)=1$, completes the proof.
\end{proof}
\begin{remark}
  The operator $H=-\Delta + \|x\|^2$ is the Hermite operator in $\R^d$, and $\psi$ is the product of the one-dimensional $0^{th}$ Hermite functions.
  Therefore, since $H\psi=d\psi$ (as we have shown in Theorem \ref{unique.sol}), $\psi$ is the eigenfunction associated with the eigenvalue $d$.
  For details on the Hermite operator in $\R^d$ and corresponding eigenfunctions, see p. 5 of \cite{T:1993}.
\end{remark}
In this paper, we study a family of affine invariant test statistics for $H_0$ that is based on the characterization given in Theorem \ref{unique.sol}.
Since the class  $\mathcal{N}_d$ is closed under full rank affine transformations, Theorem \ref{unique.sol} does not restrict the scope of the testing problem.
We make the tacit standing assumption that $\mathbb{P}^X$ is absolutely continuous with respect to Lebesgue measure, and that
 $n \ge d+1$.  Let $\overline{X}_n= n^{-1} \sum_{j=1}^nX_j$ denote the sample mean and $S_n= n^{-1} \sum_{j=1}^n(X_j-\overline{X}_n)(X_j-\overline{X}_n)^\top$
the sample covariance matrix of $X_1,\ldots,X_n$, where $x^\top$ means transposition of a column vector $x$.
The assumptions made above guarantee that $S_n$ is invertible almost surely, see \cite{EP:1973}.
The test statistic will be based on the so-called {\em scaled residuals}
\begin{equation}\label{defscaledr}
Y_{n,j}=S_n^{-1/2}(X_j-\overline{X}_n), \quad j =1,\ldots,n,
\end{equation}
which represent an empirical standardization of the data. Here, $S_n^{-1/2}$ is the unique symmetric
positive definite square root of $S_n^{-1}$. The test statistic will be based on the
empirical characteristic function
\begin{equation*}
\psi_n(t)=\frac1n\sum_{j=1}^n\exp({\rm{i}}t^\top Y_{n,j}),\quad t\in\R^d,
\end{equation*}
of $Y_{n,1},\ldots,Y_{n,n}$.  Notice that an application of the Laplace operator $\Delta$ to $\psi_n$ yields
\begin{equation*}
\Delta\psi_n(t)=-\frac1n\sum_{j=1}^n\|Y_{n,j}\|^2\exp({\rm{i}}t^\top Y_{n,j}),\quad t\in\R^d.
\end{equation*}
Motivated by (\ref{PDE}) and Theorem \ref{unique.sol}, we propose the weighted $L^2$-statistic
\begin{eqnarray*}
T_{n,a}&=&n\int\left|\Delta\psi_n(t)-\Delta \psi(t) \right|^2w_a(t)\, \mbox{d}t\\
&=&n\int \Big{|}\frac1n\sum_{j=1}^n\|Y_{n,j}\|^2\exp({\rm{i}}t^\top Y_{n,j})+(\|t\|^2-d)\exp\left(-\frac{\|t\|^2}{2}\right)\Big{|}^2w_a(t)\, \mbox{d}t,
\end{eqnarray*}
where
\begin{equation}\label{defwa}
w_a(t)= \exp(-a\|t\|^2), \quad t \in \R^d,
\end{equation}
and $a>0$ is a fixed constant. Moreover, $|z|$ is the modulus of a complex number $z$, and integration is, unless otherwise specified, over $\R^d$.
In principle, other weight functions
than  $w_a$ are conceivable in the definition of $T_{n,a}$. Since, for $c \in \R^d$ (see  \cite{HZ:1990}, p. 3601),
\begin{equation} \label{glint1}
\int\cos(t^\top c)\exp(-a\|t\|^2)\mbox{d}t=\left(\frac\pi{a}\right)^\frac{d}{2}\exp\left(-\frac{\|c\|^2}{4a}\right),
\end{equation}
as well as
\begin{eqnarray} \nonumber
& & \int(\|t\|^2\! -\! d)\cos(t^\top c)\exp\left(\! -\left(a\! +\! \frac12\right)\|t\|^2\! \right)\mbox{d}t  \phantom{hhhhhhhhhhhhhhhhhhhhhhhhhhh} \\ \label{glint2}
   &  &  \phantom{hhhhhhhhhhhhhhhh} = \  -\frac{\left(2\pi\right)^{\frac{d}{2}} \left(\|c\|^2+2da(2a+1)\right)}{(2a+1)^{2+\frac{d}{2}}}\exp\left(\! -\frac{\|c\|^2}{2(2a\! +\! 1)}\! \right),
\end{eqnarray}
\begin{equation}\label{glint3}
\int(\|t\|^2-d)^2\exp\left(-(a+1)\|t\|^2\right)\mbox{d}t   =  \frac{\pi^{\frac{d}2}}{(a+1)^{2+\frac{d}{2}}}\left(a(a+1)d^2+\frac{d(d+2)}{4}\right),
\end{equation}
an attractive feature of the choice of $w_a$ is that the test statistic takes the simple form
\begin{eqnarray}\label{tna1}
T_{n,a}&=&\left(\frac\pi{a}\right)^\frac{d}{2}\frac1n\sum_{j,k=1}^n\|Y_{n,j}\|^2\|Y_{n,k}\|^2\exp\left(-\frac1{4a}\|Y_{n,j}-Y_{n,k}\|^2\right)\\ \label{tna2}
&&-\, \frac{2(2\pi)^{\frac{d}2}}{(2a+1)^{2+\frac{d}{2}}}\sum_{j=1}^n\|Y_{n,j}\|^2\left(\|Y_{n,j}\|^2+2da(2a+1)\right)\exp\left(-\frac12\frac{\|Y_{n,j}\|^2}{2a+1}\right)\\ \label{tna3}
&&+\, n\frac{\pi^{\frac{d}2}}{(a+1)^{2+\frac{d}{2}}}\left(a(a+1)d^2+\frac{d(d+2)}{4}\right),
\end{eqnarray}
which is amenable to computational purposes. Moreover, $T_{n,a}$ depends only on the scalar products $Y_{n,i}^\top Y_{n,j} = (X_i-\overline{X}_n)^\top S_n^{-1}(X_j-\overline{X}_n)$,
where $i,j \in \{1,\ldots,n\}$. This shows that $T_{n,a}$ is invariant with respect to full rank affine transformations of $X_1,\ldots,X_n$. Moreover,
not even the square root $S_n^{-1/2}$ of $S_n^{-1}$ is needed.

The rest of the paper is organized as follows: In Section \ref{secinfty}, we show that, as the tuning parameter $a$ tends to infinity,
the test statistic $T_{n,a}$, after a suitable scaling, converges to a  certain measure of multivariate skewness. On the other hand, a time-honored measure of multivariate
kurtosis emerges as $a \to 0$. Section \ref{basicCLT} presents a basic Hilbert space central limit theorem, which proves beneficial for obtaining the
limit distribution of $T_{n,a}$ both under $H_0$ and under fixed alternatives to normality.
In Section \ref{secnull}, we derive the limit null distribution of $T_{n,a}$ as $n \to \infty$. Section \ref{seccontig} considers the behavior of $T_{n,a}$ with respect to contiguous
alternatives to $H_0$. In Section \ref{secfixed}, we show that the test for multivariate normality that rejects $H_0$ for large values of $T_{n,a}$ is consistent against general
alternatives. Moreover, the limit distribution of $T_{n,a}$ under a fixed alternative  distribution is seen to be normal. Since the
variance of this normal distribution can be estimated from the data, there is an asymptotic confidence interval for the  measure of distance from normality
under alternative distributions that is inherent in the procedure. Furthermore, there is the option for a neighborhood-of-model validation procedure.
The results of a simulation study, presented in  Section \ref{secsimul}, show
that the novel test is strong with respect to prominent competitors.
In Section \ref{secdataex}, the new tests are applied to the Iris flower data set due to R.A. Fisher.
The paper concludes with some remarks. For the sake of readability, most of the proofs and some
auxiliary results are deferred to Section \ref{secappendix}. Finally, the following abbreviations, valid for $t,x \in \R^d$, will be used in several sections:
\begin{equation}\label{defcsplusminus}
{\rm CS}^{+}(t,x) := \cos(t^\top x) + \sin(t^\top x), \quad {\rm CS}^{-}(t,x) = \cos(t^\top x) - \sin(t^\top x).
\end{equation}

%
%
%

\section{The limits $a \to \infty$ and $a \to 0$}\label{secinfty}
The results of this section show that the class of tests for multivariate normality based on $T_{n,a}$ is
"closed at the boundaries" $a \to \infty$ and $a \to 0$ and thus  shed some light on the tuning parameter $a$, which figures in the weight function $w_a$ given in (\ref{defwa}).
We first consider the case $a \to \infty$.

\begin{theorem}\label{thmainfty} Elementwise on the underlying probability space, we have
\begin{equation}\label{limmori}\lim_{a\rightarrow\infty} \frac{2a^{d/2+1}}{n\pi^{\frac d2}}\, T_{n,a} = \frac1{n^2}\sum_{j,k=1}^n\|Y_{n,j}\|^2\|Y_{n,k}\|^2Y_{n,j}^\top Y_{n,k}.
\end{equation}
\end{theorem}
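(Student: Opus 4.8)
The plan is to multiply $T_{n,a}$ by the factor $\kappa_a := 2a^{d/2+1}/(n\pi^{d/2})$, treat the three summands (\ref{tna1})--(\ref{tna3}) separately, and expand each in powers of $1/a$. Write $P_1,P_2,P_3$ for $\kappa_a$ times the respective summand. The only structural input I need about the data is the standardization identity $\frac1n\sum_{j=1}^n Y_{n,j}Y_{n,j}^\top = \mathrm{I}_d$, which holds because $S_n^{-1/2}S_nS_n^{-1/2}=\mathrm{I}_d$; taking traces yields $\sum_{j=1}^n\|Y_{n,j}\|^2 = nd$. Since, for each fixed $\omega$, the vectors $Y_{n,1},\ldots,Y_{n,n}$ are deterministic and all sums are finite, every remainder is controlled trivially and the claim reduces to a pointwise real-analytic limit in $a$, which is exactly what ``elementwise'' requires.

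For $P_1$ the scaling collapses to $P_1 = \frac{2a}{n^2}\sum_{j,k}\|Y_{n,j}\|^2\|Y_{n,k}\|^2\exp(-\|Y_{n,j}-Y_{n,k}\|^2/(4a))$. Expanding the exponential as $1 - \frac{1}{4a}\|Y_{n,j}-Y_{n,k}\|^2 + O(a^{-2})$, using $\|Y_{n,j}-Y_{n,k}\|^2 = \|Y_{n,j}\|^2 - 2Y_{n,j}^\top Y_{n,k} + \|Y_{n,k}\|^2$, and invoking $\sum_j\|Y_{n,j}\|^2 = nd$, I expect $P_1 = 2ad^2 - \frac{d}{n}\sum_j\|Y_{n,j}\|^4 + \frac1{n^2}\sum_{j,k}\|Y_{n,j}\|^2\|Y_{n,k}\|^2 Y_{n,j}^\top Y_{n,k} + O(a^{-1})$. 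The final term is precisely the asserted limit, so what remains is to verify that the divergent $O(a)$ part and the spurious constant and fourth-moment parts are cancelled by $P_2$ and $P_3$.

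For $P_2$ and $P_3$ I would expand the rational prefactors $(2a+1)^{-(2+d/2)}$ and $(a+1)^{-(2+d/2)}$, and in $P_2$ the exponential $\exp(-\|Y_{n,j}\|^2/(2(2a+1)))$, as power series in $1/a$. The delicate point in $P_2$ is that the bracket $\|Y_{n,j}\|^2 + 2da(2a+1)$ grows like $a^2$, so the exponential must be carried to second order to capture everything surviving at $O(1)$; one then finds $P_2 = -4ad^2 + (2d^2 + d^3) + \frac{d}{n}\sum_j\|Y_{n,j}\|^4 + O(a^{-1})$, where the fourth-moment term arises exactly from the $-\frac{1}{2(2a+1)}\|Y_{n,j}\|^2$ correction multiplying the $2da(2a+1)$ factor (note $2da(2a+1)\cdot\frac{1}{2(2a+1)} = da$). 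The purely deterministic $P_3$ expands to $P_3 = 2ad^2 - (2d^2 + d^3) + O(a^{-1})$.

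Adding the three expansions, the $O(a)$ terms cancel as $(2-4+2)ad^2 = 0$, the constants cancel as $(2d^2+d^3)-(2d^2+d^3)=0$, and the two fourth-moment contributions $-\frac{d}{n}\sum_j\|Y_{n,j}\|^4$ and $+\frac{d}{n}\sum_j\|Y_{n,j}\|^4$ cancel, leaving exactly $\frac1{n^2}\sum_{j,k}\|Y_{n,j}\|^2\|Y_{n,k}\|^2 Y_{n,j}^\top Y_{n,k}$ in the limit. The main obstacle is the bookkeeping in $P_2$: one must match the order of the Taylor expansion of the exponential to the $a^2$-growth of the bracket and track which corrections feed into the $O(1)$ coefficient, since a single dropped term would destroy the cancellation. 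Everything else is routine algebra resting solely on the identity $\sum_j\|Y_{n,j}\|^2 = nd$.
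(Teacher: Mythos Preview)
Your proposal is correct and follows essentially the same approach as the paper's own proof: the paper likewise splits $\kappa_a T_{n,a}$ into three pieces (called $U_{n,1}-U_{n,2}+U_{n,3}$ there), uses $\sum_j\|Y_{n,j}\|^2=nd$, expands each exponential and rational prefactor in powers of $1/a$, and obtains exactly the same three asymptotic expressions you record for $P_1$, $P_2=-U_{n,2}$, and $P_3$. The only minor quibble is terminological: in $P_2$ one actually only needs the \emph{first}-order Taylor term of the exponential (the $-\|Y_{n,j}\|^2/(2(2a+1))$ correction), since the quadratic term contributes $O(a^{-1})$ after multiplication by the bracket and prefactor; your computation is nonetheless correct.
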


\begin{proof} For short, put $Y_j := Y_{n,j}$. From the representation of $T_{n,a}$ we have
\begin{eqnarray*}
\frac{2a^{d/2+1}T_{n,a}}{n\pi^{d/2}} & = & \frac{2a}{n^2} \sum_{j,k =1}^n \|Y_j\|^2\|Y_k\|^2\exp\left(-\frac1{4a}\|Y_{n,j}-Y_k\|^2\right)\\
& & - \left(\frac{2a}{2a\! +\! 1}\right)^{d/2+1} \! \!  \frac{2}{n(2a\! +\! 1)}  \sum_{j=1}^n\|Y_{j}\|^2\left(\|Y_{j}\|^2+2da(2a\! +\! 1)\right)\exp\left(-\frac12\frac{\|Y_{j}\|^2}{2a\! +\! 1}\right)\\
& & + \left(\frac{a}{a+1}\right)^{d/2+1} \frac{2}{a+1} \left(a(a+1)d^2+\frac{d(d+2)}{4}\right)\\
& =: & U_{n,1} - U_{n,2} + U_{n,3}
\end{eqnarray*}
(say). Since $\sum_{j=1}^n \|Y_j\|^2 = nd$, an expansion of the exponential function yields
\[
U_{n,1} = 2ad^2 - \frac{d}{n} \sum_{j=1}^n \|Y_j\|^4 + \frac{1}{n^2} \sum_{j,k=1}^n \|Y_j\|^2 \|Y_k\|^2 Y_j^\top Y_k + o(1)
\]
as $a \to \infty$. To tackle $U_{n,2}$, we use
\[
\left(\frac{2a}{2a\! +\! 1}\right)^{d/2+1} = \left( 1+ \frac{1}{2a}\right)^{-(d/2+1)} = 1 - \left(\frac{d}{2}+1\right) \frac{1}{2a} + O(a^{-2})
\]
and, after some algebra, obtain
\[
U_{n,2} = 4ad^2 - \left(\frac{d}{2}+1\right)2d^2 - \frac{d}{n} \sum_{j=1}^n \|Y_j\|^4 + o(1).
\]
Finally, a binomial expansion of $(a/(a+1))^{d/2+1}$ yields
\[
U_{n,3} = 2ad^2 - \left(\frac{d}{2}+1\right) 2d^2 + o(1).
\]
Summing up, the assertion follows.
\end{proof}

\begin{remark}
The limit
\[
\widetilde{b}_{1,d} :=  \frac1{n^2}\sum_{j,k=1}^n\|Y_{n,j}\|^2\|Y_{n,k}\|^2Y_{n,j}^\top Y_{n,k}
\]
(say), which figures on the right-hand side of (\ref{limmori}),
is a measure of multivariate (sample) skewness,  introduced by  {M\'ori}, {Rohatgi} and {Sz\'ekely}, see \cite{MRS:1993}.
A much older time-honored measure of multivariate (sample) skewness is skewness in the sense of Mardia (see \cite{MAR:1970}), which
is given by
\[
b_{1,d} := \frac{1}{n^2} \sum_{j,k=1}^n \left(Y_{n,j}^\top Y_{n,k}\right)^3.
\]
It is interesting to compare Theorem \ref{thmainfty} with similar results found in connection with other weighted $L^2$-statistics
that have been studied for testing $H_0$. Thus, by Theorem 2.1 of \cite{HN:1997}, the time-honored
class of BHEP-statistics for testing for multivariate normality (see \cite{HW:1997}), after suitable rescaling, approaches the linear combination $2b_{1,d} + 3 \widetilde{b}_{1,d}$,
as a smoothing parameter (called $\beta$ in that paper) tends to $0$. Since $\beta$ and $a$ are related by $\beta = a^{-1/2}$, this corresponds to letting $a$ tend to infinity.
The same linear combination $2b_{1,d} + 3 \widetilde{b}_{1,d}$ also showed up as a limit statistic in \cite{HJG:2019} and \cite{HJM:2019}.
Notice that, in the univariate case, the
limit statistic $\widetilde{b}_{1,d}$ figuring in Theorem \ref{thmainfty} is nothing but three times squared sample skewness.
We stress that tests for multivariate normality based on $b_{1,d}$ or $\widetilde{b}_{1,d}$ or on related measures of multivariate skewness and kurtosis
lack consistency against general alternatives, see, e.g., \cite{BH:1991,BH:1992,HEN:1994b,HEN:1994a,H:1997}.
\end{remark}

We now consider the case $a \to 0$.  Since, elementwise on the underlying probability space, the expressions in (\ref{tna2}) and (\ref{tna3})
have finite limits as $a \to 0$, and since the double sum figuring in (\ref{tna1}) converges to $\sum_{j=1}^n \|Y_{n,j}\|^4$ as $a \to 0$,
we have the following result.

\begin{theorem}\label{thmazero} Elementwise on the underlying probability space, we have
\begin{equation}\label{limkurt}\lim_{a\rightarrow 0} \left(\frac{a}{\pi}\right)^{d/2} T_{n,a} =  \frac1n  \sum_{j=1}^n \|Y_{n,j}\|^4.
\end{equation}
\end{theorem}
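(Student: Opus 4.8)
The plan is to work directly from the closed-form representation (\ref{tna1})--(\ref{tna3}) of $T_{n,a}$, multiply through by the scaling factor $(a/\pi)^{d/2}$, and examine the three resulting summands separately as $a\to 0$. Since the expression is a finite sum of continuous functions of $a>0$, the limit may be taken termwise, so it suffices to identify the limit of each of the three contributions.

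First I would treat the contribution of (\ref{tna1}). Here the scaling factor $(a/\pi)^{d/2}$ exactly cancels the prefactor $(\pi/a)^{d/2}$, leaving
\[
\frac1n\sum_{j,k=1}^n\|Y_{n,j}\|^2\|Y_{n,k}\|^2\exp\left(-\frac{1}{4a}\|Y_{n,j}-Y_{n,k}\|^2\right).
\]
As $a\to 0$ the coefficient $1/(4a)$ tends to $+\infty$, so each off-diagonal exponential (with $j\ne k$) tends to $0$, whereas each diagonal term (with $j=k$) equals $1$. Only the diagonal therefore survives, and this summand converges to $\frac1n\sum_{j=1}^n\|Y_{n,j}\|^4$, which is precisely the claimed right-hand side of (\ref{limkurt}).

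Next I would dispose of the contributions of (\ref{tna2}) and (\ref{tna3}). In both of these, the rational functions of $a$ and the exponential factors are continuous at $a=0$ and hence have finite limits; crucially, neither carries a compensating factor $a^{-d/2}$. Multiplying by $(a/\pi)^{d/2}=a^{d/2}\pi^{-d/2}$ thus introduces a prefactor $a^{d/2}\to 0$ (recall $d\ge 1$), so both terms vanish in the limit. Summing the three limits then yields (\ref{limkurt}).

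The one step that is not purely mechanical is the vanishing of the off-diagonal exponentials, which presupposes $Y_{n,j}\ne Y_{n,k}$ for $j\ne k$. Here I would invoke the standing assumptions: since $\mathbb{P}^X$ is absolutely continuous, the points $X_1,\ldots,X_n$ are pairwise distinct with probability one, and because $S_n^{-1/2}$ is almost surely invertible, so are the scaled residuals $Y_{n,1},\ldots,Y_{n,n}$. The assertion therefore holds elementwise on the full-measure event where this distinctness is valid; apart from this observation, the argument consists only of elementary limits of continuous functions of $a$, so I do not anticipate any real difficulty.
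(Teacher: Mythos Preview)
Your proposal is correct and follows essentially the same route as the paper: use the closed form (\ref{tna1})--(\ref{tna3}), observe that the prefactor in (\ref{tna1}) is cancelled by $(a/\pi)^{d/2}$ so that only the diagonal of the double sum survives, and note that (\ref{tna2}) and (\ref{tna3}) have finite limits as $a\to 0$ and are therefore killed by the factor $a^{d/2}$. Your explicit justification that the scaled residuals are almost surely pairwise distinct (so the off-diagonal exponentials indeed vanish) is a detail the paper leaves implicit.
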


\begin{remark}
The limit statistic on the right-hand side of (\ref{limkurt}) is Mardia's celebrated measure
$b_{2,d}$ of multivariate sample kurtosis, see \cite{MAR:1970}. Together with Theorem \ref{thmainfty}, this result shows that,
just like the class of BHEP tests for multivariate normality (see \cite{HN:1997}), also the class of tests based on $T_{n,a}$ is "closed at the boundaries"
$a \to \infty $ and $a \to 0$. Notably, Mardia's measure of kurtosis shows up for the first time in connection with limits of
weighted $L^2$-statistics for testing for multivariate normality. The corresponding limit statistic for the class of BHEP tests is, up to a linear transformation,
$n^{-1}\sum_{j=1}^n \exp(-\|Y_{n,j}\|^2/2)$, see Theorem 3.1 of       \cite{HN:1997}.
\end{remark}

%
%
%

\section{A basic Hilbert space central limit theorem}\label{basicCLT}
In this chapter, we present a basic Hilbert space central limit theorem. This theorem implies the limit distribution of  $T_{n,a}$ under the null hypothesis  (\ref{H0}),
but it is also beneficial for proving a limit normal distribution of $T_{n,a}$ under fixed alternatives to $H_0$. Throughout this section,
we assume that the underlying distribution satisfies $\E \|X\|^4 < \infty$. Moreover, we let $\E(X) =0$ and $\E(X X^\top) = {\rm I}_d$ in view of affine invariance.
To motivate the benefit of a Hilbert space setting and for later purposes, it will be convenient to represent $T_{n,a}$ in a different way.

\begin{prop}
Recall $\psi(t)$ from (\ref{defpsit}), and
let
\begin{equation}\label{defbt}
m(t) := \left(d-\|t\|^2\right) \psi(t), \quad t \in \R^d,
\end{equation}
\begin{equation}\label{defznt}
Z_n(t) := \frac{1}{\sqrt{n}} \sum_{j=1}^n \Big{\{} \|Y_{n,j}\|^2 \big{(}\cos (t^\top Y_{n,j}) + \sin(t^\top Y_{n,j})\big{)} - m(t) \Big{\}}, \quad t \in \R^d.
\end{equation}
We then have
\begin{equation}\label{glintqtna}
T_{n,a} = \int Z_n^2(t) \, w_a(t) \, {\rm d} t.
\end{equation}
\end{prop}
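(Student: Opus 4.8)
The plan is to work directly from the integral definition $T_{n,a} = n\int |\Delta\psi_n(t) - \Delta\psi(t)|^2 w_a(t)\,\mathrm{d}t$ and to recast its integrand as $Z_n^2(t)\,w_a(t)$ by exploiting the reflection symmetry $t \mapsto -t$ of the weight $w_a$. First I would invoke Theorem \ref{unique.sol}, equivalently the PDE (\ref{PDE}), to record that $\Delta\psi(t) = (\|t\|^2 - d)\psi(t) = -m(t)$ with $m$ as in (\ref{defbt}); in particular $\Delta\psi$ is real-valued and even in $t$. Writing $\exp(\mathrm{i}t^\top Y_{n,j}) = \cos(t^\top Y_{n,j}) + \mathrm{i}\sin(t^\top Y_{n,j})$ in the expression for $\Delta\psi_n$, I would split $\Delta\psi_n(t) - \Delta\psi(t)$ into its real part $-\frac1n\sum_{j}\|Y_{n,j}\|^2\cos(t^\top Y_{n,j}) + m(t)$ and its imaginary part $-\frac1n\sum_{j}\|Y_{n,j}\|^2\sin(t^\top Y_{n,j})$, so that $|\Delta\psi_n(t)-\Delta\psi(t)|^2$ is the sum of the squares of these two real quantities.

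Expanding both this sum of squares (times $n$) and $Z_n^2(t)$ from (\ref{defznt}), each yields three kinds of terms: a quadratic double sum over the data, a linear term coupling the data to $m(t)$, and the pure $m(t)^2$ term. The $m^2$ terms agree at once. For the rest, the decisive observation is that $w_a(t) = \exp(-a\|t\|^2)$, $\cos(t^\top Y_{n,j})$, and $m(t)$ are even in $t$, whereas $\sin(t^\top Y_{n,j})$ is odd; hence any integrand that is odd in $t$ integrates to zero. In the quadratic term of $Z_n^2$, the product $\bigl(\cos(t^\top Y_{n,j}) + \sin(t^\top Y_{n,j})\bigr)\bigl(\cos(t^\top Y_{n,k}) + \sin(t^\top Y_{n,k})\bigr)$ expands into $\cos(t^\top Y_{n,j})\cos(t^\top Y_{n,k}) + \sin(t^\top Y_{n,j})\sin(t^\top Y_{n,k})$ plus the two mixed cosine–sine products; the latter are odd in $t$ and drop out under the integral, leaving exactly the double sum produced by $(\mathrm{Re})^2 + (\mathrm{Im})^2$. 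Likewise, in the linear term the sine contribution is odd and vanishes against the even factor $m(t)w_a(t)$, so only the cosine part survives, matching the linear term of $T_{n,a}$.

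Assembling these term-by-term identities gives $\int Z_n^2(t)\,w_a(t)\,\mathrm{d}t = T_{n,a}$, which is (\ref{glintqtna}). There is no serious analytic obstacle: for fixed data $Z_n$ is a bounded function of $t$ (a finite trigonometric sum together with the rapidly decaying $m$) and $w_a$ is integrable, so splitting the integrals and discarding odd parts is fully justified. The one point genuinely requiring care — and really the crux of the argument — is the symmetry bookkeeping: one must recognize that the apparently different ``$\cos + \sin$'' structure of $Z_n$ reproduces the modulus squared $|\cdot|^2 = \cos^2 + \sin^2$ precisely because the odd cross terms $\cos\cdot\sin$ are annihilated by integration against the reflection-symmetric weight. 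This is the standard device that converts the complex-valued discrepancy $\Delta\psi_n - \Delta\psi$ into a single real-valued random field whose square integrates to the statistic, which is exactly the form needed for the Hilbert space central limit theorem to follow.
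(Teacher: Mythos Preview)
Your argument is correct and is essentially the same as the paper's: both exploit that the odd-in-$t$ contributions vanish when integrated against the even weight $w_a$, so the $(\cos+\sin)$ form of $Z_n$ reproduces the modulus squared. The only cosmetic difference is that the paper packages the quadratic cross terms via the addition formula $(\cos a+\sin a)(\cos b+\sin b)=\cos(a-b)+\sin(a+b)$ and then discards $\sin(a+b)$, whereas you expand into the four products $\cos\cdot\cos$, $\sin\cdot\sin$, $\cos\cdot\sin$, $\sin\cdot\cos$ and discard the last two directly; likewise the paper singles out the vanishing of $\int \sin(t^\top y)\,m(t)\,w_a(t)\,{\rm d}t$ as the key identity for the linear term, which is exactly your parity observation for that piece.
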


\begin{proof}
The proof follows by straightforward algebra using the addition theorems for the sine function and the cosine function and the fact that
$\int \sin(t^\top y) m(t) w_a(t) \, \textrm{d} t=0$, $y \in \R^d$.
\end{proof}

A convenient setting for asymptotics will be the separable Hilbert space $\HH$ of (equivalence classes of) measurable functions $f:\R^d \rightarrow \R$
satisfying $\int f^2(t) w_a(t) \, {\rm d}t < \infty$. The scalar product and the  norm in $\HH$ will be denoted by
\[
\langle f,g \rangle_{\HH} = \int f(t)g(t) \, w_a(t) \, {\rm d}t, \quad \|f\|_\HH = \langle f,f \rangle_\HH^{1/2}, \quad f,g \in \HH,
\]
respectively. Notice that
\[
T_{n,a} = \|Z_n\|_{\HH}^2.
\]
Recalling ${\rm CS}^{\pm}(t,x)$ from (\ref{defcsplusminus}), and putting
\begin{equation}\label{defmuoft}
\mu(t) := \E \big{[} \|X\|^2 {\rm CS}^{+}(t,X) \big{]}, \quad t \in \R^d,
\end{equation}
the main object of this section is the random element $V_n$ of $\HH$, defined by
\begin{equation}\label{defvnt}
V_n(t) := \frac{1}{\sqrt{n}} \sum_{j=1}^n \left( \|Y_{n,j}\|^2 {\rm CS}^{+}(t,Y_{n,j}) - \mu(t) \right), \quad t \in \R^d.
\end{equation}
Observe that $V_n = Z_n$ if the distribution of $X$ is N$_d(0,{\rm I}_d)$, since then the functions $\mu$ and $m$ coincide.
We will show that, as $n \to \infty$, $V_n$ converges in distribution to a centred Gaussian random element $V$ of $\HH$.
The only technical problem in proving such a result is the fact that $V_n$ is based on the scaled residuals $Y_{n,1},\ldots,Y_{n,n}$ and not
on $X_1,\ldots,X_n$. If $V_n^0(t)$ denotes the modification of $V_n(t)$ that results from replacing $Y_{n,j}$ with $X_j$, a
Hilbert space central limit theorem holds for $V_n^0$, since the summands comprising $V_n^0(t)$ are i.i.d. square-integrable centred
random elements of $\HH$. Writing $\vertk$ for convergence in distribution of random elements of $\HH$ and random variables, the basic idea
to prove $V_n \vertk V$ is to find a random element $\widetilde{V}_n$ of $\HH$, such that $\widetilde{V}_n \vertk V$ and
$\widetilde{V}_n - V_n = o_\PP(1)$ as $n \to \infty$. In what follows, the stochastic Landau symbol $o_\PP(1)$ refers to
convergence to zero in probability in $\HH$, i.e., we have to show
\begin{equation}\label{vtildvapprox}
\|\widetilde{V}_n - V_n\|^2_{\HH} = \int \left( \widetilde{V}_n(t) - V_n(t)\right)^2 w_a(t) \, {\rm d}t  = o_\PP(1) \quad \text{as } n \to \infty.
\end{equation}
To state the main result of this section, let $\psi_X(t) = \E [ \exp({\rm i}t^\top X)]$, $t \in \R^d$, denote the characteristic function of $X$, and put
\[
\psi_X^{+}(t) := {\rm Re}\, \psi_X(t)  + {\rm Im} \, \psi_X(t), \quad \psi_X^{-}(t) := {\rm Re} \, \psi_X(t)  - {\rm Im} \, \psi_X(t),
\]
where Re$\, w$ and Im$\, w$ stand for the real and the imaginary part of a complex number $w$, respectively.
For a twice continuously differentiable function $f: \R^d \rightarrow \R$, let H$f(t)$ denote the Hessian matrix
of $f$, evaluated at $t$. Furthermore, recall the gradient operator $\nabla$ and the Laplace operator $\Delta$ from
Section \ref{sec:Intro}.

\begin{prop}\label{propapproxi}
Let
\begin{equation}\label{gldefvntild}
\widetilde{V}_n(t) := \frac{1}{\sqrt{n}} \sum_{j=1}^n v(t,X_j), \quad t \in \R^d,
\end{equation}
where
\begin{eqnarray}\label{reprsumvj}
v(t,x) & = & v_1(t,x) + v_2(t,x) + v_3(t,x) + v_4(t,x),\\ \label{v1v2}
v_1(t,x) & = & \|x\|^2 {\rm CS}^{+}(t,x), \ v_2(t,x) = \frac12 t^\top \left(xx^\top - {\rm I}_d \right) \nabla \Delta \psi_X^{+}(t),\\ \label{v3v4}
v_3(t,x) & = & \left(2 \nabla \psi_X^{-}(t) + \Delta \psi_X^{-}(t)t \right)^\top x, \ v_4(t,x) = x^\top {\rm H} \psi_X^{+}(t)x.
\end{eqnarray}
We then have (\ref{vtildvapprox}).
\end{prop}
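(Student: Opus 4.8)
The plan is to realize $\widetilde V_n - V_n$ as the difference between the \emph{exact} effect of the standardization and its \emph{first-order linearization}, and then to show that this difference vanishes in $\HH$-norm. Write $g(t,x) := \|x\|^2\,\mathrm{CS}^{+}(t,x)$, so that $v_1(t,x)=g(t,x)$ and $V_n(t) = n^{-1/2}\sum_j g(t,Y_{n,j}) - \sqrt n\,\mu(t)$. Using $\E[v_1(t,X)]=\mu(t)$ from (\ref{defmuoft}) and $\E[(v_2+v_3+v_4)(t,X)] = \Delta\psi_X^{+}(t) = -\mu(t)$, one checks that $v(t,\cdot)$ is centred, so $\widetilde V_n$ is a normalized sum of i.i.d.\ centred elements of $\HH$. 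A short rearrangement then gives
\[
\widetilde V_n(t) - V_n(t) = -\Big( \tfrac{1}{\sqrt n}\sum_{j=1}^n \big[ g(t,Y_{n,j}) - g(t,X_j) \big] - \tfrac{1}{\sqrt n}\sum_{j=1}^n w(t,X_j) \Big),
\]
where $w(t,x) := (v_2+v_3+v_4)(t,x) + \mu(t)$; the constant $\mu(t)$ is precisely what absorbs the otherwise divergent centring term $\sqrt n\,\mu(t)$. It therefore suffices to prove that the \emph{standardization increment} $n^{-1/2}\sum_j[g(t,Y_{n,j})-g(t,X_j)]$ agrees with $n^{-1/2}\sum_j w(t,X_j)$ up to $o_\PP(1)$ in $\HH$.

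The second step is a first-order Taylor expansion in the estimated parameters. I would use $\overline X_n \to 0$ and $S_n\to\mathrm{I}_d$ almost surely together with the rates $\overline X_n = O_\PP(n^{-1/2})$, $S_n-\mathrm{I}_d = O_\PP(n^{-1/2})$ and the matrix expansion $S_n^{-1/2}-\mathrm{I}_d = -\tfrac12(S_n-\mathrm{I}_d)+o_\PP(n^{-1/2})$. Writing $Y_{n,j}-X_j = (S_n^{-1/2}-\mathrm{I}_d)X_j - S_n^{-1/2}\overline X_n$, a Taylor expansion of $g(t,\cdot)$ in its second argument yields
\[
g(t,Y_{n,j}) - g(t,X_j) = \nabla_x g(t,X_j)^\top (Y_{n,j}-X_j) + R_{n,j}(t),
\]
with a quadratic remainder $R_{n,j}(t)$. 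Summing, pulling the data-independent perturbations out of the sums, and invoking the law of large numbers replaces the empirical averages $n^{-1}\sum_j X_j\nabla_x g(t,X_j)^\top$ and $n^{-1}\sum_j \nabla_x g(t,X_j)$ by the expectations $\E[X\nabla_x g(t,X)^\top]$ and $\E[\nabla_x g(t,X)]$. Substituting the leading expansions $\sqrt n\,\overline X_n = n^{-1/2}\sum_k X_k$ and $\sqrt n(S_n^{-1/2}-\mathrm{I}_d) = -\tfrac12 n^{-1/2}\sum_k(X_kX_k^\top-\mathrm{I}_d)+o_\PP(1)$ turns the linear part into a single i.i.d.\ sum $n^{-1/2}\sum_k w(t,X_k)$.

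The third step is the bookkeeping that matches this limit with $w$. Differentiating under the expectation and using $\E[\cos(t^\top X)]=\mathrm{Re}\,\psi_X(t)$, $\E[\sin(t^\top X)]=\mathrm{Im}\,\psi_X(t)$, one finds $\E[\nabla_x g(t,X)] = -\big(2\nabla\psi_X^{-}(t)+\Delta\psi_X^{-}(t)t\big)$ and $\E[X\nabla_x g(t,X)^\top] = -2\,\mathrm{H}\psi_X^{+}(t) - \nabla\Delta\psi_X^{+}(t)\,t^\top$. Carrying the first through the centring correction reproduces $v_3$, while the trace computation on the second produces $v_2$ and $v_4$ (with the leftover $-\Delta\psi_X^{+}=\mu$ supplying the constant in $w$). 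This is a direct, if slightly lengthy, calculation and is where the specific forms of $v_2,v_3,v_4$ are pinned down.

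The fourth step, which I expect to be the main obstacle, is to upgrade all of the above from pointwise-in-$t$ statements to convergence in the $\HH$-norm, i.e.\ after integrating the squared errors against $w_a(t)\,\mathrm dt$. Each error term---the Taylor remainders $R_{n,j}$, the gap between the empirical averages and their expectations, and the $o_\PP(n^{-1/2})$ terms in the matrix expansion---must be shown to have $\HH$-norm $o_\PP(1)$. Here the $x$-derivatives of $g$ grow only polynomially in $\|x\|$ and $\|t\|$ (powers of $\|x\|$ from the factor $\|x\|^2$ and its derivatives, powers of $\|t\|$ from differentiating the trigonometric part), so the Gaussian weight $w_a(t)=\exp(-a\|t\|^2)$ secures integrability in $t$ while the standing assumption $\E\|X\|^4<\infty$ controls the $x$-moments after a Cauchy--Schwarz step inside the integral. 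The delicate point is to combine the $O_\PP(n^{-1/2})$ rates of $\overline X_n$ and $S_n-\mathrm{I}_d$ with these moment and weight bounds so that each remainder, once squared and integrated, is genuinely $o_\PP(1)$ rather than merely $O_\PP(1)$.
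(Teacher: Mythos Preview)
Your proposal is correct and follows essentially the same route as the paper's proof: a first-order Taylor linearization of the standardization $X_j\mapsto Y_{n,j}$, the expansion $\sqrt n(S_n^{-1/2}-\mathrm I_d)=-\tfrac12 n^{-1/2}\sum_k(X_kX_k^\top-\mathrm I_d)+o_\PP(1)$, and replacement of empirical averages by their expectations to identify the summands $v_2,v_3,v_4$. The only organizational difference is that you Taylor-expand $g(t,x)=\|x\|^2\mathrm{CS}^+(t,x)$ as one function of $x$, whereas the paper expands the two factors $\|Y_j\|^2$ and $\mathrm{CS}^+(t,Y_j)$ separately and then distributes, obtaining an explicit six-term decomposition $V_n=\sum_{k=1}^6 V_{n,k}$; this is cosmetic.

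One point worth sharpening: in your fourth step you invoke ``a Cauchy--Schwarz step'' together with $\E\|X\|^4<\infty$ to control the quadratic remainders in $\HH$-norm, but a naive Cauchy--Schwarz bound on $n^{-1/2}\sum_j\|X_j\|^2\|\Delta_{n,j}\|^2$ is not quite enough by itself. The paper makes this precise via two auxiliary facts (its Proposition~\ref{propossumselta}): $\sum_{j=1}^n\|\Delta_{n,j}\|^2=O_\PP(1)$, and the extreme-value bound $\max_j\|X_j\|=o_\PP(n^{1/4})$ (equivalently $\max_j\|\Delta_{n,j}\|=o_\PP(n^{-1/4})$), the latter from $\E\|X\|^4<\infty$. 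With these in hand, each remainder is dominated by $\|t\|^m\cdot o_\PP(1)$ uniformly in $t$ for some fixed $m$, and the Gaussian weight $w_a$ then yields the $\HH$-norm convergence. Your sketch is right that this is where the work lies; just be explicit about these two ingredients.
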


The proof of Proposition \ref{propapproxi} is given in Section \ref{secappendix}.\\

Since $\E (X) =0$, $\E (XX^\top) = {\rm I}_d$ and $\E[\|X\|^2 {\rm CS}^{+}(t,X)] = - \Delta \psi_X^{+}(t)$, we have (writing tr for trace)
$\E v(t,X) =  - \Delta \psi_X^{+}(t) + {\rm tr} (H \psi_X^{+}(t))  = 0$.
Thus, $v(\cdot,X_1), \ldots, v(\cdot,X_n)$ are i.i.d. centred square-integrable random elements of $\HH$, and the central limit theorem in
Hilbert spaces gives $\widetilde{V}_n \vertk V$ for some centred  Gaussian element $V$ of $\HH$.  In view of (\ref{vtildvapprox}) and Slutsky's lemma,
we therefore can state the main result of this section.

\begin{theorem}\label{themmainclt}
Let $X,X_1,X_2, \ldots $ be i.i.d. random vectors satisfying $\E \|X\|^4 < \infty$, $\E (X) =0$ and $\E(XX^\top ) = {\rm I}_d$.
For the sequence of random elements $V_n$ defined in (\ref{defvnt}) we have
\[
V_n \vertk V \quad \text{as } n \to \infty,
\]
where $V$ is a centred Gaussian element of $\HH$ having  covariance kernel
\begin{equation}\label{defkernl}
L(s,t) = \E [ v(s,X)v(t,X)], \quad s,t \in \R^d,
\end{equation}
where $v(t,x)$ is given in (\ref{reprsumvj}).
\end{theorem}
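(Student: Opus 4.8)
The plan is to reduce the statement, via the approximation already established in Proposition \ref{propapproxi}, to a classical central limit theorem for a normalized sum of i.i.d.\ random elements of the separable Hilbert space $\HH$, and then to transfer the resulting convergence back to $V_n$ by Slutsky's lemma. The summands to work with are the $v(\cdot,X_j)$ defining $\widetilde{V}_n$ in (\ref{gldefvntild}); the centring $\E\, v(\cdot,X)=0$ has already been verified above, so the one substantive preliminary is to check that $v(\cdot,X)$ is a square-integrable element of $\HH$, that is, $\E\,\|v(\cdot,X)\|_\HH^2<\infty$.

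First I would establish this second-moment bound. Using $\|v\|_\HH^2\le 4\sum_{i=1}^4\|v_i\|_\HH^2$, it suffices to control each piece separately. For $v_1(t,x)=\|x\|^2\,{\rm CS}^{+}(t,x)$ the trigonometric factor is bounded, so $\E\,\|v_1(\cdot,X)\|_\HH^2\le 2\,\E\|X\|^4\int w_a(t)\,{\rm d}t$, which is finite because $\E\|X\|^4<\infty$ and the Gaussian weight is integrable. For $v_2,v_3,v_4$ the key remark is that the fourth-moment assumption makes $\psi_X$ four times continuously differentiable with bounded derivatives, so $\nabla\psi_X^{\pm}$, $\Delta\psi_X^{\pm}$, $\nabla\Delta\psi_X^{+}$ and ${\rm H}\psi_X^{+}$ are all bounded functions of $t$. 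Each $v_i$ is then a product of such a bounded factor, a polynomial factor in $t$ of degree at most two, and a factor in $x$ whose square has finite expectation (again by $\E\|X\|^4<\infty$); since $w_a(t)=\exp(-a\|t\|^2)$ dominates any polynomial growth in $t$, each $\E\,\|v_i(\cdot,X)\|_\HH^2$ is finite.

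With centring and square integrability in place, the $v(\cdot,X_1),\ldots,v(\cdot,X_n)$ are i.i.d.\ centred square-integrable random elements of $\HH$, and the central limit theorem in Hilbert spaces yields $\widetilde{V}_n\vertk V$, where $V$ is a centred Gaussian element of $\HH$ whose covariance operator is that of $v(\cdot,X)$; its kernel is precisely $L(s,t)=\E[v(s,X)v(t,X)]$. It then remains to combine this with (\ref{vtildvapprox}) of Proposition \ref{propapproxi}, which gives $\|\widetilde{V}_n-V_n\|_\HH=o_\PP(1)$. Since convergence in distribution is preserved under asymptotically negligible perturbations, Slutsky's lemma for random elements delivers $V_n\vertk V$, as asserted.

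I expect the only real obstacle to be the second-moment bound for $v_2,v_3,v_4$: one must confirm that the differentiated characteristic functions are genuinely bounded (a consequence of $\E\|X\|^4<\infty$) and keep careful track of the polynomial-in-$t$ prefactors so that integrability against $w_a$ is manifest. Once that estimate is secured, the abstract CLT, the identification of the covariance kernel, and the Slutsky step are all routine.
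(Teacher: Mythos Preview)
Your proposal is correct and follows essentially the same route as the paper: verify that the summands $v(\cdot,X_j)$ of $\widetilde{V}_n$ are centred square-integrable random elements of $\HH$, invoke the Hilbert space CLT, and then transfer the convergence to $V_n$ via the approximation (\ref{vtildvapprox}) and Slutsky's lemma. In fact you supply more detail than the paper does, which simply asserts square-integrability without the piecewise estimates you give for $v_1,\ldots,v_4$.
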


%
%
%

\section{The limit null distribution of $T_{n,a}$}\label{secnull}

In this section we derive the limit distribution of $T_{n,a}$ under the null hypothesis  (\ref{H0}). In view of affine invariance, we assume
that $X$ has a $d$-variate standard normal distribution. Since the process $Z_n(t)$ given in (\ref{defznt}) is nothing  but $V_n$, as  defined in (\ref{defvnt}),
in this special case, we have the following result.

\begin{theorem}\label{thmsec3main} Suppose that $X$ has some non-degenerate normal distribution. Putting
\[
d_2 := d+2, \quad d_4 := d+4,
\]
we have the following:
\begin{enumerate}
\item[a)] There is a centred Gaussian random element $Z$ of $\HH$ with covariance kernel
\begin{eqnarray*}
K(s,t)&=&\psi(s\! -\! t)\Big{(}\left(\|s\! -\! t\|^2\! -\! d_2\right)^2 \! -\!  2d_2\Big{)} +\psi(s)\psi(t)\Big{\{}-\frac{(s^\top t)^2}{2}(\|s\|^2\! -\! d_4)(\|t\|^2\! -\! d_4)\\
&& +2d_2\left(\|s\|^2\! +\! \|t\|^2\right) \! -\!  \|s\|^4 \! -\!  \|t\|^4\! -\! \|s\|^2\|t\|^2 \! - \! s^\top t \left(\|s\|^2\! -\! d_2\right)\left(\|t\|^2\! -\! d_2\right)\! -\! dd_2\Big{\}},
\end{eqnarray*}
$s,t \in \R^d$, such that, with $Z_n$ defined in (\ref{defznt}), we have
$
Z_n \vertk Z \textrm{ in } \HH \textrm{ as } n \to \infty$.
\item[b)]
We have
\[
T_{n,a} \vertk \int Z^2(t) \, w_a(t) \, {\rm d}t.
\]
\end{enumerate}
\end{theorem}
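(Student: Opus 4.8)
The plan is to deduce both parts from the general Hilbert space central limit theorem already established in Theorem~\ref{themmainclt}. First I would invoke affine invariance to assume without loss of generality that $X \sim \mathrm{N}_d(0,\mathrm{I}_d)$. Under this assumption the two centring functions coincide: a direct computation gives $\mu(t) = \E[\|X\|^2\cos(t^\top X)] = (d - \|t\|^2)\psi(t) = m(t)$ (the sine contribution vanishes by symmetry of $X$), so the process $Z_n$ of (\ref{defznt}) is literally equal to $V_n$ of (\ref{defvnt}). Since the standard normal law satisfies $\E\|X\|^4 < \infty$, $\E X = 0$ and $\E(XX^\top) = \mathrm{I}_d$, Theorem~\ref{themmainclt} applies verbatim and yields $Z_n = V_n \vertk V =: Z$, a centred Gaussian element of $\HH$ whose covariance kernel is $L(s,t) = \E[v(s,X)v(t,X)]$. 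Part~a) is thereby reduced to the purely computational task of showing $L = K$ when $X \sim \mathrm{N}_d(0,\mathrm{I}_d)$.

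For this computation I would first specialise the functions in (\ref{v1v2})--(\ref{v3v4}). Because $\psi_X = \psi$ is real in the normal case, $\psi_X^+ = \psi_X^- = \psi$, and using $\nabla\psi(t) = -t\psi(t)$, $\Delta\psi(t) = (\|t\|^2 - d)\psi(t)$, $\mathrm{H}\psi(t) = (tt^\top - \mathrm{I}_d)\psi(t)$ and $\nabla\Delta\psi(t) = (d_2 - \|t\|^2)t\,\psi(t)$, the four summands reduce to the explicit forms
\[
v_1 = \|x\|^2\,\mathrm{CS}^{+}(t,x), \qquad v_2 = \tfrac12(d_2 - \|t\|^2)\big((t^\top x)^2 - \|t\|^2\big)\psi(t),
\]
\[
v_3 = (\|t\|^2 - d_2)(t^\top x)\,\psi(t), \qquad v_4 = \big((t^\top x)^2 - \|x\|^2\big)\psi(t).
\]
Then I would expand $L(s,t) = \sum_{i,j=1}^4 \E[v_i(s,X)v_j(t,X)]$ into sixteen terms. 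The key structural observation is that $v_1$ is the only summand carrying trigonometric dependence on $X$, hence the only product in which trigonometric functions of both $s$ and $t$ occur: using the identity $\mathrm{CS}^{+}(s,X)\,\mathrm{CS}^{+}(t,X) = \cos((s-t)^\top X) + \sin((s+t)^\top X)$ together with the symmetry $X \sim -X$, one gets $\E[v_1(s,X)v_1(t,X)] = \E[\|X\|^4\cos((s-t)^\top X)] = \Delta^2\psi(s-t) = \psi(s-t)\big((\|s-t\|^2 - d_2)^2 - 2d_2\big)$, which is exactly the first line of $K$. Every other product carries an explicit prefactor $\psi(s)$ or $\psi(t)$ and, after the remaining Gaussian expectation of a polynomial against a single cosine produces the complementary factor, contributes only to the $\psi(s)\psi(t)$ part of $K$; in particular no $\psi(s+t)$ term ever survives.

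The outstanding Gaussian expectations are all of the form $\E[(\text{polynomial in }X)\cos(c^\top X)]$ with $c \in \{s,t\}$, which I would evaluate systematically by differentiating the characteristic function, e.g.\ $\E[(t^\top X)^2\cos(s^\top X)] = -t^\top \mathrm{H}\psi(s)\,t$, inserting an additional factor $\|X\|^2$ by applying $-\Delta_s$ to the corresponding transform. Collecting the fifteen $\psi(s)\psi(t)$-terms and matching the coefficients of the monomials in $s^\top t$, $\|s\|^2$ and $\|t\|^2$ then reproduces the bracketed expression in the second line of $K$. The main obstacle is entirely one of bookkeeping: organising the sixteen products, verifying that the odd-in-$X$ pieces and any $\psi(s+t)$ pieces cancel, and confirming that the accumulated polynomial coefficients agree with those in $K(s,t)$. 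Finally, part~b) is immediate: $T_{n,a} = \|Z_n\|_\HH^2$ and the map $f \mapsto \|f\|_\HH^2$ is continuous on $\HH$, so the continuous mapping theorem applied to $Z_n \vertk Z$ yields $T_{n,a} \vertk \|Z\|_\HH^2 = \int Z^2(t)\,w_a(t)\,\mathrm{d}t$.
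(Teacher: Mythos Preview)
Your proposal is correct and follows essentially the same route as the paper: reduce to $X\sim\mathrm{N}_d(0,\mathrm{I}_d)$ by affine invariance, observe that $\mu=m$ so $Z_n=V_n$, invoke Theorem~\ref{themmainclt}, specialise $v(t,x)$ under the normal law, compute the covariance kernel via Gaussian moment identities, and obtain b) from the continuous mapping theorem. The only cosmetic difference is that the paper first collapses $v_1+\cdots+v_4$ into a single function $h(x,t)$ (display~(\ref{defhtx})) and then lists four specific moment formulas to evaluate $\E[h(s,X)h(t,X)]$, whereas you keep the four summands separate and organise the calculation as a $4\times 4$ block expansion; the underlying Gaussian integrals are identical.
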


Notice that b) follows from a) and the continuous mapping theorem.
Part a) follows from Theorem \ref{themmainclt}. For the special case $X \sim {\rm N}_d(0,{\rm I}_d)$, we have
$\psi_X^{+}(t) = \psi_X^{-}(t) = \exp(-\|t\|^2/2) = \psi(t)$, which entails $\nabla \psi(t)  =  - t \psi(t)$,
$\Delta \psi(t)  =  \left(\|t\|^2-d \right) \psi(t)$,
${\rm H} \psi(t)  =  \left( tt^\top - {\rm I}_d \right) \psi(t)$,  and
$\nabla \Delta \psi(t)  =  t \psi(t) \left( 2+d-\|t\|^2 \right)$.
Thus, the function $v(t,x)$ figuring in the statement of Proposition \ref{propapproxi} takes the special form
\begin{eqnarray}\label{defhtx}
h(x,t) & = & \|x\|^2 {\rm CS}^{+}(t,x) - (2\psi(t) + m(t)) t^\top x - \psi(t)\|x\|^2 \\ \nonumber
& & + \left(2 \psi(t) + \frac{m(t)}{2}\right) (t^\top x)^2 - \left( \psi(t) + \frac{m(t)}{2}\right) \|t\|^2.
\end{eqnarray}
Long but straightforward computations, using symmetry arguments and the identities
\begin{eqnarray*}
\E\left[\|X\|^2(s^\top X)^2\right]&=&(d+2)\|s\|^2,\\
\E\left[\|X\|^4\cos(t^\top X)\right]&=&\left(\left(d+2-\|t\|^2\right)\left(d-\|t\|^2\right)-2\|t\|^2\right)\psi(t),\\
\E\left[\|X\|^2s^\top X \sin(t^\top X)\right]&=&s^\top t\left(d+2-\|t\|^2\right)\psi(t),\\
\E\left[\|X\|^2(s^\top X)^2 \cos(t^\top X)\right]&=&\left(d+2-\|t\|^2\right)\left(\|s\|^2-(s^\top t)^2\right)\psi(t)-2(s^\top t)^2\psi(t),
\end{eqnarray*}
$s,t\in\R^d$, show that the covariance kernel $K(s,t) = \E[h(s,X)h(t,X)]$ takes the form given above.

\vspace*{3mm}

Let $T_{\infty,a}$ be a random variable with the limit null distribution of $T_{n,a}$, i.e., with the distribution of
$\int Z^2(t) w_a(t) \, {\rm d}t$. Since
$
\E (T_{\infty,a}) = \int K(t,t) w_a(t) \, {\rm d}t$,
the following result may be obtained by straightforward but tedious manipulations of integrals.

\begin{theorem}\label{thmsec3ewert}
Putting
\[
c_j(a,d) := \frac{\pi^{d/2} d}{(a+1)^{d/2 +j}}, \quad j =1,2,3,4,
\]
we have
\begin{eqnarray*}
\E(T_{\infty,a}) & = & d(d+2)\left(\left(\frac{\pi}{a}\right)^{d/2} -  \left(\frac{\pi}{a+1}\right)^{d/2} \right)\\
& & - c_4(a,d) \frac{(d+2)(d+4)(d+6)}{32}+c_3(a,d)\frac{(d+2)(d+3)(d+4)}{8}\\
& & - c_2(a,d) \frac{(d+2)(d^2+4d+14)}{8} + c_1(a,d) \frac{(d-2)(d+2)}{2}.
\end{eqnarray*}
\end{theorem}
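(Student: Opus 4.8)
The plan is to start from the representation $\E(T_{\infty,a}) = \int K(t,t)\, w_a(t)\, \mathrm{d}t$, which is available to us: since $T_{\infty,a} = \int Z^2(t)\, w_a(t)\,\mathrm{d}t$ with $Z$ a centred Gaussian element of $\HH$, we have $\E[Z^2(t)] = K(t,t)$, and because the integrand is nonnegative, Tonelli's theorem justifies interchanging expectation and integration. The whole statement therefore reduces to evaluating the diagonal $K(t,t)$ and integrating it against the Gaussian weight $w_a$.

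First I would set $s=t$ in the covariance kernel. On the diagonal the first summand collapses: $\psi(s-t)=\psi(0)=1$ and $\|s-t\|^2=0$, so it contributes the constant $(0-d_2)^2-2d_2 = d_2(d_2-2) = d(d+2)$. In the second summand $\psi(s)\psi(t)=\exp(-\|t\|^2)$, and every scalar product or squared norm becomes a function of the single radial variable $u:=\|t\|^2$ (using $s^\top t = \|t\|^2 = u$). Expanding the brace then gives
\[
K(t,t) = d(d+2) + e^{-u}P(u), \qquad u=\|t\|^2,
\]
where $P$ is an explicit quartic polynomial in $u$; the degree four arises from the term $-\tfrac12 (s^\top t)^2(\|s\|^2-d_4)(\|t\|^2-d_4)$, which on the diagonal is $-\tfrac12 u^2(u-d_4)^2$.

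Next I would integrate $K(t,t)\,e^{-a\|t\|^2}$ over $\R^d$. The constant part yields $d(d+2)\int e^{-a\|t\|^2}\,\mathrm{d}t = d(d+2)(\pi/a)^{d/2}$, while in the remaining part the exponentials combine, $e^{-u}e^{-au}=e^{-(a+1)u}$, so I only need to integrate each monomial $u^k$ against a Gaussian of parameter $a+1$. For this I would use the radial moment identity
\[
\int_{\R^d}\|t\|^{2k}e^{-b\|t\|^2}\,\mathrm{d}t = \pi^{d/2}\,\frac{(d/2)(d/2+1)\cdots(d/2+k-1)}{b^{d/2+k}}, \qquad b>0,
\]
obtained by differentiating $\int_{\R^d}e^{-b\|t\|^2}\,\mathrm{d}t=(\pi/b)^{d/2}$ exactly $k$ times in $b$. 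The rising factorial in the numerator is precisely what produces the factorized products $d(d+2)(d+4)(d+6)$ in the statement. Integrability is never in doubt, since $a>0$ and $a+1>0$ make all polynomial-times-Gaussian integrals finite.

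Finally I would set $b=a+1$, collect the five powers $u^0,\dots,u^4$ of $P$, and read off the coefficients using $c_j(a,d)=\pi^{d/2}d/(a+1)^{d/2+j}$. The $u^0$ contribution $-d(d+2)(\pi/(a+1))^{d/2}$ pairs with the constant part to give the first displayed line $d(d+2)\big((\pi/a)^{d/2}-(\pi/(a+1))^{d/2}\big)$, while the contributions of $u^1,\dots,u^4$ reorganize into the terms carrying $c_1,\dots,c_4$. The only genuine obstacle is bookkeeping: one must expand the brace of $K(s,t)$ correctly on the diagonal and track the rational coefficients and Pochhammer factors (and their signs) without error. There is no conceptual difficulty, and the emergence of the products $(d/2)(d/2+1)\cdots$ is exactly the mechanism generating the factorized coefficients in the stated formula.
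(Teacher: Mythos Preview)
Your proposal is correct and follows exactly the approach the paper uses: the paper states immediately before the theorem that $\E(T_{\infty,a})=\int K(t,t)\,w_a(t)\,\mathrm{d}t$ and that the result ``may be obtained by straightforward but tedious manipulations of integrals,'' which is precisely your plan of setting $s=t$, reducing the brace to a quartic $P(u)$ in $u=\|t\|^2$, and integrating each monomial against the Gaussian of parameter $a+1$ via the radial moment identity. Your identification of the $u^0$ term $-dd_2$ as the source of the first displayed line and of the $u^1,\dots,u^4$ contributions as the $c_1,\dots,c_4$ terms is exactly right.
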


The quantiles of the distribution of $T_{\infty,a}$ can be approximated by a Monte Carlo method, see Section \ref{secsimul}.
%
%
%

\section{Contiguous alternatives}\label{seccontig}
In this section, we consider a triangular array $(X_{n1}, \ldots, X_{nn})$, $n \ge d+1$, of
rowwise i.i.d. random vectors with Lebesgue density
\[
f_n(x) = \varphi(x) \left( 1 + \frac{g(x)}{\sqrt{n}}\right), \quad x \in \R^d,
\]
where $\varphi(x) = (2\pi)^{-d/2} \exp \left(-\|x\|^2/2\right)$, $x \in \R^d$, is the density of the
standard normal distribution N$_d(0,{\rm I}_d)$, and $g$ is some bounded measurable function  satisfying
$
\int g(x) \varphi(x) \, {\rm d} x = 0$.
We assume that $n$ is sufficiently large to render $g$ nonnegative. Recall $Z_n(t)$ from (\ref{defznt}).

\begin{theorem}\label{thmcontiguous} Under the triangular $X_{n,1}, \ldots, X_{n,n}$ given above, we have
\[
Z_n \vertk Z + c \quad \text{ as } n \to \infty,
\]
where $Z$ is the centred random element of $\HH$ figuring in Theorem \ref{thmsec3main}, and
\[
c(t) = \int h(x,t) g(x) \varphi(x) \, {\rm d}x,
\]
with $h(x,t)$ given in (\ref{defhtx}).
\end{theorem}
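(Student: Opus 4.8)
The plan is to deduce the limit under the contiguous alternatives from the null limit in Theorem \ref{thmsec3main} by means of Le Cam's third lemma. Write $P_n$ for the distribution of $(X_{n,1},\ldots,X_{n,n})$ under $H_0$, i.e.\ the $n$-fold product of $\mathrm{N}_d(0,{\rm I}_d)$, and $Q_n$ for the corresponding product built from the densities $f_n$. The log-likelihood ratio is
\[
\Lambda_n = \sum_{j=1}^n \log\left(1 + \frac{g(X_{n,j})}{\sqrt{n}}\right).
\]
Since $g$ is bounded, a second-order Taylor expansion gives, under $P_n$,
\[
\Lambda_n = \frac{1}{\sqrt{n}}\sum_{j=1}^n g(X_{n,j}) - \frac{1}{2n}\sum_{j=1}^n g^2(X_{n,j}) + o_\PP(1),
\]
and, because $\int g(x)\varphi(x)\,{\rm d}x = 0$, the classical central limit theorem and the law of large numbers yield $\Lambda_n \vertk \mathrm{N}(-\sigma^2/2,\sigma^2)$ under $P_n$, where $\sigma^2 = \int g^2(x)\varphi(x)\,{\rm d}x$. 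By Le Cam's first lemma this establishes mutual contiguity of $(P_n)$ and $(Q_n)$.

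Next I would establish joint convergence of $(Z_n,\Lambda_n)$ under $P_n$. Under $H_0$ we have $Z_n = V_n$, and Proposition \ref{propapproxi} together with the special form (\ref{defhtx}) of the summand function shows that
\[
Z_n(t) = \frac{1}{\sqrt{n}}\sum_{j=1}^n h(X_{n,j},t) + o_\PP(1)
\]
in $\HH$. Hence $Z_n$ and the leading linear part of $\Lambda_n$ are asymptotically normalised sums of the \emph{same} i.i.d.\ sequence. Applying the Cram\'er--Wold device to finite-dimensional projections of $Z_n$ jointly with $\Lambda_n$, and invoking the Hilbert space central limit theorem as in Theorem \ref{themmainclt}, I obtain
\[
(Z_n,\Lambda_n) \vertk (Z,W) \quad \text{under } P_n,
\]
where $(Z,W)$ is jointly Gaussian in $\HH \times \R$, $Z$ is the centred Gaussian element of Theorem \ref{thmsec3main}, $W \sim \mathrm{N}(-\sigma^2/2,\sigma^2)$, and the cross-covariance equals $\mathrm{Cov}(Z(t),W) = \E[h(X,t)g(X)]$; the centring terms drop out because $\E[h(X,t)] = 0$ and $\E[g(X)] = 0$.

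Finally, Le Cam's third lemma converts this joint null limit into the limit under the alternative: under $Q_n$, the random element $Z_n$ converges in distribution in $\HH$ to $Z$ shifted by the cross-covariance functional, that is, to $Z + c$ with
\[
c(t) = \mathrm{Cov}(Z(t),W) = \int h(x,t)\,g(x)\,\varphi(x)\,{\rm d}x,
\]
which is precisely the asserted shift. By contiguity, the asymptotic equivalence $\|Z_n - \widetilde{V}_n\|_\HH = o_\PP(1)$ valid under $P_n$ also holds under $Q_n$, so no separate treatment of the scaled residuals under the alternative is required.

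The step I expect to be the main obstacle is the joint convergence: proving a genuinely joint central limit theorem for the Hilbert-space-valued $Z_n$ together with the scalar $\Lambda_n$, including the requisite tightness, and rigorously justifying Le Cam's third lemma in this infinite-dimensional setting with the shift correctly identified as the cross-covariance function $c$. Once the joint Gaussian limit and the covariance $\E[h(X,t)g(X)]$ are secured, the identification of $c(t)$ is immediate.
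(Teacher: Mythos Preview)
Your proposal is correct and follows exactly the approach the paper has in mind: the paper's own proof is omitted with the remark that ``the reasoning uses standard LeCam theory on contiguous probability measures and closely parallels that given in Section~3 of \cite{HW:1997}'', which is precisely the Le~Cam third lemma argument you outline. Your identification of the shift $c(t)$ as the cross-covariance $\E[h(X,t)g(X)]$, the use of Proposition~\ref{propapproxi} to obtain the i.i.d.\ linear representation $Z_n(t) = n^{-1/2}\sum_j h(X_j,t) + o_\PP(1)$ under $P_n$, and the transfer of this $o_\PP(1)$ remainder to $Q_n$ via contiguity are all standard and correct.
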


\begin{proof}
Since the reasoning uses standard LeCam theory on contiguous probability measures and closely parallels  that given in Section 3 of \cite{HW:1997},
it will be omitted.
\end{proof}

\begin{corollary}\label{corocontig} Under the conditions of Theorem \ref{thmcontiguous}, we have
\[
T_{n,a} \vertk \int \left(Z(t) + c(t)\right)^2 w_a(t) \, {\rm d} t.
\]
\end{corollary}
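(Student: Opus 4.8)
The plan is to exhibit $T_{n,a}$ as the image of the random element $Z_n$ under a fixed continuous functional on $\HH$ and then to combine Theorem~\ref{thmcontiguous} with the continuous mapping theorem. This is exactly the mechanism by which part b) of Theorem~\ref{thmsec3main} was deduced from part a), and the same reasoning carries over verbatim in the present situation.

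First I would recall from Section~\ref{basicCLT} that $T_{n,a} = \|Z_n\|_\HH^2 = \int Z_n^2(t)\, w_a(t)\,\mathrm{d}t$, so that $T_{n,a} = \Phi(Z_n)$, where $\Phi \colon \HH \to \R$ is defined by $\Phi(f) = \|f\|_\HH^2$. Next I would observe that $\Phi$ is continuous: the reverse triangle inequality $\bigl|\,\|f\|_\HH - \|g\|_\HH\,\bigr| \le \|f-g\|_\HH$ shows that $f \mapsto \|f\|_\HH$ is (Lipschitz) continuous on $\HH$, and squaring preserves continuity. Theorem~\ref{thmcontiguous} provides $Z_n \vertk Z + c$ in $\HH$ under the triangular array $X_{n,1},\ldots,X_{n,n}$, and $Z+c$ is a bona fide element of $\HH$, so that $\Phi(Z+c)$ is almost surely finite. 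An application of the continuous mapping theorem then yields
\[
T_{n,a} = \Phi(Z_n) \vertk \Phi(Z+c) = \int \bigl(Z(t)+c(t)\bigr)^2 w_a(t)\,\mathrm{d}t,
\]
which is precisely the asserted statement.

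I expect no genuine obstacle here. The only points meriting a line of justification are the continuity of the squared-norm functional, which is immediate, and the fact that the shift $c$ lies in $\HH$, ensuring $Z+c \in \HH$ and hence that the limiting integral is well defined; the latter is already implicit in the statement of Theorem~\ref{thmcontiguous}, where $Z+c$ figures as a limit of random elements of $\HH$. Consequently the corollary follows in a single step from Theorem~\ref{thmcontiguous} and the continuous mapping theorem, and requires no separate tightness or approximation argument beyond what that theorem already supplies.
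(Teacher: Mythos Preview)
Your proposal is correct and matches the paper's intended argument exactly: the corollary is stated without proof in the paper, being an immediate consequence of Theorem~\ref{thmcontiguous} and the continuous mapping theorem applied to the squared-norm functional $f\mapsto\|f\|_\HH^2$, just as part~b) of Theorem~\ref{thmsec3main} was obtained from part~a). There is nothing to add.
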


From Theorem \ref{thmcontiguous} and the above corollary, we conclude that the test for
multivariate normality based on $T_{n,a}$ is able to detect alternatives which converge to the normal distribution at
the rate $n^{-1/2}$, irrespective of the underlying dimension $d$. The test of Bowman and Foster (see \cite{BF:1993}) is
a prominent example of an affine invariant tests for multivariate normality
that is consistent against each fixed non-normal alternative distribution but nevertheless lacks this property of $n^{-1/2}$-consistency,
see Section 7 of \cite{H:2002}.

%
%
%
\section{Fixed alternatives and consistency}\label{secfixed}
In this section we assume that $X,X_1,X_2, \ldots $ are i.i.d. with a distribution that is absolutely continuous with respect to Lebesgue measure, and that
$\E \|X\|^4 < \infty$. In view of affine invariance, we make the additional assumptions $\E(X) =0$ and $\E(XX^\top) = {\rm I}_d$. Recall $m(t)$ from
(\ref{defbt}) and $\mu(t)$ from (\ref{defmuoft}).  Writing $\fsk$ for $\PP$-almost sure convergence, our first result is a strong limit for $T_{n,a}/n$.

\begin{theorem}\label{thmfixed1}
If $\E \|X\|^4 < \infty$, we have
\begin{equation}\label{defdeltaa}
\frac{T_{n,a}}{n} \fsk \Delta_a= \int \big{(} \mu(t) - m(t)\big{)}^2 w_a(t) \, {\rm d}t \quad  \text{ as } n \to \infty.
\end{equation}
\end{theorem}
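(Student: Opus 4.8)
The plan is to prove the stronger statement that the $\HH$-valued sample average underlying $T_{n,a}/n$ converges almost surely in norm to its natural limit, and then to read off the claim from continuity of the norm. Writing $\Psi(x) := \|x\|^2 {\rm CS}^{+}(\cdot,x) \in \HH$ and recalling $Z_n$ from (\ref{defznt}) together with $T_{n,a} = \|Z_n\|_\HH^2$, we have $T_{n,a}/n = \|W_n\|_\HH^2$, where $W_n := n^{-1}\sum_{j=1}^n \Psi(Y_{n,j}) - m$, while $\Delta_a = \|\mu - m\|_\HH^2$. Since $\big|\,\|W_n\|_\HH - \|\mu - m\|_\HH\,\big| \le \|W_n - (\mu - m)\|_\HH$, it suffices to show $\|W_n - (\mu - m)\|_\HH \fsk 0$. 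I would decompose
\begin{equation*}
W_n - (\mu - m) = \Big( \tfrac1n\sum_{j=1}^n \Psi(X_j) - \mu\Big) + D_n, \qquad D_n := \tfrac1n \sum_{j=1}^n \big(\Psi(Y_{n,j}) - \Psi(X_j)\big),
\end{equation*}
so that the ideal part and the standardization error can be treated separately.

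For the first summand, $\Psi(X_1),\Psi(X_2),\dots$ are i.i.d.\ random elements of $\HH$ with $\E\Psi(X) = \mu$ by (\ref{defmuoft}), and $\E\|\Psi(X)\|_\HH^2 = \E\!\int \|X\|^4\,({\rm CS}^{+}(t,X))^2 w_a(t)\,{\rm d}t \le 2(\pi/a)^{d/2}\,\E\|X\|^4 < \infty$, because $({\rm CS}^{+})^2 \le 2$ and $\int w_a = (\pi/a)^{d/2}$. Hence $\E\|\Psi(X)\|_\HH < \infty$, and the strong law of large numbers in the separable Hilbert space $\HH$ (Mourier's theorem) gives $n^{-1}\sum_{j=1}^n \Psi(X_j) \fsk \mu$ in norm.

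The crux is to show $\|D_n\|_\HH \fsk 0$. Here I would use the triangle inequality $\|D_n\|_\HH \le n^{-1}\sum_{j=1}^n \|\Psi(Y_{n,j}) - \Psi(X_j)\|_\HH$ --- deliberately not Cauchy--Schwarz over $j$ --- together with the split
\begin{equation*}
\Psi(Y_{n,j}) - \Psi(X_j) = \big(\|Y_{n,j}\|^2 - \|X_j\|^2\big)\,{\rm CS}^{+}(\cdot,Y_{n,j}) + \|X_j\|^2\big({\rm CS}^{+}(\cdot,Y_{n,j}) - {\rm CS}^{+}(\cdot,X_j)\big).
\end{equation*}
Using $\|{\rm CS}^{+}(\cdot,y)\|_\HH \le \sqrt2\,(\pi/a)^{d/4}$ and, from $|\cos u - \cos v|,\,|\sin u - \sin v|\le|u-v|$, the Lipschitz bound $\|{\rm CS}^{+}(\cdot,y) - {\rm CS}^{+}(\cdot,x)\|_\HH \le 2\,\|y-x\|\,(\int\|t\|^2 w_a\,{\rm d}t)^{1/2}$, each summand is controlled by a constant times $\|Y_{n,j}-X_j\|\,(\|Y_{n,j}\|+\|X_j\|)$, respectively $\|X_j\|^2\,\|Y_{n,j}-X_j\|$. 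Writing $Y_{n,j}-X_j = (S_n^{-1/2}-{\rm I}_d)X_j - S_n^{-1/2}\overline{X}_n$ and setting $a_n := \|S_n^{-1/2}-{\rm I}_d\|$, $b_n := \|S_n^{-1/2}\|\,\|\overline{X}_n\|$ gives $\|Y_{n,j}-X_j\| \le a_n\|X_j\| + b_n$. Collecting terms, $\|D_n\|_\HH$ is bounded by a finite linear combination of $a_n\,n^{-1}\sum_j\|X_j\|^3$, $b_n\,n^{-1}\sum_j\|X_j\|^2$, $a_n\,n^{-1}\sum_j\|X_j\|^2$, $b_n\,n^{-1}\sum_j\|X_j\|$ and $b_n^2$. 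Each sample moment $n^{-1}\sum_j\|X_j\|^k$, $k\in\{1,2,3\}$, converges a.s.\ to $\E\|X\|^k<\infty$ (finiteness of $\E\|X\|^3$ being guaranteed by $\E\|X\|^4<\infty$), while $a_n\fsk0$ and $b_n\fsk0$, since $\overline{X}_n \fsk 0$ and $S_n\fsk {\rm I}_d$ by the strong law and hence $S_n^{-1/2}\fsk{\rm I}_d$ by continuity of the matrix square root and inverse on the positive definite cone. Thus $\|D_n\|_\HH\fsk0$, and combining the two summands finishes the proof.

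The main obstacle is exactly this moment bookkeeping. A careless estimate that squares the summands before averaging over $j$ would pair $\|X_j\|^4$ with $\|Y_{n,j}-X_j\|^2\le 2(a_n^2\|X_j\|^2+b_n^2)$ and thereby force the assumption $\E\|X\|^6<\infty$; the linear triangle-inequality bound instead keeps the highest power at $\|X_j\|^3$, so that $\E\|X\|^4<\infty$ suffices. The remaining verifications (the Lipschitz estimates and the convergences $a_n,b_n\fsk0$) are routine.
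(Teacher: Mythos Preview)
Your proof is correct and follows essentially the same route as the paper's: decompose $n^{-1/2}Z_n - (\mu - m)$ into an i.i.d.\ part handled by the Hilbert-space strong law and a correction coming from the replacement of $X_j$ by $Y_{n,j}$, then show the correction tends to zero almost surely in $\HH$. The only technical difference is that the paper bounds the correction pointwise in $t$ via the Cauchy--Schwarz inequality over $j$ (splitting $n^{-1}\sum_j\|X_j\|^4$ from $n^{-1}\sum_j\|\Delta_{n,j}\|^2$ and invoking Proposition~\ref{propossumselta}\,b)), whereas you apply the triangle inequality in $\HH$ over $j$ and control $\|\Delta_{n,j}\|$ directly through $a_n,b_n$; both variants need exactly $\E\|X\|^4<\infty$.
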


\noindent The proof of Theorem \ref{thmfixed1} is given in Section \ref{secappendix}.

\begin{remark}
Let $\psi_X(t) = \E [\exp({\rm i}t^\top X)]$, $t \in \R^d$, denote the characteristic function of $X$. We have
$\Delta \psi_X(t) = - \E[\|X\|^2 \exp({\rm i}t^\top X)]$. Since $\Delta \psi(t) = (\|t\|^2-d)\psi(t)$, some algebra gives
\[
\Delta_a = \int \Big{|} \Delta \psi_X(t) - \Delta \psi(t) \Big{|}^2 w_a(t) \, {\rm d} t.
\]
By Theorem \ref{unique.sol}, we have $\Delta_a=0$ if and only if $X$ has the normal distribution N$_d(0,{\rm I}_d)$.
In view of Theorem  \ref{thmfixed1}, $T_{n,a} \to \infty $ $\PP$-almost  surely under any alternative distribution
satisfying $\E \|X\|^4 < \infty$. Since, according to Theorem \ref{thmsec3main},  the sequence of critical values of a level-$\alpha$-test of $H_0$ that rejects $H_0$ for
large values of $T_{n,a}$ stays bounded, we have the following result.
\end{remark}

\begin{corollary} The test for multivariate normality based on $T_{n,a}$ is consistent against any fixed alternative
distribution satisfying $\mathbb{E} \|X\|^4 < \infty$.
\end{corollary}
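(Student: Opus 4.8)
The plan is to combine two facts already at our disposal: under a fixed alternative the statistic $T_{n,a}$ diverges to infinity almost surely, while under $H_0$ the critical values of the test remain bounded. Recall that the level-$\alpha$ test rejects $H_0$ when $T_{n,a} > c_{n,\alpha}$, where $c_{n,\alpha}$ denotes the $(1-\alpha)$-quantile of the distribution of $T_{n,a}$ under normality; consistency means that, for any fixed non-normal alternative with $\E \|X\|^4 < \infty$, the rejection probability $\PP(T_{n,a} > c_{n,\alpha})$ tends to $1$ as $n \to \infty$.

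First I would reduce to the standardized situation. By affine invariance of $T_{n,a}$ we may assume $\E(X) = 0$ and $\E(XX^\top) = \mathrm{I}_d$, so that the alternative amounts to $\PP^X \neq \mathrm{N}_d(0,\mathrm{I}_d)$. Theorem \ref{thmfixed1} then gives $T_{n,a}/n \fsk \Delta_a$, and the remark following it rewrites $\Delta_a$ as the weighted $L^2$-distance $\int |\Delta \psi_X(t) - \Delta \psi(t)|^2 w_a(t)\,\mathrm{d}t$. The crucial point is that, by the uniqueness characterization in Theorem \ref{unique.sol}, this quantity vanishes exactly when $X \sim \mathrm{N}_d(0,\mathrm{I}_d)$; hence $\Delta_a > 0$ under the alternative, and consequently $T_{n,a} = n \cdot (T_{n,a}/n) \to \infty$ $\PP$-almost surely.

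Next I would control the critical values. By Theorem \ref{thmsec3main}, under $H_0$ the statistic $T_{n,a}$ converges in distribution to the almost surely finite random variable $T_{\infty,a} = \int Z^2(t) w_a(t)\,\mathrm{d}t$. I would pick a continuity point $q$ of the distribution function of $T_{\infty,a}$ with $\PP(T_{\infty,a} \le q) > 1-\alpha$, which exists since $T_{\infty,a}$ is finite almost surely. Weak convergence then forces $\PP(T_{n,a} \le q) > 1-\alpha$ for all large $n$, so that $c_{n,\alpha} \le q$ eventually, and the critical values stay bounded by a finite constant.

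Finally I would assemble the pieces: for all large $n$ one has $\PP(T_{n,a} > c_{n,\alpha}) \ge \PP(T_{n,a} > q)$, and since $T_{n,a} \to \infty$ almost surely, hence in probability, the right-hand side tends to $1$. This yields consistency. The only genuinely substantive step is the strict positivity $\Delta_a > 0$ under non-normal alternatives; everything else is a routine combination of almost sure divergence with the boundedness of quantiles guaranteed by the null limit law. The main obstacle is therefore conceptual rather than computational, namely the appeal to Theorem \ref{unique.sol} to ensure that the population functional $\Delta_a$ separates the standard normal law from every other fixed alternative.
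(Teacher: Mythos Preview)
Your proposal is correct and follows essentially the same route as the paper: the remark preceding the corollary already argues that $\Delta_a>0$ under the alternative (via Theorem~\ref{unique.sol}), that $T_{n,a}\to\infty$ almost surely by Theorem~\ref{thmfixed1}, and that the null critical values stay bounded by Theorem~\ref{thmsec3main}. Your only addition is the explicit continuity-point argument for the boundedness of $c_{n,\alpha}$, which the paper leaves implicit.
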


By analogy with  Theorem \ref{thmainfty}, the next result shows that  the (population) measure of multivariate skewness in the sense  of {M\'ori}, {Rohatgi} and {Sz\'ekely} (see \cite{MRS:1993})
emerges as the limit of $\Delta_a$, after a suitable normalization, as $a \to \infty$.

\begin{theorem}\label{thmlimdeltaa}  If $\E \|X\|^6 < \infty$, then
\[
\lim_{a \to \infty} \frac{2 a^{d/2+1}}{\pi^{d/2}}  \Delta_a = \big{\|} \E (\|X\|^2 X)\big{\|}^2.
\]
\end{theorem}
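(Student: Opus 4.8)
The plan is to exploit the fact that, as $a \to \infty$, the weight $w_a(t) = \exp(-a\|t\|^2)$ concentrates all its mass near the origin, so that the asymptotics of $\Delta_a = \int (\mu(t) - m(t))^2 w_a(t)\,{\rm d}t$ are governed entirely by the behaviour of $D := \mu - m$ in a shrinking neighbourhood of $t = 0$. First I would record that the naive leading term vanishes: since $\E(XX^\top) = {\rm I}_d$ we have $\mu(0) = \E\|X\|^2 = d$, while $m(0) = d\,\psi(0) = d$, so $D(0) = 0$. This is precisely why the correct normalising factor is $a^{d/2+1}$ rather than $a^{d/2}$. Differentiating under the expectation (legitimate because $\E\|X\|^6 < \infty$) gives $\nabla\mu(0) = \E[\|X\|^2 X] =: b$, whereas $m$ is a function of $\|t\|^2$ and hence even, so $\nabla m(0) = 0$. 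Consequently Taylor's theorem yields $D(t) = t^\top b + R(t)$ with $|R(t)| \le C\|t\|^2$ on a fixed ball around $0$, the bound on the second-order remainder coming from the uniform estimate $\|{\rm H}\mu(t)\| \le \sqrt 2\,\E\|X\|^4$ together with the smoothness of $m$.

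Next I would rescale. Substituting $t = s/\sqrt a$ and using $w_a(s/\sqrt a) = \exp(-\|s\|^2)$ and ${\rm d}t = a^{-d/2}\,{\rm d}s$ gives
\[
\frac{2a^{d/2+1}}{\pi^{d/2}}\,\Delta_a = \frac{2}{\pi^{d/2}} \int a\,D\!\left(\tfrac{s}{\sqrt a}\right)^{2} \exp(-\|s\|^2)\,{\rm d}s .
\]
For each fixed $s$, the expansion $D(s/\sqrt a) = s^\top b/\sqrt a + R(s/\sqrt a)$ with $R(s/\sqrt a) = O(\|s\|^2/a)$ shows that $a\,D(s/\sqrt a)^2 = (s^\top b)^2 + O(\|s\|^3/\sqrt a) + O(\|s\|^4/a) \to (s^\top b)^2$ as $a \to \infty$. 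Since $\int (s^\top b)^2 \exp(-\|s\|^2)\,{\rm d}s = \|b\|^2\,\pi^{d/2}/2$ by the elementary Gaussian moment $\int s_i^2 \exp(-\|s\|^2)\,{\rm d}s = \pi^{d/2}/2$ and symmetry, the prefactor $2/\pi^{d/2}$ converts this into $\|b\|^2 = \|\E(\|X\|^2 X)\|^2$, which is the claimed limit (and matches the population version of the sample identity in Theorem \ref{thmainfty}). It therefore only remains to justify passing the limit inside the integral.

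The main obstacle is exactly this interchange, because the factor $a$ in front of the integrand grows while the localisation of $w_a$ only helps on the scale $\|t\| \lesssim a^{-1/2}$. I would handle it by splitting the integral at $\|s\| = \sqrt a$, i.e.\ $\|t\| = 1$. On $\{\|s\| \le \sqrt a\}$ the local estimate $|D(t)| \le (\|b\| + C)\|t\|$ gives $a\,D(s/\sqrt a)^2 \le (\|b\|+C)^2\|s\|^2$, so the integrand is dominated by the fixed integrable function $(\|b\|+C)^2\|s\|^2\exp(-\|s\|^2)$ and dominated convergence applies. On $\{\|s\| > \sqrt a\}$ I would instead use the crude global bound $|D(t)| \le \sqrt 2\,d + \sup_t |m(t)| =: C'$, valid since $|{\rm CS}^{+}(t,x)| \le \sqrt 2$ and $m$ is bounded; this controls the remaining piece by $(2C'^2/\pi^{d/2})\,a\int_{\|s\|>\sqrt a}\exp(-\|s\|^2)\,{\rm d}s$, and the Gaussian tail beyond radius $\sqrt a$ decays faster than any power of $a^{-1}$, so this term tends to $0$. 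Combining the two pieces yields the assertion. The moment assumption $\E\|X\|^6 < \infty$ enters only to guarantee that $b$ is finite and that $\mu$ is twice continuously differentiable with a bounded Hessian near the origin, which is what makes the remainder bound $|R(t)| \le C\|t\|^2$ available.
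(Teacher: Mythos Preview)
Your proof is correct and takes a genuinely different route from the paper's. The paper proceeds by evaluating the Gaussian integrals explicitly via (\ref{glint1})--(\ref{glint3}) to write $\Delta_a = I_{1,a} - I_{2,a} + I_{3,a}$ as a combination of expectations over $X$ and an independent copy $X_1,X_2$; after multiplying through by $2a(a/\pi)^{d/2}$ it Taylor--expands $\exp(-\|X_1-X_2\|^2/(4a))$ and $\exp(-\|X\|^2/(2(2a+1)))$ in powers of $1/a$ and tracks the cancellation of the divergent terms $2ad^2$ and the constants $d\,\E\|X\|^4$, $d^3+2d^2$ across the three pieces $J_{1,a},J_{2,a},J_{3,a}$. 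Your argument is instead a clean Laplace--type analysis: rescale $t=s/\sqrt a$, Taylor--expand $D=\mu-m$ at the origin, note $D(0)=0$ and $\nabla D(0)=\E(\|X\|^2X)$, and finish with dominated convergence on $\{\|s\|\le\sqrt a\}$ plus a Gaussian tail bound on the complement. This avoids all the delicate cancellations and makes transparent \emph{why} the {M\'ori}--{Rohatgi}--{Sz\'ekely} skewness emerges: it is simply $\|\nabla D(0)\|^2$. One small remark on your closing sentence: your argument actually runs under the weaker hypothesis $\E\|X\|^4<\infty$, since that already yields $\sup_t\|{\rm H}\mu(t)\|\le\sqrt2\,\E\|X\|^4$ and hence the remainder bound $|R(t)|\le C\|t\|^2$; the sixth moment is genuinely needed in the paper's approach (the $O(a^{-1})$ remainder in $J_{1,a}$ involves $\E[\|X_1\|^2\|X_2\|^2\|X_1-X_2\|^4]$), so you slightly overstate its role for your own proof. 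The paper's route, on the other hand, has the merit of exactly mirroring the sample--level computation in Theorem~\ref{thmainfty} and of reusing integral identities already established.
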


\noindent The proof of Theorem \ref{thmlimdeltaa} is given in Section \ref{secappendix}.

\vspace*{3mm}
We now show that the limit distribution of $\sqrt{n}\left(T_{n,a}/n- \Delta_a\right)$ as $n \to \infty $ is centred normal.
This fact is essentially a consequence of Theorem 1 in \cite{BEH:2017}. The reasoning is as follows:
By (\ref{glintqtna}), we have $T_{n,a} = \|Z_n\|^2_{\HH}$,
where $Z_n$ is given in (\ref{defznt}). Putting $z(t) := \mu(t)-m(t)$, $t \in \R^d$,  display (\ref{defdeltaa}) shows that
$\Delta_a = \|z\|^2_{\HH}$.
Now, defining $Z_n^*(t) := n^{-1/2} Z_n(t)$, it follows that
\begin{eqnarray} \nonumber
\sqrt{n}\left( \frac{T_{n,a}}{n} - \Delta_a \right) & = & \sqrt{n} \left( \|Z_n^*\|^2_{\HH} - \|z\|^2_{\HH} \right) = \sqrt{n} \langle Z_n^* -z,Z_n^*+z \rangle_\HH \\ \nonumber
& = &  \sqrt{n} \langle Z_n^* -z,2z + Z_n^*-z \rangle_{\HH}\\ \label{zerlegung1}
& = & 2 \langle \sqrt{n}(Z_n^* -z),z \rangle_{\HH} + \frac{1}{\sqrt{n}} \| \sqrt{n}\left(Z_n^* - z\right)\|^2_{\HH}.
\end{eqnarray}
A little thought shows that $\sqrt{n}(Z_n^*(t) - z(t)) = V_n(t)$, where $V_n$ is given in (\ref{defvnt}).
By Theorem \ref{themmainclt}, $V_n \vertk V$ for a centred Gaussian element $V$ of $\HH$.
 As a consequence, the  second summand in (\ref{zerlegung1}) is $o_\PP(1)$ as $n \to \infty$, and, by the continuous mapping theorem,
 the first summand
converges in distribution to $2\langle V,z\rangle_\HH$. The latter random variable has a centred normal distribution with variance $\sigma_a^2 := 4 \E [\langle V,z \rangle_\HH ^2]$.
The following theorem summarizes our findings.

\begin{theorem}\label{themfixed2}
For a fixed alternative distribution satisfying $\E \|X\|^4 < \infty$, $\E(X) =0$ and $\E(XX^\top) = {\rm I}_d$, we have
\[
\sqrt{n}\left( \frac{T_{n,a}}{n} - \Delta_a \right) \vertk {\rm N}(0,\sigma_a^2) \qquad \text{as } n \to \infty,
\]
where
\begin{equation}\label{darstsigma2}
\sigma_a^2 = 4 \iint L(s,t) z(s) z(t) \, w_a(s) w_a(t) \, {\rm d}s{\rm d}t
\end{equation}
and $L(s,t)$ is given in (\ref{defkernl}).
\end{theorem}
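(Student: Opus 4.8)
The plan is to regard the theorem as a consequence of the decomposition (\ref{zerlegung1}) already carried out in the discussion preceding the statement, so that the only genuinely new content is the closed form (\ref{darstsigma2}) for the limiting variance. Throughout I would use that $z := \mu - m \in \HH$: this holds because $\mu$ is bounded (as $|{\rm CS}^{+}(t,X)| \le \sqrt 2$ and $\E\|X\|^2 < \infty$), $m \in \HH$, and $w_a$ is integrable, so that $\Delta_a = \|z\|_\HH^2 < \infty$ as asserted in Theorem \ref{thmfixed1}.

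First I would record that, since $\sqrt{n}(Z_n^*(t) - z(t)) = V_n(t)$, the decomposition (\ref{zerlegung1}) becomes
\[
\sqrt{n}\left(\frac{T_{n,a}}{n} - \Delta_a\right) = 2\langle V_n, z\rangle_\HH + \frac{1}{\sqrt{n}}\|V_n\|_\HH^2 .
\]
By Theorem \ref{themmainclt} we have $V_n \vertk V$, so $\|V_n\|_\HH^2 = O_\PP(1)$ and the second term is $o_\PP(1)$. Since $z \in \HH$, the map $f \mapsto \langle f, z\rangle_\HH$ is a bounded linear functional on $\HH$, whence the continuous mapping theorem gives $\langle V_n, z\rangle_\HH \vertk \langle V, z\rangle_\HH$; Slutsky's lemma then yields $\sqrt{n}(T_{n,a}/n - \Delta_a) \vertk 2\langle V, z\rangle_\HH$.

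It remains to identify this limit. Because $V$ is a centred Gaussian element of the separable Hilbert space $\HH$, its image under the continuous linear functional $\langle \cdot, z\rangle_\HH$ is a centred real Gaussian variable, so $2\langle V, z\rangle_\HH \sim {\rm N}(0,\sigma_a^2)$ with $\sigma_a^2 = 4\,\E[\langle V, z\rangle_\HH^2]$. Writing $\langle V, z\rangle_\HH = \int V(t)z(t)w_a(t)\,{\rm d}t$, squaring, and interchanging expectation and integration, I would obtain
\[
\E[\langle V, z\rangle_\HH^2] = \iint \E[V(s)V(t)]\,z(s)z(t)\,w_a(s)w_a(t)\,{\rm d}s\,{\rm d}t = \iint L(s,t)\,z(s)z(t)\,w_a(s)w_a(t)\,{\rm d}s\,{\rm d}t,
\]
using that $L(s,t)=\E[v(s,X)v(t,X)]$ is the covariance kernel of $V$ from Theorem \ref{themmainclt}; multiplying by $4$ gives exactly (\ref{darstsigma2}).

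The step I expect to need the most care is the Fubini interchange in the last display. To justify it I would bound $\E|V(s)V(t)| \le \sqrt{L(s,s)L(t,t)}$ by Cauchy--Schwarz, reducing absolute integrability to the finiteness of $\int \sqrt{L(t,t)}\,|z(t)|\,w_a(t)\,{\rm d}t$. A further Cauchy--Schwarz estimate bounds this by $\big(\int L(t,t)w_a(t)\,{\rm d}t\big)^{1/2}\|z\|_\HH$, and the first factor equals $(\E\|V\|_\HH^2)^{1/2}$, which is finite since a Gaussian element of a separable Hilbert space has finite second moment (Fernique's theorem). This legitimises the interchange and completes the proof.
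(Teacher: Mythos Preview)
Your proposal is correct and follows essentially the same approach as the paper: the paper likewise treats the decomposition (\ref{zerlegung1}) and the convergence $V_n \vertk V$ as already established, and its proof of Theorem \ref{themfixed2} consists solely of the Fubini computation turning $4\E[\langle V,z\rangle_\HH^2]$ into the double integral (\ref{darstsigma2}). Your version is in fact more careful, since you supply an explicit justification of the Fubini interchange via Cauchy--Schwarz and Fernique's theorem, whereas the paper simply invokes Fubini without comment.
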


\begin{proof}
To complete the proof, notice that, by Fubini's theorem,
\begin{eqnarray*}
\sigma_a^2  &  = &  4 \E [\langle V,z \rangle_\HH ^2] \ = \
 4 \E \Bigg{[} \left(\int V(s) \, z(s) \, w_a(s) \, {\rm d} s\right)  \left(\int V(t) \, z(t) \, w_a(t) \, {\rm d} t\right) \Bigg{]}\\
& = &  4 \iint \E[V(s)V(t)] \, z(s) \, z(t) \, w_a(s) w_a(t) \, {\rm d}s{\rm d}t. \qedhere
\end{eqnarray*}
\end{proof}

Under slightly stronger conditions on $\mathbb{P}^X$, there is a consistent estimator of $\sigma_a^2$.
To obtain such an estimator, we replace $L(s,t)$  as well as $z(s)$ and $z(t)$ figuring in (\ref{darstsigma2})
with suitable empirical counterparts. To this end, notice that, by (\ref{defkernl}), we have
\begin{equation}\label{deflst}
L(s,t) = \sum_{i=1}^4 \sum_{j=1}^4 L^{i,j}(s,t),
\end{equation}
where
\begin{equation}\label{deflijst}
L^{i,j}(s,t) = \E\big{[} v_i(s,X)v_j(t,X)\big{]}
\end{equation}
and $v_j(t,x)$, $j \in \{1,2,3,4\}$, are given in (\ref{v1v2}), (\ref{v3v4}).
  Since $\nabla \Delta \psi_X^{\pm}(t) =  \mp \E[{\rm CS}^{\mp}(t,X) \|X\|^2X]$, $\nabla \psi_X^{\pm}(t) = \pm \E[{\rm CS}^{\mp}(t,X)X]$,
$\Delta \psi_X^{\pm}(t) = - \E[{\rm CS}^{\pm}(t,X)\|X\|^2]$ and ${\rm H}\psi_X^{+}(t) = - \E[{\rm CS}^{+}(t)(t,X)$ $X X^\top]$, parts a) -- d) of the following lemma show that the unknown quantities $\nabla \Delta \psi_X^{+}(t)$, $\nabla \psi_X^-(t)$,
$\Delta\psi_X^{-}(t)$ and $\text{H}\psi_X^+(t)$ that figure in the expressions of $v_2(t,x)$, $v_3(t,x)$ and $v_4(t,x)$ can be replaced with consistent
estimators that are based on the scaled residuals $Y_{n,1}, \ldots,Y_{n,n}$ defined in (\ref{defscaledr}).

\begin{lemma} \label{lemma6.6}
If $\E\|X_1\|^6 < \infty$, we have
\begin{enumerate}
\vspace{-1.5mm}
\item[\emph{a)}] $\Psi_{1,n}(t) := n^{-1}\sum_{j=1}^n {\rm CS}^+(t, Y_{n,j})Y_{n,j} \fse -\nabla\psi_X^-(t),$
\item[\emph{b)}] $\Psi_{2,n}(t) := n^{-1}\sum_{j=1}^n {\rm CS}^+(t, Y_{n,j})Y_{n,j}Y_{n,j}^{\top} \fse -\emph{H}\psi_X^+(t),$
\item[\emph{c)}] $\Psi_{3,n}^{\pm}(t) := n^{-1}\sum_{j=1}^n {\rm CS}^{\pm}(t, Y_{n,j})\|Y_{n,j}\|^2 \fse -\Delta\psi_X^{\pm}(t),$
\item[\emph{d)}] $\Psi_{4,n}^{\pm}(t) := n^{-1}\sum_{j=1}^n {\rm CS}^{\pm}(t, Y_{n,j})\|Y_{n,j}\|^2Y_{n,j} \fse \pm\nabla\Delta\psi_X^{\mp}(t),$
\item[\emph{e)}] $\Psi_{5,n}(t) := n^{-1}\sum_{j=1}^n {\rm CS}^+(t, Y_{n,j})\|Y_{n,j}\|^2Y_{n,j}Y_{n,j}^{\top} \fse \emph{H}\Delta\psi_X^+(t)$.
\end{enumerate}
\end{lemma}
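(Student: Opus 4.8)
The plan is to split each of the five assertions into a strong law of large numbers (SLLN) contribution and a perturbation contribution arising from the replacement of $X_j$ by the scaled residual $Y_{n,j}$. As preliminaries, the SLLN gives $\overline{X}_n \fse 0$ and $n^{-1}\sum_{j=1}^n X_jX_j^\top \fse \E[XX^\top] = {\rm I}_d$, whence $S_n = n^{-1}\sum_{j=1}^n X_jX_j^\top - \overline{X}_n\overline{X}_n^\top \fse {\rm I}_d$ and, by continuity of the map $A \mapsto A^{-1/2}$ on positive definite matrices, $S_n^{-1/2} \fse {\rm I}_d$. Each of the five statements has the form $n^{-1}\sum_{j=1}^n g(t,Y_{n,j}) \fse \E[g(t,X)]$, where $g(t,y) = {\rm CS}^{\pm}(t,y)\, q(y)$ and $q$ is a vector- or matrix-valued monomial of degree at most four; the identities recorded just before the lemma, together with the computation ${\rm H}_t\,{\rm CS}^{+}(t,x) = - {\rm CS}^{+}(t,x)\,xx^\top$ needed for part e), show that each expectation equals the claimed limit. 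Since $\E\|X\|^4 < \infty$, for fixed $t$ the i.i.d. summands $g(t,X_j)$ are integrable, so Kolmogorov's SLLN yields $n^{-1}\sum_{j=1}^n g(t,X_j) \fse \E[g(t,X)]$ at once.

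It then remains to control the perturbation $R_n(t) := n^{-1}\sum_{j=1}^n \big(g(t,Y_{n,j}) - g(t,X_j)\big)$ and to show $R_n(t) \fse 0$ for each fixed $t$. Writing $Y_{n,j} - X_j = (S_n^{-1/2} - {\rm I}_d)X_j - S_n^{-1/2}\overline{X}_n$ and setting $\varepsilon_n := \|S_n^{-1/2} - {\rm I}_d\| + \|S_n^{-1/2}\|\,\|\overline{X}_n\|$, which satisfies $\varepsilon_n \fse 0$, I would record $\|Y_{n,j} - X_j\| \le \varepsilon_n(1 + \|X_j\|)$, while $\|Y_{n,j}\|$ is dominated by $1 + \|X_j\|$ up to the a.s.\ bounded factor $\sup_n\|S_n^{-1/2}\|$. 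Exploiting the product structure $g = {\rm CS}^{\pm}\, q$, the Lipschitz bound $|{\rm CS}^{\pm}(t,Y_{n,j}) - {\rm CS}^{\pm}(t,X_j)| \le \sqrt 2\,\|t\|\,\|Y_{n,j} - X_j\|$, the uniform bound $|{\rm CS}^{\pm}| \le \sqrt 2$, and the mean value theorem applied to the monomial $q$, one obtains a pointwise estimate of the form $|g(t,Y_{n,j}) - g(t,X_j)| \le C(t)\,\varepsilon_n\,(1 + \|X_j\|^5)$, with $C(t)$ depending only on $t$ and on $\sup_n\|S_n^{-1/2}\|$. Averaging gives $\|R_n(t)\| \le C(t)\,\varepsilon_n\, n^{-1}\sum_{j=1}^n (1 + \|X_j\|^5)$.

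Since $\E\|X\|^6 < \infty$ comfortably ensures $\E\|X\|^5 < \infty$, a further application of the SLLN gives $n^{-1}\sum_{j=1}^n(1 + \|X_j\|^5) \fse 1 + \E\|X\|^5 < \infty$; combined with $\varepsilon_n \fse 0$ this forces $R_n(t) \fse 0$, and adding back the SLLN contribution completes each part. The five statements differ only in the sign of the trigonometric factor and in the degree of $q$ (at most four, attained by $\Psi_{5,n}$), so they are all handled by this single argument. The main obstacle will be the perturbation step: one must track how the degree of $q$, the extra factor $\|X_j\|$ produced by $\|Y_{n,j} - X_j\| \le \varepsilon_n(1 + \|X_j\|)$, and the Lipschitz factor $\|t\|$ combine, and confirm that the resulting fifth-degree moment — rather than the fourth-degree moment sufficient for the bare SLLN — is precisely what forces the moment hypothesis to be strengthened beyond $\E\|X\|^4$, the assumption $\E\|X\|^6 < \infty$ being a convenient sufficient condition.
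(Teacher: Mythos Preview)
Your proposal is correct and follows essentially the same strategy as the paper: split into an SLLN term for $n^{-1}\sum_j g(t,X_j)$ and a perturbation term coming from replacing $X_j$ by $Y_{n,j}$, then bound the latter using $\|Y_{n,j}-X_j\|\le\varepsilon_n(1+\|X_j\|)$ with $\varepsilon_n\fse 0$ together with the SLLN for moments of $\|X_j\|$. The only cosmetic difference is that the paper uses a second-order Taylor expansion of ${\rm CS}^{\pm}$ (producing remainders bounded by $\|t\|^2\|\Delta_{n,j}\|^2$) and then invokes Proposition~\ref{propossumseltaxx} for the resulting mixed terms $n^{-1}\sum_j\|\Delta_{n,j}\|^k\|X_j\|^\ell$ with $k+\ell\le 6$, whereas you use only the first-order Lipschitz bound on ${\rm CS}^{\pm}$ and factor out a single global $\varepsilon_n$; this is why your argument in fact needs only $\E\|X\|^5<\infty$, while the paper's second-order expansion produces monomials of total order~$6$ in part~e) and hence invokes the full hypothesis $\E\|X\|^6<\infty$.
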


\noindent The proof of Lemma~\ref{lemma6.6} is given in Section 9.

\vspace*{2mm}
\noindent
In view of Lemma \ref{lemma6.6}, a suitable estimator of $L(s,t)$ defined in (\ref{deflst}) is
\begin{equation}\label{deflnst}
L_n(s,t) = \sum_{i=1}^4 \sum_{j=1}^4 L_n^{i,j}(s,t),
\end{equation}
where
\begin{equation}\label{deflnij}
L_n^{i,j}(s,t) = \frac{1}{n} \sum_{k=1}^n v_{n,i}(s,Y_{n,k})v_{n,j}(t,Y_{n,k}),
\end{equation}
and
\begin{eqnarray*}
v_{n,1}(s,Y_{n,k}) & = & \|Y_{n,k}\|^2 {\rm CS}^{+}(s,Y_{n,k}), \quad v_{n,2}(s,Y_{n,k}) = - \frac{1}{2} s^\top (Y_{n,k}Y_{n,k}^\top - {\rm I}_d) \Psi_{4,n}^{-}(s),\\
v_{n,3}(s,Y_{n,k}) & = & - \left(2\Psi_{1,n}(s) + \Psi_{3,n}^{-}(s)s\right)^\top Y_{n,k}, \quad v_{n,4}(s,Y_{n,k}) = -Y_{n,k}^\top \Psi_{2,n}(s) Y_{n,k}.
\end{eqnarray*}

\noindent By straightforward algebra we have
\begin{eqnarray}\nonumber
    L^{1,1}_n(s, t) &=& \textstyle{n^{-1}\sum_{j=1}^n \|Y_{n,j}\|^4\cos\bigl((t-s)^{\top}Y_{n,j}\bigr) + n^{-1}\sum_{j=1}^n \|Y_{n,j}\|^4\sin\bigl((t+s)^{\top}Y_{n,j}\bigr),} \\ \label{ln12}
    L^{1,2}_n(s, t) &=& \textstyle{-\frac{1}{2}\Psi_{4,n}^-(s)^{\top}\Psi_{5,n}(t)s + \frac{1}{2}\Psi_{3,n}^+(t)\Psi_{4,n}^-(s)^{\top}s,} \\ \nonumber
    L^{1,3}_n(s, t) &=& \textstyle{\bigl(-2\Psi_{1,n}(s) - \Psi_{3,n}^-(s)s\bigr)^{\top}\Psi_{4,n}^+(t)} \\ \nonumber
    L^{1,4}_n(s, t) &=& \textstyle{-n^{-1}\sum_{j=1}^n \|Y_{n,j}\|^2{\rm CS}^+(t, Y_{n,j})Y_{n,j}^{\top}\Psi_{2,n}(s)Y_{n,j}} \\ \nonumber
    L^{2,2}_n(s, t) &=& \textstyle{\frac{1}{4}\Psi_{4,n}^-(t)^{\top}n^{-1}\sum_{j=1}^n Y_{n,j}Y_{n,j}^{\top}t\Psi_{4,n}^-(s)^{\top}\left(Y_{n,j}Y_{n,j}^{\top} - \text{I}_d\right)s,} \\ \nonumber
    L^{2,3}_n(s, t) &=& \textstyle{\Psi_{4,n}^-(t)^{\top}n^{-1}\sum_{j=1}^n Y_{n,j}Y_{n,j}^{\top}t\left(\Psi_{1,n}(s) + \frac{1}{2}\Psi_{3,n}^-(s)s\right)^{\top}Y_{n,j},} \\ \nonumber
    L^{2,4}_n(s, t) &=& \textstyle{\frac{1}{2}\Psi_{4,n}^-(t)^{\top}n^{-1}\sum_{j=1}^n \left(Y_{n,j}Y_{n,j}^{\top} - \text{I}_d\right)tY_{n,j}^{\top}\Psi_{2,n}(s)Y_{n,j},} \\ \nonumber
    L^{3,3}_n(s, t) &=& \textstyle{\bigl(2\Psi_{1,n}(t) + \Psi_{3,n}^-(t)t\bigr)^{\top}\bigl(2\Psi_{1,n}(s) + \Psi_{3,n}^-(s)s\bigr),} \\ \nonumber
    L^{3,4}_n(s, t) &=& \textstyle{n^{-1}\sum_{j=1}^n \left(2\Psi_{1,n}(s) + \Psi_{3,n}^-(s)s\right)^{\top}Y_{n,j}Y_{n,j}^{\top}\Psi_{2,n}(t)Y_{n,j},} \\ \nonumber
    L^{4,4}_n(s, t) &=& \textstyle{n^{-1}\sum_{j=1}^n Y_{n,j}^{\top}\Psi_{2,n}(t)Y_{n,j}Y_{n,j}^{\top}\Psi_{2,n}(s)Y_{n,j}.}
\end{eqnarray}
Notice that, by symmetry, $L_n^{i,j} = L_n^{j,i}$ if $i \neq j$.
Since $z(s) = \mu(s)-m(s) = \E[\|X\|^2 {\rm CS}^{+}(s,X)] - m(s)$, a natural estimator of $z(s)$ is
\begin{equation}\label{defznsx}
z_n(s) = \frac{1}{n}\sum_{k=1}^n {\rm CS}^{+}(s,Y_{n,k}) \|Y_{n,k}\|^2 -m(s).
\end{equation}
Writing $\stackrel{\mathbb{P}}{\longrightarrow}$ for convergence in probability, we then have the following result.

\begin{theorem}\label{thmvarest}
Suppose $\mathbb{E} \|X\|^6 < \infty$, $\E(X) =0$ and $\E(XX^\top) = {\rm I}_d$. Let
\begin{equation}\label{defhatsigma2}
\widehat{\sigma}_{n,a}^2 = 4 \iint L_n(s,t) \, z_n(s) \, z_n(t) \, w_a(s) w_a(t) \, {\rm d}s{\rm d}t,
\end{equation}
where $L_n(s,t)$ and $z_n(s)$ are defined in (\ref{deflnst}) and (\ref{defznsx}), respectively. We then have
\[
\widehat{\sigma}_{n,a}^2 \stackrel{\mathbb{P}}{\longrightarrow} \sigma_a^2.
\]
\end{theorem}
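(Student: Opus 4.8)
The plan is to prove the stronger statement that $\widehat{\sigma}_{n,a}^2 \to \sigma_a^2$ almost surely, which in particular yields the asserted convergence in probability. Comparing (\ref{defhatsigma2}) with (\ref{darstsigma2}), it suffices to show
\[
\iint \big{(} L_n(s,t)\, z_n(s)\, z_n(t) - L(s,t)\, z(s)\, z(t) \big{)}\, w_a(s)\, w_a(t) \, \mathrm{d}s\, \mathrm{d}t \longrightarrow 0
\]
almost surely, and I would establish this by a dominated convergence argument carried out pathwise. Two ingredients are required: pointwise (in $(s,t)$) almost sure convergence of the integrand, and an integrable envelope that is uniform in $n$.

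For the pointwise convergence, fix $s,t \in \R^d$. Since $\mu(s) = -\Delta\psi_X^{+}(s)$, Lemma \ref{lemma6.6} c) gives $z_n(s) = \Psi_{3,n}^{+}(s) - m(s) \to \mu(s)-m(s) = z(s)$ almost surely. For $L_n(s,t)\to L(s,t)$ I would treat the ten distinct blocks $L_n^{i,j}$ in (\ref{deflnij}) separately. Each is a product of factors $\Psi_{k,n}(s),\Psi_{k,n}(t)$, which do not depend on the summation index, may be pulled outside the relevant empirical average, and converge by Lemma \ref{lemma6.6}, together with sample means of bounded-degree polynomials in $Y_{n,j}$ (possibly weighted by $\mathrm{CS}^{\pm}$). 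By exactly the reasoning used in the proof of Lemma \ref{lemma6.6}, that is, reducing the scaled residuals $Y_{n,j}=S_n^{-1/2}(X_j-\overline{X}_n)$ to the $X_j$ via $\overline{X}_n \to 0$ and $S_n \to \mathrm{I}_d$ almost surely and then invoking the strong law, these sample means converge almost surely to the corresponding expectations under $X$. Hence $L_n^{i,j}(s,t) \to \E[v_i(s,X)v_j(t,X)] = L^{i,j}(s,t)$, so $L_n(s,t) \to L(s,t)$. A routine Fubini argument shows that all these convergences may be assumed, on a single event of probability one, to hold for Lebesgue-almost all $(s,t)$.

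For the envelope I would exploit the identity $\sum_{j=1}^n \|Y_{n,j}\|^2 = nd$ (already used in the proof of Theorem \ref{thmainfty}) together with $|\mathrm{CS}^{\pm}(t,x)| \le \sqrt{2}$. These yield the uniform bound $|z_n(s)| \le \sqrt{2}\, d + \sup_{u\in\R^d}|m(u)|$, a finite constant independent of $s$ and $n$. The same two facts, combined with the Cauchy--Schwarz inequality, bound each $\Psi_{k,n}(\cdot)$ uniformly in its argument by a random quantity that is a polynomial in $d$ and in the empirical moments $n^{-1}\sum_{j}\|Y_{n,j}\|^{r}$, $r\le 4$; since these empirical moments converge almost surely (this is where $\E\|X\|^6<\infty$ is used), they are, for almost every realization, bounded over all $n$. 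Inspecting the explicit forms of the $v_{n,i}$, the only source of growth in $s$ or $t$ are the single explicit factors appearing in $v_{n,2}$ and $v_{n,3}$, so tracking the blocks $L_n^{i,j}$ gives, for almost every $\omega$, a deterministic bound
\[
\big{|} L_n(s,t)\, z_n(s)\, z_n(t)\big{|} \le C(\omega)\,(1+\|s\|)(1+\|t\|)
\]
valid for all $n$. As $(1+\|s\|)(1+\|t\|)\,w_a(s)\,w_a(t)$ is integrable by the Gaussian decay of $w_a$, the envelope is as required.

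With both ingredients in hand, the dominated convergence theorem, applied pathwise for almost every $\omega$, yields $\widehat{\sigma}_{n,a}^2 \to \sigma_a^2$ almost surely, and thus $\widehat{\sigma}_{n,a}^2 \stackrel{\PP}{\longrightarrow} \sigma_a^2$. I expect the envelope step to be the main obstacle: one must check, block by block, that after pulling out the uniformly bounded factors $\Psi_{k,n}$ the remaining empirical averages grow at most linearly in each of $\|s\|$ and $\|t\|$, and that every empirical moment arising is stabilized over $n$ by the almost sure convergence guaranteed under $\E\|X\|^6<\infty$.
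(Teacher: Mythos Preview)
Your approach is correct and genuinely different from the paper's. The paper does not use dominated convergence on the full integrand; instead it introduces intermediate quantities $L_{n,0}^{i,j}(s,t)$ and $z_{n,0}(s)$, obtained by replacing every $Y_{n,k}$ with $X_k$, and splits the problem into (i) showing $\widehat{\sigma}_{n,0,a}^{i,j}\to\sigma_a^{i,j}$ by Fubini and the ordinary strong law, and (ii) showing $\widehat{\sigma}_{n,a}^{i,j}-\widehat{\sigma}_{n,0,a}^{i,j}=o_\PP(1)$ by bounding each difference $L_n^{i,j}-L_{n,0}^{i,j}$ and $z_n-z_{n,0}$ through Taylor expansions of ${\rm CS}^{\pm}$ and Proposition~\ref{propossumseltaxx}. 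In other words, the paper first decouples the ``standardization error'' $Y_{n,j}-X_j$ from the ``sampling error'' and handles them in two separate steps, whereas you absorb both into a single pathwise dominated convergence argument.

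What your route buys is economy: once the envelope $C(\omega)(1+\|s\|)(1+\|t\|)$ is secured, everything collapses into one limit. What the paper's route buys is transparency about where $\E\|X\|^6<\infty$ enters (namely through Proposition~\ref{propossumseltaxx} in the difference step) and a cleaner separation of the two sources of error. Two small points to tighten in your write-up: first, the pointwise convergence for blocks such as $L_n^{1,1}$, which involve $n^{-1}\sum_j\|Y_{n,j}\|^4{\rm CS}^{\pm}(\cdot,Y_{n,j})$, is not literally covered by Lemma~\ref{lemma6.6} and needs the same Taylor-expansion-plus-Proposition~\ref{propossumseltaxx} argument the paper uses (this is where the sixth moment is genuinely needed, since terms like $n^{-1}\sum_j\|X_j\|^4\|\Delta_{n,j}\|^2$ arise); second, your envelope constant $C(\omega)$ involves $n^{-1}\sum_j\|Y_{n,j}\|^4$, whose almost-sure boundedness follows from $\E\|X\|^4<\infty$ via the representation $\|Y_{n,j}\|\le\|S_n^{-1/2}\|_{\rm sp}(\|X_j\|+\|\overline{X}_n\|)$ rather than from Lemma~\ref{lemma6.6} itself.
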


The proof of Theorem \ref{thmvarest} is given in Section \ref{secappendix}.\\

Under the conditions of Theorem \ref{thmvarest}, Theorem \ref{themfixed2} and Sluzki's lemma yield
\begin{equation}\label{neighborh}
\frac{\sqrt{n}}{\widehat{\sigma}_{n,a}}\left( \frac{T_{n,a}}{n} - \Delta_a \right) \vertk {\rm N}(0,1) \qquad \text{as } n \to \infty,
\end{equation}
provided that $\sigma_a^2 >0$. We thus obtain the following asymptotic confidence interval for $\Delta_a$.

\begin{corollary}
Let $\alpha \in (0, 1)$, and write $\Phi_{1-\alpha/2}$ for the $(1-\alpha/2)$-quantile of the standard normal law. Then
\[I_{n,a,\alpha} := \left[\frac{T_{n,a}}{n} - \frac{\widehat{\sigma}_{n,a}}{\sqrt{n}}\Phi_{1-\alpha/2}, \frac{T_{n,a}}{n} + \frac{\widehat{\sigma}_{n,a}}{\sqrt{n}}\Phi_{1-\alpha/2}\right]\]
is an asymptotic confidence interval for $\Delta_a$.
\end{corollary}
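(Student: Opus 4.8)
The plan is to read the corollary as an immediate consequence of the studentized limit theorem (\ref{neighborh}), by expressing the coverage event $\{\Delta_a \in I_{n,a,\alpha}\}$ as a two-sided bound on the standardized statistic. Recall that to say $I_{n,a,\alpha}$ is an asymptotic confidence interval for $\Delta_a$ at level $1-\alpha$ means precisely that $\PP(\Delta_a \in I_{n,a,\alpha}) \to 1-\alpha$ as $n \to \infty$. So the whole task reduces to computing the limiting coverage probability.

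First I would rewrite the coverage event. On $\{\widehat{\sigma}_{n,a} > 0\}$, subtracting $T_{n,a}/n$ from each side of the two defining inequalities and multiplying by $\sqrt{n}/\widehat{\sigma}_{n,a}$ shows that $\Delta_a \in I_{n,a,\alpha}$ is equivalent to
\[
\left| \frac{\sqrt{n}}{\widehat{\sigma}_{n,a}} \left(\frac{T_{n,a}}{n} - \Delta_a\right)\right| \le \Phi_{1-\alpha/2}.
\]
Next I would invoke (\ref{neighborh}): under the stated assumptions, with $\sigma_a^2 > 0$, the standardized statistic converges in distribution to ${\rm N}(0,1)$. Since the standard normal distribution function $\Phi$ is continuous, the boundary $\{\pm \Phi_{1-\alpha/2}\}$ carries no mass, and the portmanteau theorem applied to the closed interval $[-\Phi_{1-\alpha/2},\Phi_{1-\alpha/2}]$ gives
\[
\PP\!\left(\left|\frac{\sqrt{n}}{\widehat{\sigma}_{n,a}}\left(\frac{T_{n,a}}{n} - \Delta_a\right)\right| \le \Phi_{1-\alpha/2}\right) \longrightarrow \Phi(\Phi_{1-\alpha/2}) - \Phi(-\Phi_{1-\alpha/2}) = 1-\alpha,
\]
where the last equality uses $\Phi(\Phi_{1-\alpha/2}) = 1-\alpha/2$ together with the symmetry $\Phi(-x) = 1-\Phi(x)$.

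The only (minor) obstacle is the studentization itself: the equivalence above holds only where $\widehat{\sigma}_{n,a} > 0$, and the interval is not even well defined otherwise. I would dispose of this by noting that Theorem \ref{thmvarest} yields $\widehat{\sigma}_{n,a}^2 \stackrel{\mathbb{P}}{\longrightarrow} \sigma_a^2 > 0$, so $\PP(\widehat{\sigma}_{n,a} > 0) \to 1$ and the complementary event is asymptotically negligible and does not affect the limiting coverage. Beyond this and the standing assumption $\sigma_a^2 > 0$ under which (\ref{neighborh}) is stated, the argument is a routine rearrangement with no substantive difficulty.
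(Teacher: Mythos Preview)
Your argument is correct and is exactly the standard derivation the paper has in mind: the corollary is stated without proof because it follows immediately from (\ref{neighborh}) by rewriting the coverage event as a two-sided bound on the studentized statistic, precisely as you do. Your handling of the event $\{\widehat{\sigma}_{n,a}>0\}$ via Theorem~\ref{thmvarest} is also the natural way to make the studentization rigorous.
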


\begin{example}
We consider the case $d=1$, $a=0.1$ and the following standardized symmetric alternative distributions:
        the uniform distribution U$\left(-\sqrt{3}, \sqrt{3}\right)$, the Laplace distribution
         L$\left(0, 1/\sqrt{2}\right)$, and the logistic distribution Lo$\left(0, \sqrt{3}/\pi\right)$.
         The corresponding characteristic functions and their second derivatives are given by
\begin{align*}
    \varphi_{{\rm U}}(t) &= \frac{\sin(\sqrt{3}t)}{\sqrt{3}t}, & \varphi_{{\rm U}}''(t) &= \frac{\sqrt{3}(2 - 3t^2)\sin(\sqrt{3}t) - 6t\cos(\sqrt{3}t)}{3t^3}, \\
    \varphi_{{\rm L}}(t) &= \frac{2}{2 + t^2}, & \varphi_{{\rm L}}''(t) &= \frac{12t^2 - 8}{(2 + t^2)^3}, \\
    \varphi_{{\rm Lo}}(t) &= \frac{\sqrt{3}t}{\sinh(\sqrt{3}t)}, & \varphi_{{\rm Lo}}''(t) &= \frac{3}{2}\frac{3\sqrt{3}t - 2\sinh(2\sqrt{3}t) + \sqrt{3}t\cosh(2\sqrt{3}t)}{\sinh(\sqrt{3}t)^3}.
\end{align*}
The pertaining values of $\Delta_{0.1}$ are $\Delta_{0.1,{\rm U}} \approx \text{0.3322}$, $\Delta_{0.1,{\rm L}} \approx \text{0.127}$ and $\Delta_{0.1,{\rm Lo}} \approx \text{0.033}$.
Table \ref{Tab:covering} shows the percentages of coverage of the confidence intervals $I_{n,0.1,\alpha}$
for $\alpha = \text{0.05}$ and several sample sizes. Each entry is based on 5000 Monte-Carlo-replications.
The results are quite satisfactory for $n \ge 50$.

\begin{table}[t]
    \centering
    \renewcommand{\arraystretch}{1.1}
    \begin{tabular}{c|c|c|c}

    $n$ & U$\left(-\sqrt{3}, \sqrt{3}\right)$ & L$\left(0, 1/\sqrt{2}\right)$ & Lo$\left(0, \sqrt{3}/\pi\right)$ \\
    \hline
    $20$ & 91.5 & 87.5 & 83.4 \\
    $50$ & 93.6 & 96.9 & 95.2 \\
    $100$ & 94.4 & 97.8 & 98.4 \\
    $200$ & 94.8 & 97.9 & 99.0 \\
    $300$ & 94.5 & 97.5 & 98.9 \\

    \end{tabular}
    \caption{Percentages of coverage of $\Delta_{0.1}$ by $I_{n,0.1,\alpha}$ for different distributions (5000 replications)}\label{Tab:covering}
\end{table}
\end{example}

\begin{remark} A further application of (\ref{neighborh}) addresses a fundamental drawback inherent in any goodness of fit test, see \cite{BEH:2017}.
If a level-$\alpha$-test of $H_0$ does not reject $H_0$, the hypothesis $H_0$ is by no means {\em validated} or {\em confirmed}, since each model is wrong, and
there is probably only not enough evidence to reject $H_0$. However, if we define a certain "essential distance"
 $\delta_0 >0$, we can consider the inverse testing problem
\[
H_{\delta_0}: \Delta_a(F) \ge \delta_0 \  \textrm{ against } \ K_{\delta_0}: \Delta_a(F) < \delta_0.
\]
Here, the dependence of $\Delta_a$ on the underlying distribution has been made explicit.

From \eqref{neighborh}, we obtain an asymptotic level-$\alpha$-{\em neighborhood-of-model-validation}
 test of $H_{\delta_0}$ against $K_{\delta_0}$, which rejects $H_{\delta_0}$ if and only if
\[
\frac{T_{n.a}}{n} \ \le \ \delta_0 - \frac{\widehat{\sigma}_{n,a}}{\sqrt{n}} \Phi^{-1}(1-\alpha).
\]
Indeed, from  \eqref{neighborh} we have for each $F \in H_{\delta_0}$
\begin{eqnarray*}
& &
\limsup_{n\to \infty} \PP_F \! \left(\frac{T_{n.a}}{n}   \le  \delta_0 - \frac{\widehat{\sigma}_{n,a}}{\sqrt{n}} \Phi^{-1}(1\! -\! \alpha)\! \right)\\
  &  =  &  \limsup_{n\to \infty} \PP_F \! \left(\frac{\sqrt{n}}{\widehat{\sigma}_{n,a}}\! \left(\frac{T_{n,a}}{n} \!  - \! \delta_0\! \right) \le -\Phi^{-1}(1\! -\! \alpha) \! \right)\\
& \le &
 \alpha .
\end{eqnarray*}
Moreover, it follows that
\[
\lim_{n\to \infty}\PP_F \left(\frac{T_{n,a}}{n}  \le  \delta_0 - \frac{\widehat{\sigma}_{n,a}}{\sqrt{n}} \Phi^{-1}(1-\alpha)\right) \ = \ \alpha
\]
for each $F$ such that $\Delta_a(F) = \delta_0$. Finally, we have
\[
\lim_{n\to \infty} \PP_F\left(\frac{T_{n,a}}{n}  \le  \delta_0 - \frac{\widehat{\sigma}_{n,a}}{\sqrt{n}} \Phi^{-1}(1-\alpha) \right) = 1
\]
if $\Delta_a(F) < \delta_0$. Thus, the test is consistent against each alternative.
\end{remark}

\begin{remark}
The double integral figuring (\ref{defhatsigma2}) may be calculated explicitly. To this end, notice that
$\widehat{\sigma}_{n,a}^2 = \sum_{i,j=1}^4 \widehat{\sigma}_{n,a}^{i,j}$,
  where
\begin{equation}\label{defsigmanij}
\widehat{\sigma}_{n,a}^{i,j} = 4 \iint L_n^{i,j}(s,t) z_n(s)z_n(t) w_a(s)w_a(t) \, {\rm d}s{\rm d}t
\end{equation}
and $\widehat{\sigma}_{n,a}^{i,j} = \widehat{\sigma}_{n,a}^{j,i}$, $i,j \in \{1,\ldots,4\}$, by symmetry.
We put
\begin{align*}
    q_{1,a}(y) &:= \int m(t) {\rm CS}^+(t, y)w_a(t){\rm d}t = \frac{(2\pi)^{d/2}}{(2a+1)^{d/2+2}}\bigl(\|y\|^2 + 2da(2a+1)\bigr)\exp\left(-\frac{1}{2}\frac{\|y\|^2}{2a+1}\right), \\
    p_{1,a}(y,z) &:= \int {\rm CS}^+(t, y){\rm CS}^+(t, z)w_a(t){\rm d}t = \left(\frac{\pi}{a}\right)^{d/2}\exp\left(-\frac{\|y - z\|^2}{4a}\right), \\
    p_{2,a}(y,z) &:= \int {\rm CS}^+(t, y) {\rm CS}^-(t,z)tw_a(t){\rm d}t = \left(\frac{\pi}{a}\right)^{d/2}\frac{1}{2a}\exp\left(-\frac{\|y - z\|^2}{4a}\right)(y-z), \\
    q_{2,a}(y) &:= \displaystyle\int m(t){\rm CS}^-(t, y)tw_a(t){\rm d}t \\
    &= \frac{(2\pi)^{d/2}}{(2a+1)^{d/2+3}}\left(2(2a+1)(1-ad) - \|y\|^2\right)\exp\left(-\frac{1}{2}\frac{\|y\|^2}{2a+1}\right)y
\end{align*}
for $y,z \in \mathbb{R}^d$ and
\begin{eqnarray*}
    P_n^{1,a,1} & := & \frac{1}{n^2} \sum_{j,k=1}^n \|Y_j\|^2 \|Y_k\|^2 p_{1,a}(Y_j, Y_k) - \frac{1}{n}\sum_{j=1}^n \|Y_j\|^2 q_{1,a}(Y_j), \\
    \widetilde{P}_n^{1,a,1} & := & \frac{1}{n^2}\sum_{j,k=1}^n \|Y_j\|^2\|Y_k\|^2p_{1,a}(Y_j,Y_k)Y_k - \frac{1}{n}\sum_{j=1}^n \|Y_j\|^2q_{1,a}(Y_j)Y_j, \\
    \widetilde{P}_n^{1,a,2} & := & \frac{1}{n^2}\sum_{j,k=1}^n \|Y_j\|^2p_{1,a}(Y_j, Y_k)Y_k - \frac{1}{n}\sum_{j=1}^n q_{1,a}(Y_j)Y_j, \\
    \overline{P}_n^{1,a} & := & \frac{1}{n^2}\sum_{j,k=1}^n \|Y_j\|^2\|Y_k\|^2p_{1,a}(Y_j, Y_k)Y_kY_k^{\top} - \frac{1}{n} \sum_{j=1}^n \|Y_j\|^2Y_jY_j^{\top}q_{1,a}(Y_j), \\
    P_n^{1,a,2}(Y_j) & := & \frac{1}{n^2}\sum_{k,\ell =1}^n   \|Y_k\|^2\left(Y_j^{\top}Y_\ell \right)^2p_{1,a}(Y_k, Y_l) - \frac{1}{n}\sum_{k=1}^n \left(Y_j^{\top}Y_k\right)^2q_{1,a}(Y_k), \\
    P_n^{1,a,3}(Y_j) & := & \frac{1}{n}\sum_{k=1}^n \|Y_k\|^2p_{1,a}(Y_j, Y_k) - q_{1,a}(Y_j), \\
    P_n^{2,a,1} & := & \frac{1}{n^2}\sum_{j,k=1}^n \|Y_j\|^2\|Y_k\|^2Y_k^{\top}p_{2,a}(Y_j, Y_k) - \frac{1}{n}\sum_{j=1}^n \|Y_j\|^2Y_j^{\top}q_{2,a}(Y_j), \\
    P_n^{2,a,2} & := & \frac{1}{n^2}\sum_{j,k=1}^n \|Y_j\|^2\|Y_k\|^2Y_k^{\top}\overline{P}_{1,a} p_{2,a}(Y_j, Y_k) - \frac{1}{n}\sum_{j=1}^n \|Y_j\|^2Y_j^{\top}\overline{P}_{1,a}q_{2,a}(Y_j), \\
    \widetilde{P}_n^{2,a} & := & \frac{1}{n^2}\sum_{j,k=1}^n \|Y_j\|^2\|Y_k\|^2p_{2,a}(Y_j, Y_k) - \frac{1}{n}\sum_{j=1}^n \|Y_j\|^2q_{2,a}(Y_j), \\
    P_n^{2,a,3}(Y_j) & := & \frac{1}{n^2}\sum_{k, \ell =1}^n  \|Y_k\|^2\|Y_\ell\|^2Y_k^{\top}Y_jY_j^{\top}p_{2,a}(Y_k, Y_\ell) - \frac{1}{n}\sum_{k=1}^n \|Y_k\|^2Y_k^{\top}Y_jY_j^{\top}q_{2,a}(Y_k),
\end{eqnarray*}
where $Y_j$ is shorthand for $Y_{n,j}$.
Notice that $\widetilde{P}_n^{1,a,1}$, $\widetilde{P}_n^{1,a,2}$ and $\widetilde{P}_n^{2,a}$ are vectors and
 $\overline{P}_n^{1,a}$ is a matrix. By straightforward but tedious manipulations of the integrals in (\ref{defsigmanij}), each $\widehat{\sigma}_n^{i,j}$
 is seen to be an  arithmetic mean of functions of the scaled residuals, namely:
\begin{eqnarray*}
    \widehat{\sigma}^{1,1}_{n,a} & = & \frac{4}{n}\sum_{j=1}^n \|Y_j\|^4P_n^{1,a,3}(Y_j)^2, \quad
    \widehat{\sigma}^{1,2}_{n,a} = 2P_n^{1,a,1}P_n^{2,a,1} - 2P_n^{2,a,2}, \\
    \widehat{\sigma}^{1,3}_{n,a} & = & -4\left(2\widetilde{P}_n^{1,a,2} + \widetilde{P}_n^{2,a}\right)^{\top}\widetilde{P}_n^{1,a,1}, \quad
    \widehat{\sigma}^{1,4}_{n,a} = -\frac{4}{n}\sum_{j=1}^n \|Y_j\|^2P_n^{1,a,2}(Y_j)P_n^{1,a,3}(Y_j),    \\
        \widehat{\sigma}^{2,2}_{n,a} & = & \frac{1}{n}\sum_{j=1}^n P_{2,a,3}(Y_j)^2 - (P_n^{2,a,1})^2, \quad
    \widehat{\sigma}^{2,3}_{n,a} = \frac{4}{n}\sum_{j=1}^n P_{2,a,3}(Y_j)\left(\widetilde{P}_n^{1,a,2} + \frac{1}{2}\widetilde{P}_n^{2,a}\right)^{\top}Y_j,  
      \end{eqnarray*}
    \begin{eqnarray*}
    \widehat{\sigma}^{2,4}_{n,a} & = & \frac{2}{n}\sum_{j=1}^n \bigl(P_n^{2,a,3}(Y_j) - P_n^{2,a,1}\bigr)P_n^{1,a,2}(Y_j), \quad
    \widehat{\sigma}^{3,3}_{n,a} = 4 \cdot \left\|2\widetilde{P}_n^{1,a,2} + \widetilde{P}_n^{2,a}\right\|^2, \\
    \widehat{\sigma}^{3,4}_{n,a} & = & \frac{4}{n}\sum_{j=1}^n P_n^{1,a,2}(Y_j)\left(2\widetilde{P}_n^{1,a,2} + \widetilde{P}_n^{2,a}\right)^{\top}Y_j, \quad
    \widehat{\sigma}^{4,4}_{n,a} = \frac{4}{n}\sum_{j=1}^n P_n^{1,a,2}(Y_j)^2.
\end{eqnarray*}
\end{remark}

%
%
%

\section{Simulations}\label{secsimul}
In this section, we present the results of a Monte Carlo simulation study on the finite-sample power performance of the test based on $T_{n,a}$ with that of
 several competitors. Since different procedures have been used for univariate and multivariate data, we distinguish the cases $d=1$ and $d \ge 2$.
 In the univariate case, the sample sizes are $n\in\{20,50,100\}$, and in the multivariate case
  we restrict the simulations to $n\in\{20,50\}$. The nominal level of significance is fixed throughout all simulations to $0.05$.
  Critical values for $T_{n,a}$  (in fact, for a scaled version of $T_{n,a}$ in order to obtain values of a similar magnitude across the range of values for $d$ and $a$ considered)
  have been simulated under $H_0$ with 100~000 replications, see Table \ref{Tab:emp.crit}. The critical values in the rows in Table \ref{Tab:emp.crit} denoted by "$\infty$" represent approximations of quantiles of the distribution of the limit random element $T_{\infty,a}=\int Z^2(t) \, w_a(t) \, {\rm d}t$ in Theorem \ref{thmsec3main} b). To obtain these values we simulate (say) $m$ i.i.d. random supporting points $U_1,\ldots,U_m$ where $U_1\sim\mbox{N}_d(0,(2a)^{-1}{\rm I}_d)$, which are adapted to the integration over $\R^d$ with respect to the weight function $w_a(t)=\exp(-a\|t\|^2)$  and a large number (say) $\ell$ of random variables $Z_j=\frac1{d^2m}\|X_j\|^2$, $j=1,\ldots,\ell$, with i.i.d. $X_j\sim \mbox{N}_m(0,\Sigma_K)$, where the covariance matrix $\Sigma_k$ is given by $\Sigma_K=\left(K(U_{k_1},U_{k_2})\right)_{k_1,k_2\in\{1,\ldots,m\}}$ and $K$ is the covariance kernel in Theorem \ref{thmsec3main} a). Next, we calculate the empirical $95\%$ quantile of $Z_1,\ldots,Z_\ell$. Each approximation was simulated with $\ell=100,000$ and $m=1,000$ for $d\in \{2,3,5,10\}$ and each entry in Tables \ref{pow.T.1} - \ref{pow.T.5}, which exhibit percentages of rejection of $H_0$ of the tests under consideration against various alternative distributions,  is based on 10~000 replications. Entries that are marked with $\ast$ denote $100\%$.
  The simulations have been performed using the statistical computing environment {\tt R}, see \cite{R:2019}.

\begin{table}[t]
\centering
\begin{tabular}{rr|rrrrrr}
$d$ & $n\backslash a$ & 0.25 & 0.5 & 1 & 1.5 & 2 & 3 \\
  \hline
  \multirow{4}{*}{1}& 20 & 2.730 & 2.085 & 1.597 & 1.420 & 1.320 & 1.180 \\
   & 50 & 2.950 & 2.299 & 1.761 & 1.566 & 1.455 & 1.301 \\
  & 100 & 2.954 & 2.323 & 1.798 & 1.583 & 1.474  & 1.322 \\
  & $\infty$ & 2.839 & 2.348 & 1.765 & 1.581 & 1.495  & 1.289\\
  \hline
  \multirow{4}{*}{2}& 20 & 1.916 & 1.482 & 0.957 & 0.726 & 0.622 & 0.529 \\
   & 50 & 1.978 & 1.551 & 1.039 & 0.800 & 0.689 & 0.592 \\
  & 100 & 1.963 & 1.543 & 1.044 & 0.815 & 0.707 & 0.608 \\
  & $\infty$ & 2.018 & 1.489 & 1.013 & 0.794 & 0.674  & 0.598\\
   \hline
   \multirow{4}{*}{3}& 20 & 1.610 & 1.304 & 0.841 & 0.591 & 0.461 & 0.351 \\
   & 50 &1.662 & 1.357 & 0.904 & 0.654 & 0.521 & 0.405 \\
  &100 & 1.649 & 1.348 & 0.899 & 0.656 & 0.526 & 0.414 \\
  & $\infty$ & 1.617 & 1.314 &  0.857 & 0.635 & 0.505  & 0.408\\
  \hline
  \multirow{4}{*}{5} & 20 & 1.360 & 1.201 & 0.835 & 0.575 & 0.417 & 0.263 \\
   & 50 &  1.421 & 1.261 & 0.903 & 0.645 & 0.482 & 0.315 \\
  & 100 &  1.421 & 1.259 & 0.901 & 0.647 & 0.486 & 0.323 \\
  & $\infty$ & 1.437 &  1.260 &  0.862 & 0.659 & 0.485  & 0.312\\
  \hline
  \multirow{4}{*}{10}& 20 & 1.129 & 1.109 & 0.960 & 0.750 & 0.569 & 0.337 \\
  & 50 &  1.205 & 1.181 & 1.033 & 0.832 & 0.652 & 0.409 \\
  & 100 & 1.219 & 1.194 & 1.043 & 0.842 & 0.664 & 0.422 \\
  & $\infty$ & 1.284 &  1.258 &  1.061 & 0.867 & 0.696  & 0.427
\end{tabular}
\caption{Empirical and approximate quantiles for $d^{-2}\left(\frac{a}{\pi}\right)^{d/2}T_{n,a}$ and $\alpha=0.05$ (100~000 replications)}\label{Tab:emp.crit}
\end{table}

\subsection{Testing univariate normality}
For testing the hypothesis $H_0$ that the distribution of $X$ is univariate normal (i.e., $\mathbb{P}^X \in \mathcal{N}_1$),
we considered the following competitors to the new test statistic:
\begin{enumerate}
\item the Shapiro--Wilk test (SW), see \cite{SW:1965},
\item the Shapiro--Francia test (SF), see \cite{SF:1972},
\item the Anderson--Darling test (AD), see \cite{AD:1952},
\item the Baringhaus--Henze--Epps--Pulley test (BHEP), see \cite{HW:1997},
\item the del Barrio--Cuesta-Albertos--M\'{a}tran--Rodr\'{i}guez-Rodr\'{i}guez test (BCMR), see \cite{DBCMRR:1999},
\item the Betsch--Ebner test (BE), see \cite{BE:2019}.
\end{enumerate}
For the implementation of the first three tests in {\tt R}
we refer to the package {\tt nortest} by \cite{GL:2015}. Tests based on the empirical characteristic function
 are represented by the BHEP-test with tuning parameter $a > 0$. The statistic is given in (\ref{BHEP.test}),
 $a = 1$ is fixed, and the critical values are taken from \cite{Hen:1990}.

The BCMR-test is based on the $L^2$-Wasserstein distance, see section 3.3 in \cite{delBarrio2000}, and is given by
\begin{equation*}
	{\rm BCMR} =
	n \left( 1 - \frac{1}{S_n^2} \left( \sum_{k = 1}^n X_{(k)} \int_{\frac{k - 1}{n}}^{\frac{k}{n}} \Phi^{-1}(t) \, \mathrm{d}t \right)^2 \right) - \int_{\frac{1}{n + 1}}^{\frac {n}{n + 1}} \frac{t (1 - t)}{\left( \varphi\left( \Phi^{-1}(t) \right) \right)^2} \, \mathrm{d}t.
\end{equation*}
Here, $X_{(k)}$ is the $k$-th order statistic of $X_1, \dots, X_n$, $S_n^2$ is the sample variance,
 and $\Phi^{-1}$ is the quantile function of the standard normal law. Simulated critical values can be found in \cite{K:2009}, or in Table 4 of \cite{BE:2019}.

The BE-test is based on a $L^2$-distance between the empirical zero-bias transformation and the empirical distribution.
Since the zero-bias transformation has a unique fixed point under normality, this distance is minimal under $H_0$. The statistic is given by
\begin{eqnarray*} \label{explicit formula stat1}
	{\rm BE}_a & = &  \frac{2}{n} \sum\limits_{1 \leq j < k \leq n} \left\{ \vphantom{\exp\left(- \tfrac{Y_{(k)}^2}{2 a}\right)} \left( 1 - \Phi\left( \tfrac{Y_{(k)}}{\sqrt{a}} \right) \right) \left( (Y_{(j)}^2 - 1)(Y_{(k)}^2 - 1) + a Y_{(j)} Y_{(k)} \right) \right.\\
	 & &\left. + \frac{a}{\sqrt{2 \pi a}} \, \exp\left(- \tfrac{Y_{(k)}^2}{2 a}\right) \left( - Y_{(j)}^2 Y_{(k)} + Y_{(k)} + Y_{(j)} \right)
	\right\} \\
	& & + \frac{1}{n} \sum\limits_{j=1}^{n} \left\{ \vphantom{\frac{a}{\sqrt{2 \pi a}}} \left( 1 - \Phi\left(\tfrac{Y_{j}}{\sqrt{a}}\right) \right) \left( Y_j^4 + (a - 2) Y_j^2 + 1 \right) \right. \\
	& &\left. + \frac{a}{\sqrt{2 \pi a}} \, \exp\left(- \tfrac{Y_j^2}{2 a}\right) \left( 2 Y_j - Y_j^3 \right) \right\},
\end{eqnarray*}
where $Y_{1}, \dots, Y_{n}$ is shorthand for the scaled residuals $Y_{n, 1}, \dots, Y_{n,n}$, $Y_{(1)} \leq \dotso \leq Y_{(n)}$ are the order statistics
of $Y_1,\ldots,Y_n$, and $\Phi$ stands for the distribution function of the standard normal law.
  The implementation employs a bootstrap procedure to find a data driven version of the tuning parameter $a$, see \cite{AS:2015}.
  We used $B=400$ bootstrap replications and the same grid of tuning parameters as in \cite{BE:2019}, p. 19.

The alternative distributions are chosen to fit the extensive power study of univariate normality tests by \cite{RDC:2010},
 in order to ease the comparison with a wide variety of other existing tests. As representatives of symmetric distributions
 we simulate the Student t$_\nu$-distribution with $\nu \in \{3, 5, 10\}$ degrees of freedom, as well as the uniform distribution U$(-\sqrt{3}, \sqrt{3})$.
 The asymmetric distributions are represented by the $\chi^2_\nu$-distribution with $\nu \in \{5, 15\}$ degrees of freedom,
 the Beta distributions B$(1, 4)$ and B$(2, 5)$, the Gamma distributions $\Gamma(1, 5)$ and $\Gamma(5, 1)$, parametrized by their shape and rate parameter,
 the Gumbel distribution Gum$(1, 2)$ with location parameter 1 and scale parameter 2, the lognormal distribution LN$(0, 1)$
 as well as the Weibull distribution W$(1, 0.5)$ with scale parameter 1 and shape parameter 0.5.
 As representatives of bimodal distributions we take the mixture of normal distributions NMix$(p, \mu, \sigma^2)$, where the random variables are generated by
\begin{equation*}
	(1 - p) \, {\rm N}(0, 1) + p \, {\rm N}(\mu, \sigma^2), \quad p \in (0, 1), \, \mu \in \R, \, \sigma > 0.
\end{equation*}

Table \ref{pow.T.1} shows that the empirical power estimates of the new test $T_{n,a}$ outperform the other strong
procedures for the symmetric $t$-distribution,  and they can compete for most of the other alternatives.
Interestingly, the power does not differ too much when varying the tuning parameter $a$,
although an effect is clearly visible, especially for the uniform distribution.
A data driven choice as in \cite{AS:2015} might be of benefit also in connection with the new testing procedure.

\subsection{Testing multivariate normality}
In this subsection we consider testing the hypothesis $H_0$ that the distribution of $X$ is multivariate normal (i.e., belongs to $\mathcal{N}_d$),
for the dimensions $d \in \{2,3,5\}$. As competitors to the new test statistic we chose
\begin{enumerate}
\item the Henze--Visagie test (HV), see \cite{HV:2019},
\item the Henze--Jim\'{e}nez-Gamero test (HJG), see \cite{HJG:2019},
\item the Baringhaus--Henze--Epps--Pulley test (BHEP), see \cite{HW:1997}.
\end{enumerate}
The HV-test uses a weighted $L^2$-type statistic based on a characterization of the moment generating function
that employs a system of first-order partial differential equations. The statistic is defined by
\begin{equation*}
	{\rm HV}_\gamma= \frac{1}{n}\left(\frac{\pi}{\gamma}\right)^{\frac{d}{2}} \sum_{j,k = 1}^n \exp\left( \frac{\|Y_{n,j} + Y_{n,k}\|^2}{4 \gamma} \right)\left(Y_{n,j}^\top Y_{n,k}+\|Y_{n,j} + Y_{n,k}\|^2\left(\frac{1}{4 \gamma^2}-\frac{1}{2 \gamma}\right)+\frac{d}{2\gamma}\right),
\end{equation*}
where $\gamma>2$. We followed the recommendation of the authors in \cite{HV:2019} and fixed $\gamma=5$.
Since the limiting statistic HV$_\infty$ for $\gamma\rightarrow\infty$ is a linear combination of sample skewness
in the sense of Mardia and that of {M\'ori}, {Rohatgi} and {Sz\'ekely}, we also included HV$_\infty$.

The HJG-test uses a weighted $L^2$-distance between the empirical moment generating function of the standardized sample
 and the moment generating function of the standard normal distribution. The test statistic is given by
\begin{equation*}
	{\rm HJG}_\beta= \frac{1}{n\beta^{\frac{d}{2}}} \sum_{j,k = 1}^n \exp\left( \frac{\|Y_{n,j} + Y_{n,k}\|^2}{4 \beta} \right) - \frac{2}{\sqrt{\beta - 1/2}}\sum_{j = 1}^n \exp\left( \frac{\|Y_{n,j}\|^2}{4 \beta - 2}\right)+ \frac{n}{(\beta - 1)^\frac{d}{2}},
\end{equation*}
with $\beta > 0$. In our simulation we fix $\beta=1.5$. For each of the tests based on HV$_{5}$, HV$_\infty$ and HJG$_{1.5}$,
 critical values were simulated with 100~000 replications.

Finally, the now classical BHEP-test examines the weighted $L^2$-distance between the empirical characteristic function
 of the standardized data and the characteristic function of the $d$-variate standard normal distribution. The statistic has the simple form
\begin{eqnarray}\label{BHEP.test}
	{\rm BHEP}_a&=& \frac{1}{n^2} \sum_{j,k = 1}^n \exp\left( -\frac{a^2}2\|Y_{n,j} - Y_{n,k}\|^2 \right)\\&& - 2\left(1+a^2\right)^{-\frac{d}{2}}\frac1n\sum_{j = 1}^n \exp\left( -\frac{a^2\|Y_{n,j}\|^2}{2\left(1+a^2\right)}\right)+ (1+2a^2))^{-\frac{d}{2}},\nonumber
\end{eqnarray}
with a tuning parameter $a>0$. A variety of values of $a$, i.e., $a\in\{0.1,0.25,0.5,0.75,1,2,3,5,10\}$, has been considered.
 Critical values can be found in Tables I - III of \cite{HW:1997}, whereas missing critical values have been simulated separately with 100~000 replications.

The alternative distributions are chosen to fit the simulation study in \cite{HV:2019} and are defined as follows. We denote by NMix$(p,\mu,\Sigma)$ the normal mixture distributions generated by
\begin{equation*}
	(1 - p) \, {\rm N}_d(0, {\rm I}_d) + p \, {\rm N}_d(\mu, \Sigma), \quad p \in (0, 1), \, \mu \in \R^d, \, \Sigma > 0,
\end{equation*}
where $\Sigma > 0$ stands for a positive definite matrix. We write in the notation of above $\mu=3$ for a $d$-variate
vector of 3's and $\Sigma={\rm B}_d$ for a $(d \times d)$-matrix containing 1's on the main diagonal and 0.9's on each off-diagonal entry.
 We denote by $t_\nu(0,{{\rm I}}_d)$ the multivariate $t$-distribution with $\nu$ degrees of freedom, see \cite{AG:2009}.
 By DIST$^d(\vartheta)$ we denote the $d$-variate random vector generated by independently simulated components of the distribution
 DIST with parameter vector $(\vartheta)$, where DIST is taken to be the Cauchy distribution C, the logistic distribution L,
  the Gamma distribution $\Gamma$ as well as the Pearson Type VII  distribution P$_{VII}$, with $\vartheta$ denoting in this specific
   case the degrees of freedom. The spherical symmetric distributions where simulated using the {\tt R} package {\tt distrEllipse},
  see \cite{RKSC:2006}, and are denoted by $\mathcal{S}^d(\mbox{DIST})$, where DIST stands for the distribution of the radii
  and was chosen to be the exponential, the beta and the $\chi^2$-distribution.

From Tables \ref{pow.T.2} to \ref{pow.T.5}, it is obvious that $T_{n,a}$ outperforms the competing tests for most
of the alternatives considered, again showing that the tuning parameter has --compared to the BHEP test-- little effect on the power.
 As in the univariate case $T_{n,a}$ has very strong power against the multivariate $t$-distribution.
 If the radial distribution of the spherical symmetric alternatives has compact support, the BHEP test exhibits
  a better performance than $T_{n,a}$. The HV$_5$- and the HJG$_{1.5}$-test have a good power,  but they are
   mostly dominated by the BHEP-test and the $T_{n,a}$-test. Again, a data driven choice of the tuning parameter $a$
   would be beneficial for the test, but to the best of our knowledge no reliable multivariate method is yet available.
   We suggest to choose a small tuning parameter like $a=0.25$.

%
%
%

\section{Real Data Example: The Iris Data set}\label{secdataex}
In 1936 R.A. Fisher presented the now classical data set called {\tt Iris Flower}, see Table I in \cite{F:1936}.
The data consist of the four variables sepal, length, sepal width, petal length, and petal width, measured
on fifty specimens of each of three types or iris, namely {\it Iris setosa}, {\it Iris versicolor}, and {\it Iris virginica}.
This data set is included in the statistical language {\tt R}, and it can be downloaded from the UCI Machine Learning Repository,
see \cite{DG:2019}. That reference provides a list of articles that use this specific data set to validate clustering methods,
neural networks or learning algorithms, and it presents a typical test case for statistical classification techniques in machine learning,
such as support vector machines. A visualization of two-dimensional projections of the data set is given in Figure \ref{fig:iris}.

\begin{figure}[t]
\centering
\includegraphics[scale=0.4]{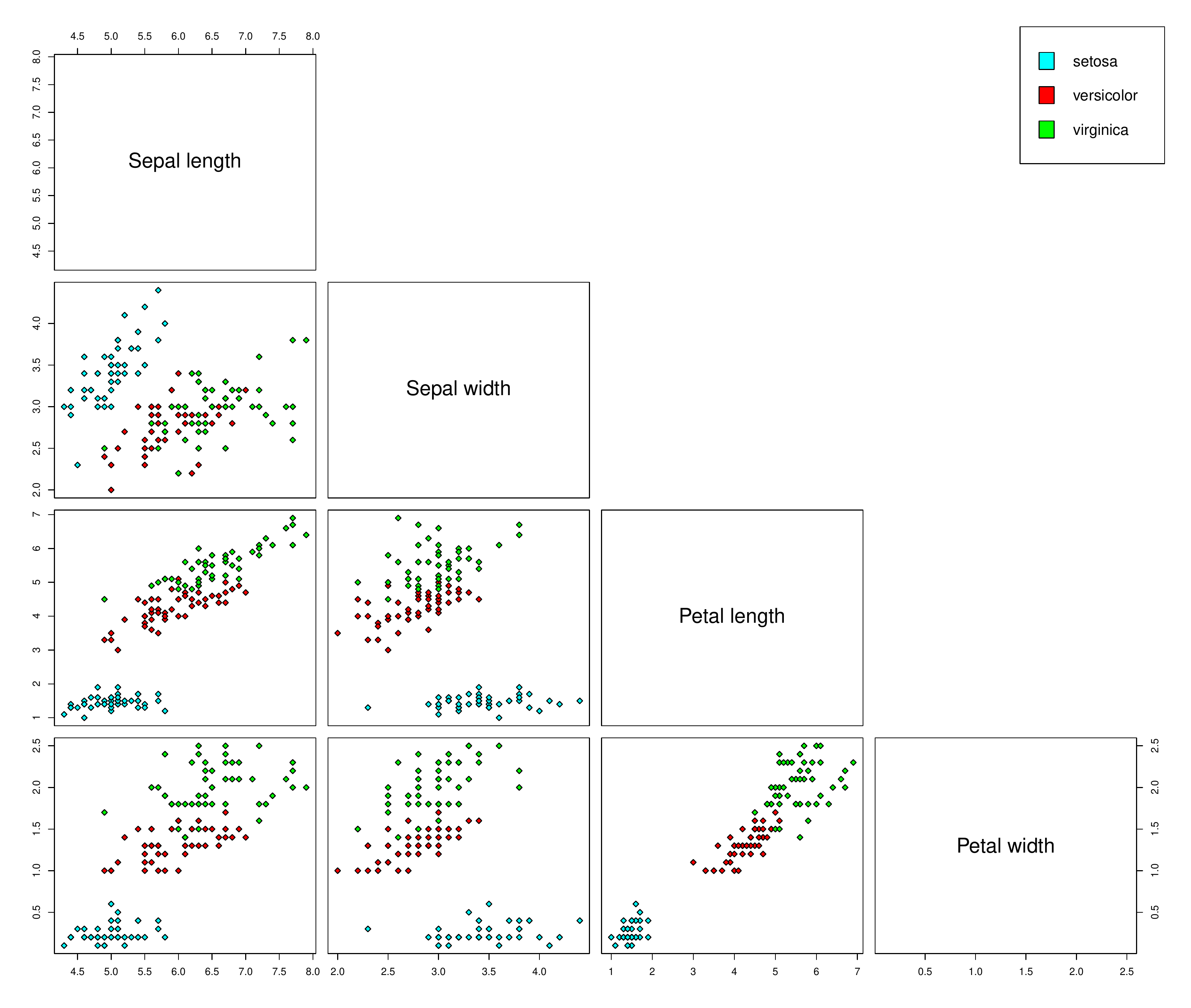}
\caption{2D projections of the {\tt iris} data set with coloured species.}\label{fig:iris}
\end{figure}

\begin{table}[b]
\centering
\begin{tabular}{lr|ccccccc}
Species& $ a$ & 0.25 & 0.5 & 1 & 2 & 3 & 5 & 10\\ \hline
setosa &     & 0.0631 & 0.0706 & 0.0683 & 0.0431 & 0.0386 & 0.0555 & 0.0918 \\
versicolor &  & 0.4402 & 0.3560 & 0.2912 & 0.2766 & 0.2707 & 0.2626 & 0.2573 \\
virginica &  & 0.1943 & 0.1671 & 0.1336 & 0.1385 & 0.1643 & 0.2042 & 0.2071\\ \hline
mixed populations & & 0.0000 & 0.0000 & 0.0000 &0.0000 &0.0012 &0.0048 & 0.0150
\end{tabular}
\caption{Empirical $p$-values for $T_{n,a}$.}\label{Tab:real_data}
\end{table}
In Table \ref{Tab:real_data} we present empirical $p$-values simulated with 10~000 replications.
As can be seen, the test does not reject the hypothesis of normality on a small significance
level (like $\alpha=0.01$) for the different species for each of the tuning parameters considered.
For the {\it Iris setosa} data, however, an increase of the significance level to 0.05
results in a rejection of the hypothesis for $a=2$ and $a=3$. For the whole data set,
 we observe a small $p$-value due to the mixture of the three populations and consequently reject the hypothesis $H_0$.

\section{Concluding remarks}\label{secconclude}
We proved consistency of the test for multivariate normality based on $T_{n,a}$ against each alternative distribution that
 satisfies the moment condition $\E \|X\|^4 < \infty$. Intuitively, the  test should be "all the more consistent" if $\E \|X\|^4 = \infty$.
 In fact, we conjecture consistency of the new test against any non-normal alternative distribution.\\
The limiting random element $T_{\infty,a}=\|Z\|^2_\mathbb{H}$ from Theorem \ref{thmsec3main} b)
has the same distribution as $\sum_{j=1}^\infty\lambda_jN_j^2$, where the $N_j$ are i.i.d. standard normal random variables,
 and the $\lambda_j$ are the positive eigenvalues corresponding to eigenfunctions of the linear integral operator
 $Kf(s)=\int_{\R^d}K(s,t)f(t)\, w_a(t) \, {\rm d}t$ associated with the covariance kernel $K$ from Theorem \ref{thmsec3main} a), i.e.,
  we have $\lambda f(s)=\int_{\R^d}K(s,t)f(t)\, w_a(t) \, {\rm d}t.$ These positive eigenvalues clearly depend on $K$
  and the weight function $w_a$. It is hardly possible to obtain a closed form expression for $\lambda_1,\lambda_2,\ldots$. It would be interesting to approximate the eigenvalues either numerically or by some Monte Carlo method, since the largest eigenvalue has a crucial influence on the approximate Bahadur efficiency, see \cite{B:1960} and the monograph \cite{N:1995}, as well as \cite{HNE:2009} for results on distribution-free $L_p$-type statistics.

\section*{Acknowledgement}
The authors thank Ioannis Anapolitanos for his help in proving Theorem \ref{unique.sol} and for sharing his knowledge of Schr\"{o}dinger operators.

%
%
%

\begin{appendix}
\section{Proofs and Auxiliary results}\label{secappendix}
This section contains the proofs of some of the theorems as well as some auxiliary results that are used in the main text.
Recall our standing assumption that the distribution of $X$ is absolutely continuous. In what follows, let
\[
\Delta_{n,j} = (S_n^{-1/2}-\textrm{I}_d)X_j - S_n^{-1/2}\overline{X}_n, \quad j=1,\ldots,n.
\]
Notice that $\Delta_{n,j} = Y_{n,j}-X_j$, where the scaled residuals $Y_{n,j}$ are given in (\ref{defscaledr}).

\begin{prop}\label{propossumselta}
Let $X, X_1,X_2, \ldots $ be i.i.d. random vectors satisfying $\E\|X\|^4 < \infty$,  $\E(X) =0$ and $\E(X X^\top) = {\rm I}_d$.
We then have:
\begin{enumerate}
\item[a)]$\displaystyle{
\sum_{j=1}^n \|\Delta_{n,j}\|^2 = O_{\PP}(1)}.
$
\item[b)] $\displaystyle{\frac{1}{n} \sum_{j=1}^n \|\Delta_{n,j}\|^2 \fsk 0}$.
\item[c)] $\displaystyle{\max_{j=1,\ldots,n} \|\Delta_{n,j}\| = o_\PP\left(n^{-1/4}\right)}$.
\end{enumerate}
\end{prop}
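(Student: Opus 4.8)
The plan is to reduce all three claims to a handful of standard limit statements by exploiting the decomposition $\Delta_{n,j} = A_n X_j - b_n$, where $A_n := S_n^{-1/2} - {\rm I}_d$ is symmetric and $b_n := S_n^{-1/2}\overline{X}_n$. First I would record the almost sure limits $\overline{X}_n \fsk 0$ and $S_n \fsk {\rm I}_d$, which follow from the strong law of large numbers together with $S_n = n^{-1}\sum_{j=1}^n X_jX_j^\top - \overline{X}_n\overline{X}_n^\top$ and $\E(XX^\top) = {\rm I}_d$. Since the matrix map $M \mapsto M^{-1/2}$ is continuously differentiable in a neighbourhood of ${\rm I}_d$ and $S_n \fsk {\rm I}_d$, we obtain $S_n^{-1/2} \fsk {\rm I}_d$, hence $\|A_n\|_{\mathrm{op}} \fsk 0$ and $\|b_n\| \fsk 0$. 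Moreover, because $\E\|X\|^4 < \infty$ guarantees finite variances of the entries of $X_jX_j^\top$, the multivariate central limit theorem yields $\overline{X}_n = O_\PP(n^{-1/2})$ and $S_n - {\rm I}_d = O_\PP(n^{-1/2})$; the differentiability of $M \mapsto M^{-1/2}$ at ${\rm I}_d$ then upgrades these to $\|A_n\|_{\mathrm{op}} = O_\PP(n^{-1/2})$ and $\|b_n\| = O_\PP(n^{-1/2})$.

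For part a), the elementary inequality $\|\Delta_{n,j}\|^2 \le 2\|A_n\|_{\mathrm{op}}^2\|X_j\|^2 + 2\|b_n\|^2$ gives, after summation,
\[
\sum_{j=1}^n \|\Delta_{n,j}\|^2 \le 2\|A_n\|_{\mathrm{op}}^2 \sum_{j=1}^n \|X_j\|^2 + 2n\|b_n\|^2.
\]
Here $n^{-1}\sum_{j=1}^n \|X_j\|^2 \fsk \E\|X\|^2 = d$ by the strong law, so $\sum_{j=1}^n\|X_j\|^2 = O_\PP(n)$; combining this with $\|A_n\|_{\mathrm{op}}^2 = O_\PP(n^{-1})$ and $n\|b_n\|^2 = O_\PP(1)$ shows the right-hand side is $O_\PP(1)$. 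Dividing the same bound by $n$ and invoking $\|A_n\|_{\mathrm{op}}\fsk 0$, $n^{-1}\sum_{j=1}^n\|X_j\|^2 \fsk d$ and $\|b_n\|\fsk 0$ immediately yields part b) with almost sure convergence.

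Part c) rests on the triangle-inequality bound $\max_{j\le n}\|\Delta_{n,j}\| \le \|A_n\|_{\mathrm{op}}\max_{j\le n}\|X_j\| + \|b_n\|$. The term $\|b_n\| = O_\PP(n^{-1/2}) = o_\PP(n^{-1/4})$ is harmless, so the crux is $\max_{j\le n}\|X_j\|$. The main obstacle, and the only place where the full strength of $\E\|X\|^4 < \infty$ is needed, is the maximal bound $\max_{j\le n}\|X_j\| = o(n^{1/4})$ almost surely: since $\sum_{n}\PP(\|X\|^4 > \varepsilon^4 n) < \infty$, the Borel--Cantelli lemma forces $\|X_n\|/n^{1/4}\fsk 0$, and a routine splitting of the maximum into a fixed finite part and a tail then gives the claimed rate. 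Multiplying by $\|A_n\|_{\mathrm{op}} = O_\PP(n^{-1/2})$ produces $\|A_n\|_{\mathrm{op}}\max_{j\le n}\|X_j\| = o_\PP(n^{-1/4})$, and part c) follows.
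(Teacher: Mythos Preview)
Your proof is correct and follows essentially the same approach as the paper. The only cosmetic differences are that in part a) the paper expands $\|\Delta_{n,j}\|^2$ exactly and uses a trace identity, whereas you apply the cruder bound $\|u+v\|^2\le 2\|u\|^2+2\|v\|^2$; and in part c) the paper cites Theorem~5.2 of Barndorff-Nielsen (1963) for the maximal bound $\max_{j\le n}\|X_j\| = o(n^{1/4})$ a.s., while you supply the underlying Borel--Cantelli argument directly.
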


\begin{proof} a) Notice that
\[
\|\Delta_{n,j}\|^2 = X_j^\top(S_n^{-1/2}-\textrm{I}_d)^2X_j - 2 \overline{X}_n^\top S_n^{-1/2} (S_n^{-1/2}-\textrm{I}_d)X_j + \overline{X}_nS_n^{-1}\overline{X}_n
\]
and thus, using tr$(AB) = {\rm tr}(BA)$, where tr$(C)$ denotes the trace of a square matrix $C$,
\begin{equation}\label{prop7sumdelta}
\sum_{j=1}^n \|\Delta_{n,j}\|^2  =  \textrm{tr}\Big{(}(S_n^{-1/2}\! -\!  \textrm{I}_d)^2 \sum_{j=1}^n X_jX_j^\top \Big{)} - 2n\overline{X}_n^\top S_n^{-1/2}(S_n^{-1/2}\! -\!  \textrm{I}_d)\overline{X}_n
+ n\overline{X}_n S_n^{-1} \overline{X}_n .
\end{equation}
Due to $\E\|X_1\|^4 < \infty$, the central limit theorem  implies
\[
\sqrt{n}\left(S_n^{-1} - {\rm I}_d\right) = - S_n^{-1} \sqrt{n}\left( S_n - {\rm I}_d\right) = O_\PP(1).
\]
Since $\sqrt{n}(S_n^{-1/2}- {\rm I}_d)(S_n^{-1/2}+ {\rm I}_d) = \sqrt{n}(S_n^{-1} - {\rm I}_d)$, it follows that $\sqrt{n}(S_n^{-1/2}- {\rm I}_d) = O_\PP(1)$.
In view of $\sum_{j=1}^n X_jX_j^\top = O_\PP(n)$ and $\sqrt{n}\overline{X}_n = O_\PP(1)$, we are done.

b) After dividing both sides of (\ref{prop7sumdelta}) by $n$, the first summand on the right hand side converges to 0 $\PP$-almost surely because
$n^{-1}\sum_{j=1}^n X_jX_j^\top \fsk {\rm I}_d$,  and $S_n^{-1/2} \fsk {\rm I}_d$. The same holds for the other two terms,
since $\overline{X}_n \fsk 0$.

c) Let $\|A\|_{{\rm sp}}$ be the spectral norm of a matrix $A$. Then
\begin{equation}\label{abschdeltx}
\|\Delta_{n,j}\| \le \|S_n^{-1/2} - {\rm I}_d\|_{{\rm sp}} \, \|X_j\| + \|S_n^{-1/2}\|_{{\rm sp}} \|\overline{X}_n\|
\end{equation}
and thus
\[
\max_{j=1,\ldots,n}  \|\Delta_{n,j}\| \le \|S_n^{-1/2} - {\rm I}_d\|_{{\rm sp}} \, \max_{j=1,\ldots,n} \|X_j\| + \|S_n^{-1/2}\|_{{\rm sp}} \|\overline{X}_n\|.
\]
From Theorem 5.2 of \cite{BN:1963} we have $\max_{j=1,\ldots,n}\|X_j\|/n^{1/4} \fsk 0$.
Since $\sqrt{n} \|S_n^{-1/2} - {\rm I}_d\|_{{\rm sp}} = O_\PP(1)$, $\|S_n^{-1/2}\|_{{\rm sp}} = O_\PP(1)$ and $\sqrt{n} \|\overline{X}_n\| = O_\PP(1)$, we are done.
 \end{proof}

\begin{prop}\label{propossumseltaxx}
Let $X, X_1,X_2, \ldots $ be i.i.d. random vectors satisfying $\E\|X\|^6 < \infty$,  $\E(X) =0$ and $\E(X X^\top) = {\rm I}_d$.
We then have
\[
\frac{1}{n} \sum_{j=1}^n \|\Delta_{n,j}\|^k \|X_j\|^\ell  \fsk 0
\]
as $n \to \infty$ for each $k \in \mathbb{N}$ and $\ell \in \mathbb{N}_0$ such that $k+ \ell \le 6$.
\end{prop}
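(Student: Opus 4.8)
The plan is to build everything on the pointwise bound \eqref{abschdeltx} established in the proof of Proposition \ref{propossumselta}, namely
\[
\|\Delta_{n,j}\| \ \le \ \|S_n^{-1/2} - {\rm I}_d\|_{{\rm sp}} \, \|X_j\| + \|S_n^{-1/2}\|_{{\rm sp}} \, \|\overline{X}_n\|, \qquad j = 1, \ldots, n.
\]
Writing $A_n := \|S_n^{-1/2} - {\rm I}_d\|_{{\rm sp}}$ and $B_n := \|S_n^{-1/2}\|_{{\rm sp}}\,\|\overline{X}_n\|$ for brevity, this reads $\|\Delta_{n,j}\| \le A_n \|X_j\| + B_n$. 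Since $S_n \fsk {\rm I}_d$ by the strong law of large numbers, and matrix inversion and the symmetric square root are continuous in a neighbourhood of ${\rm I}_d$, we have $S_n^{-1/2} \fsk {\rm I}_d$, whence $A_n \fsk 0$ and $\|S_n^{-1/2}\|_{{\rm sp}} \fsk 1$; together with $\overline{X}_n \fsk 0$ this yields $B_n \fsk 0$ as well.

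Next I would raise the bound to the $k$-th power and invoke the elementary convexity inequality $(u + v)^k \le 2^{k-1}(u^k + v^k)$, valid for $u, v \ge 0$ and $k \ge 1$, to obtain after multiplication by $\|X_j\|^\ell \ge 0$ that
\[
\|\Delta_{n,j}\|^k \|X_j\|^\ell \ \le \ 2^{k-1}\Big( A_n^k \, \|X_j\|^{k+\ell} + B_n^k \, \|X_j\|^\ell \Big).
\]
Averaging over $j$ then gives
\[
\frac{1}{n}\sum_{j=1}^n \|\Delta_{n,j}\|^k \|X_j\|^\ell \ \le \ 2^{k-1} \left( A_n^k \cdot \frac{1}{n}\sum_{j=1}^n \|X_j\|^{k+\ell} + B_n^k \cdot \frac{1}{n}\sum_{j=1}^n \|X_j\|^\ell \right).
\]

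The final step is to pass to the limit $n \to \infty$ in this inequality. Because $k + \ell \le 6$ and $\E \|X\|^6 < \infty$, all moments up to order six are finite, so the strong law of large numbers gives $n^{-1}\sum_{j=1}^n \|X_j\|^{k+\ell} \fsk \E \|X\|^{k+\ell} < \infty$ and $n^{-1}\sum_{j=1}^n \|X_j\|^\ell \fsk \E \|X\|^\ell < \infty$ (the case $\ell = 0$ being trivial, as the average then equals $1$). Since $k \ge 1$, the continuous mapping theorem applied to $A_n \fsk 0$ and $B_n \fsk 0$ yields $A_n^k \fsk 0$ and $B_n^k \fsk 0$. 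Hence both products on the right-hand side converge to zero $\PP$-almost surely, and, the left-hand side being nonnegative, the assertion follows.

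As for obstacles, there is essentially no hard step: the argument is a routine combination of the bound \eqref{abschdeltx}, the convexity inequality, and the strong law of large numbers. The only point requiring attention is matching the moment hypothesis $\E \|X\|^6 < \infty$ to the constraint $k + \ell \le 6$, which is precisely what guarantees finiteness of the limiting moment $\E \|X\|^{k+\ell}$; this explains why six is the natural order for the assumption.
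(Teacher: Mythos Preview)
Your proof is correct and follows essentially the same approach as the paper: both start from the pointwise bound \eqref{abschdeltx}, raise it to the $k$-th power via the convexity inequality (you use the sharper constant $2^{k-1}$ where the paper uses $2^k$), multiply by $\|X_j\|^\ell$, average, and conclude by the strong law of large numbers together with $A_n \fsk 0$ and $B_n \fsk 0$. Your write-up supplies a little more justification for the almost sure convergence of $S_n^{-1/2}$ and for why the moment condition $\E\|X\|^6 < \infty$ is exactly what is needed, but the argument is otherwise identical.
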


\begin{proof}
From (\ref{abschdeltx}) we have
\[
\|\Delta_{n,j}\|^k \, \|X_j\|^\ell \le 2^k \|S_n^{-1/2} - {\rm I}_d\|_{{\rm sp}}^k \, \|X_j\|^{k+\ell} + 2^k \|S_n^{-1/2}\|_{{\rm sp}}^k \|\overline{X}_n\|^k \|X_j\|^\ell.
\]
Since $\|S_n^{-1/2} - {\rm I}_d\|_{{\rm sp}} \fsk 0$, $S_n^{-1/2} \fsk {\rm I}_d$ and  $\|\overline{X}_n\| \fsk 0$, the assertion follows from
the strong law of large numbers.
\end{proof}

\noindent {\bf Proof of Proposition \ref{propapproxi}}\\[1mm]
In what follows, we put $Y_j = Y_{n,j}$ and $\Delta_j = \Delta_{n,j}$ for the sake of brevity.
Notice that
$\cos(t^\top Y_j) = \cos(t^\top X_j) - \sin(\Theta_j) t^\top \Delta_j$, $\sin(t^\top Y_j) = \sin(t^\top X_j) + \cos(\Gamma_j)t^\top \Delta_j$,
where $\Theta_j, \Gamma_j$ depend on $X_1,\ldots,X_n$ and $t$ and satisfy
\begin{equation}\label{sincosungl}
|\Theta_j - t^\top X_j| \le |t^\top \Delta_j|,  \quad  |\Gamma_j - t^\top X_j| \le |t^\top \Delta_j|.
\end{equation}
Since $\|Y_j\|^2 = \|X_j\|^2 + \|\Delta_j\|^2 + 2 X_j^\top \Delta_j$, it follows that
$V_n(t) = \sum_{k=1}^6 V_{n,k}(t)$,  where
\begin{eqnarray}\label{vn1final}
V_{n,1}(t) & = & \frac{1}{\sqrt{n}} \sum_{j=1}^n \Big{\{} \|X_j\|^2 {\rm CS}^{+}(t,X_j) - \mu(t) \Big{\}},\\ \nonumber
V_{n,2}(t) & = & \frac{1}{\sqrt{n}} \sum_{j=1}^n \|X_j\|^2 t^\top \Delta_j \big{(}\cos(\Gamma_j) - \sin(\Theta_j)\big{)},\\ \nonumber
V_{n,3}(t) & = & \frac{2}{\sqrt{n}} \sum_{j=1}^n X_j^\top \Delta_j {\rm CS}^{+}(t,X_j), \\ \nonumber
V_{n,4}(t) & = & \frac{2}{\sqrt{n}} \sum_{j=1}^n X_j^\top \Delta_j t^\top \Delta_j \big{(}\cos(\Gamma_j) - \sin(\Theta_j)\big{)},\\ \nonumber
V_{n,5}(t) & = & \frac{1}{\sqrt{n}} \sum_{j=1}^n \|\Delta_j\|^2 {\rm CS}^{+}(t,X_j),\\ \nonumber
V_{n,6}(t) & = & \frac{1}{\sqrt{n}} \sum_{j=1}^n \|\Delta_j\|^2 t^\top \Delta_j \big{(}\cos(\Gamma_j) - \sin(\Theta_j)\big{)}.
\end{eqnarray}
We first show that $V_{n,\ell} = o_\PP(1)$ if $\ell \in \{4,5,6\}$. As for $V_{n,4}$, notice that, by the Cauchy--Schwarz inequality,
$|V_{n,4}(t)| \le 4\|t\| n^{-1/2} \max_{i=1,\ldots,n} \|X_i\| \sum_{j=1}^n \|\Delta_j\|^2$.
Since $\E \|X\|^4 < \infty$, Theorem 5.2 of \cite{BN:1963} yields $\max_{i=1,\ldots,n} \|X_i\| = o_\PP(n^{1/4})$. In view of
Proposition \ref{propossumselta} a), we have $V_{n,4} = o_\PP(1)$.  The same proposition immediately also gives
$V_{n,5} = o_\PP(1)$. Since
$
|V_{n,6}(t)| \le 2\|t\| n^{-1/2} \max_{i=1,\ldots,n}\|\Delta_i\| \sum_{j=1}^n \|\Delta_j\|^2$,
we have $V_{n,6} = o_\PP(1)$ in view of Proposition \ref{prop7sumdelta} a) and Proposition \ref{propossumselta} c).

We now consider $V_{n,2}(t)$. Since $|V_{n,2}(t)| \le 2\|t\| n^{-1} \sum_{j=1}^n \|X_j\|^2 n^{1/4}\max_{i=1,\ldots,n}\|\Delta_i\| n^{^1/4}$,
the law of large numbers and Proposition  \ref{prop7sumdelta} c) show that $V_{n,2} = o_\PP(n^{1/4})$. In view of
 (\ref{sincosungl}) and Proposition \ref{prop7sumdelta} c), the error is thus $o_\PP(1)$ if we replace both  $\Gamma_j$ and $\Theta_j$
 with $t^\top X_j$. Moreover, plugging $\Delta_j = (S_n^{-1/2}-{\rm I}_d)X_j - S_n^{-1/2}\overline{X}_n$ into the definition of $V_{n,2}(t)$,
 the error is $o_\PP(1)$ if we replace $S_n^{-1/2}\overline{X}_n$ with $\overline{X}_n$. Recalling (\ref{defcsplusminus}), we thus obtain
 \begin{equation}\label{vn2app1}
 V_{n,2}(t) = \frac{1}{\sqrt{n}} \sum_{j=1}^n \|X_j\|^2 t^\top \Big{\{}\left(S_n^{-1/2} - {\rm I}_d\right) X_j - \overline{X}_n \Big{\}} {\rm CS}^{-}(t,X_j) + o_\PP(1).
 \end{equation}
We now use display (2.13) of \cite{HW:1997}, according to which
\[
2 \sqrt{n}(S_n^{-1/2}- {\rm I}_d) = - \frac{1}{\sqrt{n}} \sum_{j=1}^n \left(X_jX_j^\top - {\rm I}_d \right) + O_\PP\left(n^{-1/2}\right).
\]
Plugging this expression into (\ref{vn2app1}) we obtain
\begin{eqnarray}\label{glvn2a1}
V_{n,2}(t) & = & - \frac{1}{2} \cdot \frac{1}{n} \sum_{k=1}^n \|X_k\|^2 {\rm CS}^{-}(t,X_k) t^\top \frac{1}{\sqrt{n}} \sum_{j=1}^n \left(X_jX_j^\top - {\rm I}_d\right) X_k \\ \nonumber
& & - \frac{1}{n} \sum_{k=1}^n \|X_k\|^2 {\rm CS}^{-}(t,X_k) \, \frac{1}{\sqrt{n}} \sum_{j=1}^n t^\top X_j + o_\PP(1).
\end{eqnarray}
In (\ref{glvn2a1}) we now use the fact that tr$(AB) = {\rm tr}(BA)$, where tr denotes trace and
 $AB$ is a square matrix. Furthermore, the error is $o_\PP(1)$ if we replace $n^{-1}\sum_{j=1}^n \|X_j\|^2{\rm CS}^{-}(t,X_j)$ and $n^{-1}\sum_{j=1}^n \|X_j\|^2 X_j {\rm CS}^{-}(t,X_j)$ with
 their almost sure limits
 $ \E[\|X\|^2 {\rm CS}^{-}(t,X)] = - \Delta \psi_X^{-}(t)$
 and
 $
 \E [\|X\|^2 X {\rm CS}^{-}(t,X)] = - \nabla \Delta \psi_X^{+}(t)$,
  respectively. We therefore obtain
 \begin{equation}\label{vn2final}
 V_{n,2}(t) = \frac{1}{\sqrt{n}} \sum_{j=1}^n \Big{\{} \frac{1}{2} \nabla \Delta \psi_X^{+}(t) \left(X_jX_j^\top - {\rm I}_d \right) t + \Delta \psi_X^{-}(t) t^\top X_j \Big{\}} + o_\PP(1).
 \end{equation}
In the same way, we proceed  with $V_{n,3}(t)$ and, using $\E [X {\rm CS}^{+}(t,X) X^\top] = - H\psi_X^{+}(t)$ as well as $\E [X{\rm CS}^{+}(t,X)] = - \nabla \psi_X^{-}(t)$,
finally arrive at
\begin{equation}\label{vn3final}
V_{n,3}(t) = \frac{1}{\sqrt{n}} \sum_{j=1}^n \Big{\{} 2 X_j^\top \nabla \psi_X^{-}(t) + X_j^\top H \psi_X^{-}(t) X_j + \mu(t)\Big{\}} + o_\PP(1).
\end{equation}
By adding (\ref{vn1final}), (\ref{vn2final}) and (\ref{vn3final}), we have $V_n = \widetilde{V}_n + o_\PP(1)$,
where $\widetilde{V}_n$ is given in (\ref{gldefvntild}).\hfill $\square$

\vspace*{1.5mm}

\noindent{\bf Proof of Theorem \ref{thmfixed1}}\\[1mm]
By analogy with $Z_n(t)$, as  defined in (\ref{defznt}), let
\[
Z^0_n(t) := \frac{1}{\sqrt{n}} \sum_{j=1}^n \big{\{} {\rm CS}^{+}(t,X_j) - m(t) \big{\}}, \quad t \in \R^d.
\]
A straightforward calculation gives $n^{-1/2}(Z_n(t) - Z^0_n(t)) = \sum_{j=1}^3 U_{n,j}(t)$,
where
\begin{eqnarray*}
U_{n,1}(t) & = & \frac{1}{n} \sum_{j=1}^n \|X_j\|^2 \left( {\rm CS}^{+}(t,Y_{n,j}) - {\rm CS}^{+}(t,X_j)  \right),\\
U_{n,2}(t) & = & \frac{2}{n} \sum_{j=1}^n X_j^\top \Delta_{n,j} {\rm CS}^{+}(t,Y_{n,j}), \quad
U_{n,3}(t)  =  \frac{1}{n} \sum_{j=1}^n \| \Delta_{n,j}\|^2 {\rm CS}^{+}(t,Y_{n,j}).
\end{eqnarray*}
Since $|\cos(t^\top Y_{n,j}) - \cos(t^\top X_j)| \le \|t\| \, \|\Delta_{n,j}\|$, $|\sin(t^\top Y_{n,j}) - \sin(t^\top X_j)| \le \|t\| \, \|\Delta_{n,j}\|$,
the Cauchy--Schwarz inequality gives $|U_{n,1}(t)| \le 2 \|t\| \big{(} n^{-1}\sum_{j=1}^n \|X_j\|^4\big{)}^{1/2} \big{(}n^{-1} \sum_{j=1}^n \|\Delta_{n,j}\|^2 \big{)}^{1/2}$.
By the strong law of large numbers, we have $n^{-1}\sum_{j=1}^n \|X_j\|^4 \fsk \E\|X\|^4$. In view of Proposition
\ref{propossumselta} b), we thus obtain $\|U_{n,1}\|^2_\HH \fsk 0$. Next, notice that, again by the Cauchy--Schwarz inequality,
$|U_{n,2}(t)| \le 4 \big{(} n^{-1}\sum_{j=1}^n \|X_j\|^2\big{)}^{1/2} \big{(}n^{-1} \sum_{j=1}^n \|\Delta_{n,j}\|^2 \big{)}^{1/2}$.
Hence, we have $\|U_{n,2}\|^2_\HH \fsk 0$. Finally, $|U_{n,3}(t)|  \le 2 n^{-1} \sum_{j=1}^n \|\Delta_{n,j}\|^2$ which, in view of Proposition
\ref{propossumselta} b), shows that $\|U_{n,3}\|^2_\HH \fsk 0$. Summarizing, we have
\begin{equation}\label{convznzn0}
\|n^{-1/2}(Z_n(\cdot ) - Z^0_n(\cdot ))\|_\HH \fsk 0.
\end{equation}
By the strong law of large numbers in Banach spaces, it follows that $n^{-1/2} Z^0_n(\cdot) \fsk \mu(\cdot) - m(\cdot)$ as $n \to \infty$
in $\HH$. In view of (\ref{convznzn0}), we thus obtain
\[
\frac{T_{n,a}}{n} = \| n^{-1/2} Z_n(\cdot )\|^2_\HH  \fsk \|\mu(\cdot) - m(\cdot)\|^2_\HH = \int \big{(}\mu(t)-m(t)\big{)}^2 w_a(t) \, {\rm d} t.
\]
\hfill $\square$

\noindent {\bf Proof of Theorem \ref{thmlimdeltaa}}\\[1mm]
Starting with (\ref{defdeltaa}), we have
\begin{eqnarray*}
\Delta_a & = & \int \left( \E (\|X\|^2 {\rm CS}^{+}(t,X)) \right)^2 \exp(-a\|t\|^2) \, {\rm d} t \\
& & - 2 \int (d-\|t\|)^2 \E (\|X\|^2 {\rm CS}^{+}(t,X)) \exp \left( - \frac{2a+1}{2} \|t\|^2 \right) {\rm d} t \\
& & + \int (d-\|t\|^2)^2 \exp(-(a+1)\|t\|^2)\, {\rm d} t\\
& =: & I_{1,a} -  I_{2,a} + I_{3,a},
\end{eqnarray*}
say. Letting $X_1,X_2$ be independent copies of $X$, Fubini's theorem, the addition theorems for the sine function and the cosine function,
considerations of symmetry and (\ref{glint1}) yield
\[
I_{1,a} = \left( \frac{\pi}{a}\right)^{d/2} \E \Big{[} \|X_1\|^2 \|X_2\|^2 \exp\left( - \frac{\|X_1-X_2\|^2}{4a}\right)\Big{]}.
\]
From (\ref{glint2}) and (\ref{glint3}), we have
\begin{eqnarray*}
I_{2,a} & = & \frac{2 (2\pi)^{d/2}}{(2a+1)^{d/2+2}} \,  \E \Big{[} \|X\|^2 \left( \|X\|^2 + 2da(2a+1)\right) \exp \left(- \frac{\|X\|^2}{2(2a+1)}\right) \Big{]},\\
I_{3,a} & = & \frac{\pi^{d/2}}{(a+1)^{d/2+2}} \left( a(a+1)d^2 + \frac{d(d+2)}{4}\right)
\end{eqnarray*}
and thus
\begin{eqnarray*}
2a \left(\frac{a}{\pi}\right)^{d/2} \Delta_a & = &  2a \, \E \Big{[} \|X_1\|^2 \|X_2\|^2 \exp\left( - \frac{\|X_1-X_2\|^2}{4a}\right)\Big{]}\\
& & -  \frac{2 (2a)^{d/2+1}}{(2a+1)^{d/2+2}} \E \Big{[} \|X\|^2 \left( \|X\|^2 + 2da(2a+1)\right) \exp \left(- \frac{\|X\|^2}{2(2a+1)}\right) \Big{]}\\
& & + \frac{2 a^{d/2+1}}{(a+1)^{d/2+2}} \left( a(a+1)d^2 + \frac{d(d+2)}{4}\right)\\
& =: & J_{1,a} - J_{2,a} + J_{3,a},
\end{eqnarray*}
say. An expansion of the exponential terms, dominated convergence (notice that $\exp(-u) \le 1-u + u^2$ if $u \ge 0$) and a binomial expansion gives
\begin{eqnarray*}
J_{1,a} & = & 2ad^2 - d \E \|X_1\|^4 + \E \left( \|X_1\|^2 \|X_2\|^2 X_1^\top X_2 \right) + O(a^{-1}),\\
J_{2,a} & = &  4 a d^2 - d^3 - 2 d^2 - d \E \|X_1\|^4 + O(a^{-1}),\\
J_{3,a} & = & 2ad^2 - d^3 - 2d^2 + O(a^{-1})
\end{eqnarray*}
and thus
\begin{eqnarray*}
\lim_{a \to \infty} 2a \left(\frac{a}{\pi}\right)^{d/2} \Delta_a & = & \E \left(\|X_1\|^2 \|X_2\|^2 X_1^\top X_2 \right) =
{\E \left(\|X_1\|^2 X_1\right)}^\top \E \left(\|X_2\|^2 X_2\right)\\
& = & \big{\|} \E (\|X\|^2 X)\big{\|}^2.
\end{eqnarray*}
Notice that the condition $\mathbb{E} \|X\|^6 < \infty$ is not only needed for the existence of the final limit, but it also
occurs when dealing with $J_{2,a}$.\hfill $\square$

\vspace*{1mm}

\noindent {\bf Proof of Lemma \ref{lemma6.6}}\\[1mm]
Putting $Y_j = Y_{n,j}$ and $\Delta_j = Y_j - X_j$, we have
\begin{align*}
    \cos(t^{\top}Y_j) &= \cos(t^{\top}X_j) - t^{\top}\Delta_j\sin(t^{\top}X_j) + \varepsilon_{j}(t), \\
    \sin(t^{\top}Y_j) &= \sin(t^{\top}X_j) + t^{\top}\Delta_j\cos(t^{\top}X_j) + \eta_{j}(t),
\end{align*}
where $|\varepsilon_{j}(t)|, |\eta_{j}(t)| \leq \|t\|^2\|\Delta_{j}\|^2$ (see~\cite{HW:1997}, p. 8) and thus
\begin{equation}\label{taylorcs+}
\text{CS}^{\pm}(t,Y_j) = \text{CS}^{\pm}(t, X_j) \pm t^\top \Delta_j \text{CS}^{\mp}(t, X_j) + \varepsilon_j(t) + \eta_j(t).
\end{equation}
To prove a), notice that $\Psi_{1,n}(t) = n^{-1} \sum_{j=1}^n \text{CS}^+(t, X_j)X_j + R_{1,n}(t)$, where
\begin{align*}
    R_{1,n}(t) &= \frac{1}{n}\sum_{j=1}^n \bigl(t^{\top}\Delta_{j}\text{CS}^-(t, X_j) + \varepsilon_j(t) + \eta_j(t)\bigr)X_j \\
    &+ \frac{1}{n}\sum_{j=1}^n \bigl(\text{CS}^+(t, X_j) + t^{\top}\Delta_{j}\text{CS}^-(t, X_j) + \varepsilon_j(t) + \eta_j(t)\bigr)\Delta_j.
\end{align*}
Since $|\text{CS}^{\pm}(t, X_j)| \leq 2$ and $|\varepsilon_{j}(t) + \eta_{j}(t)| \leq 2 \|t\|^2 \| \Delta_j\|^2$, the Cauchy--Schwarz inequality yields
\begin{align*}
    |R_{1,n}(t)| &\leq \frac{2\|t\|}{n}\sum_{j=1}^n \|\Delta_j\|\|X_j\| + \frac{2\|t\|^2}{n}\sum_{j=1}^n \|\Delta_j\|^2\|X_j\| + \frac{2}{n}\sum_{j=1}^n \|\Delta_j\| \\
    &+ \frac{2\|t\|}{n}\sum_{j=1}^n \|\Delta_j\|^2 + \frac{2\|t\|^2}{n}\sum_{j=1}^n \|\Delta_j\|^3.
\end{align*}
In view of Proposition~\ref{propossumseltaxx}, each summand converges to zero almost surely, which proves a).
The remaining assertions b), \ldots, e) are treated similarly. To tackle b) and e), one can show negligibility of terms
by using the fact that the spectral norm $\| \cdot \|_{{\rm sp}}$ of a matrix satisfies $\|xy^\top\|_{{\rm sp}} = \|x\|\, \|y\|$, if $x,y$ are
  (column) vectors in $\mathbb{R}^d$. The condition $\E \|X_1\|^6 < \infty$ is needed for part e), since
\begin{align*}
    \|Y_{j}\|^2\|Y_jY_j^{\top}\|_{\text{sp}} &\leq \left(\|X_j\|^2 + 2\|X_j\|\|\Delta_j\| + \|\Delta_j\|^2\right)^2 \\
    &= \|X_j\|^4 + 4\|X_j\|^3\|\Delta_j\| + 6\|X_j\|^2\|\Delta_j\|^2 + 4\|X_j\|\|\Delta_j\|^3 + \|\Delta_j\|^4,
\end{align*}
and multiplication with $\|\Delta_j\|^2$ (which arises from an expansion of ${\rm CS}^+(t, Y_{n,j})$) gives
 monomials of order 6.\hfill $\square$

\vspace*{1mm}
\noindent {\bf Proof of Theorem \ref{thmvarest}}\\[1mm]
The proof is similar to the proof of Theorem 5 of \cite{HM:2019} and will therefore only be sketched.
From (\ref{defkernl}),  (\ref{reprsumvj}) and  (\ref{darstsigma2}), we have $\sigma_a^2 = \sum_{i,j=1}^4 \sigma_a^{i,j}$, where
\[
\sigma_a^{i,j} = 4 \iint L^{i,j}(s,t)z(s)z(t) w_a(s)w_a(t) \, {\rm d}s {\rm d}t,
\]
and $L^{i,j}(s,t)$ is given in (\ref{deflijst}). It thus suffices to prove
$\widehat{\sigma}_{n,a}^{i,j} \stackrel{\mathbb{P}}{\longrightarrow} \sigma_a^{i,j}$ for each pair $(i,j)$, where
$\widehat{\sigma}_{n,a}^{i,j}$ is given in $(\ref{defsigmanij})$. The first step of the proof is to replace $L_n^{i,j}(s,t)$ with
$L_{n,0}^{i,j}(s,t)$, which arises from $L_n^{i,j}(s,t)$ given in (\ref{deflnij}) by throughout replacing $Y_{n,k}$ with $X_k$ in the functions $v_{n,j}(s,Y_{n,k})$, $j \in \{1,\ldots,4\}$.
Notice that this replacement also refers to the quantities $\Psi^{-}_{4,n}(s)$, $\Psi_{1,n}(s)$,  $\Psi^{-}_{3,n}(s)$ and $\Psi_{2,n}(s)$ that figure in the
definition of $v_{n,2}$, $v_{n,3}$ and $v_{n,4}$. Moreover, we replace $z_n(s)$ with $z_{n,0}(s) = n^{-1}\sum_{j=1}^n {\rm CS}^{+}(s,X_j)\|X_j\|^2 - m(s)$.
Putting
\[
\widehat{\sigma}_{n,0,a}^{i,j} = 4 \iint L_{n,0}^{i,j}(s,t)z_{n,0}(s)z_{n,0}(t) w_a(s)w_a(t) \, {\rm d}s {\rm d}t,
\]
Fubini's theorem shows that $\widehat{\sigma}_{n,0,a}^{i,j} \stackrel{\mathbb{P}}{\longrightarrow} \sigma_a^{i,j}$. It thus remains to prove
\[
\widehat{\sigma}_{n,a}^{i,j} - \widehat{\sigma}_{n,0,a}^{i,j}  = o_\mathbb{P}(1)
\]
as $n \to \infty$. To tackle $\widehat{\sigma}_{n,a}^{i,j} - \widehat{\sigma}_{n,0,a}^{i,j}$, we put
\[
\Lambda_n(s,t) = L_{n}^{i,j}(s,t)z_{n}(s)z_{n}(t) - L_{n,0}^{i,j}(s,t)z_{n,0}(s)z_{n,0}(t)
\]
and notice that
\begin{eqnarray}\label{lambda1}
\Lambda_n(s,t) & = & \left(L_{n}^{i,j}(s,t) - L_{n,0}^{i,j}(s,t)\right) z_n(s)z_n(t)\\ \label{lambda2}
& & \quad + L_{n,0}^{i,j}(s,t)\left(z_n(s)z_n(t)- z_{n,0}(s)z_{n,0}(t) \right), \\ \nonumber
 z_n(s)z_n(t)-z_{n,0}(s) z_{n,0}(t) & = &    \left(z_n(s)-z_{n,0}(s)\right) \left(z_n(t)-z_{n,0}(t)\right)\\ \nonumber
 & & \quad + z_{n,0}(s) \left(z_n(t)-z_{n,0}(t)\right) + z_{n,0}(t)\left(z_n(s)-z_{n,0}(s)\right).
\end{eqnarray}
From (\ref{defznsx}), we have $|z_n(s)| \le 2 d+ m(s)$. Moreover, $|z_{n,0}(s)| \le 2n^{-1}\sum_{j=1}^n \|X_j\|^2 + m(s)$.
A Taylor expansion shows that $|z_n(s) - z_{n,0}(s)|$ is bounded from above by a finite sum of terms of the type
$\|s\|^\ell n^{-1} \sum_{j=1}^n \|X_j\|^\beta \|\Delta_{n,j}\|^\gamma$, where $s \in \{0,1\}$, $\gamma \ge 1$ and $\beta + \gamma \le 3$.
Next, by straightforward calculations we obtain that $|L_{n,0}^{i,j}(s,t)|$ is bounded from above by $\|s\|^\ell \|t\|^m$ times finitely many products
of the type $n^{-1}\sum_{j=1}^n \|X_j\|^\beta$, where $\ell, m \in \{0,1\}$ and $\beta \in \{1,2,3,4\}$.
It now follows from Proposition \ref{propossumseltaxx} that the term figuring in (\ref{lambda2}), multiplied with $w_a(s)w_a(t)$ and integrated over $\mathbb{R}^d \times \mathbb{R}^d$,
converges to zero in probability.

As for the term figuring in (\ref{lambda1}), we have $|z_n(s)z_n(t)| \le (2d+m(s))(2d+m(t))$. To tackle
$L_{n}^{i,j}(s,t) - L_{n,0}^{i,j}(s,t)$, we confine ourselves to the case
$i=1, j=2$, since the other cases can be treated in a similar way. From (\ref{ln12}), we  have
\begin{eqnarray}\label{eqn9.11}
    L^{1,2}_n(s, t) \! & \! = \! & \!  -\frac{1}{2}\Bigg{\{}\frac{1}{n}\sum_{j=1}^n {\rm CS}^-(s,Y_j)\|Y_j\|^2Y_j\Bigg{\}}^\top\Bigg{\{}\frac{1}{n}\sum_{j=1}^n {\rm CS}^+(t, Y_j)\|Y_j\|^2Y_jY_j^\top s\Bigg{\}}\\ \label{eqn9.12}
   \! & \!  \! & \!  + \frac{1}{2}\Bigg{\{}\frac{1}{n}\sum_{j=1}^n {\rm CS}^+(t, Y_j)\|Y_j\|^2\Bigg{\}}\Bigg{\{}\frac{1}{n}\sum_{j=1}^n {\rm CS}^-(s,Y_j)\|Y_j\|^2Y_j^\top s\Bigg{\}}.
\end{eqnarray}
Moreover,
\begin{eqnarray}\label{eqn9.13}
       L_{n,0}^{1,2}(s,t) \! & \! = \! & \! -\frac{1}{2}\Bigg{\{}\frac{1}{n}\sum_{j=1}^n {\rm CS}^-(s, X_j)\|X_j\|^2X_j\Bigg{\}}^\top \! \! \Bigg{\{}\frac{1}{n}\sum_{j=1}^n {\rm CS}^+(t, X_j)\|X_j\|^2X_jX_j^\top s\Bigg{\}}\\ \label{eqn9.14}
    \! & \!  \! & \!  + \frac{1}{2}\Bigg{\{}\frac{1}{n}\sum_{j=1}^n {\rm CS}^+(t, X_j)\|X_j\|^2\Bigg{\}}\Bigg{\{}\frac{1}{n}\sum_{j=1}^n {\rm CS}^-(s,X_j)\|X_j\|^2X_j^\top s\Bigg{\}}.
    \end{eqnarray}
  Using (\ref{taylorcs+}), some calculations show that, for each $i \in \{1,2\}$, the norm of the difference of the $i^{{\rm th}}$ curly bracket in (\ref{eqn9.11}) and the   $i^{{\rm th}}$ curly bracket in (\ref{eqn9.13})
  is bounded from above by  finite sum of terms of the type
  \begin{equation}\label{prodtype}
  \|s\|^\ell \|t\|^m \, \frac{1}{n}\sum_{j=1}^n \|X_j\|^\beta \|\Delta_{n,j}\|^\gamma,
  \end{equation}
   where $\ell, m \in \{0,1,2\}$,
  $\gamma \ge 1$, and $\beta + \gamma \le 5$. Since the same holds for the norm of the difference of the $i^{{\rm th}}$ curly bracket in (\ref{eqn9.12}) and the   $i^{{\rm th}}$ curly bracket in (\ref{eqn9.14}), $i \in \{1,2\}$, it follows that $|L^{1,2}_n(s, t) - L_{n,0}^{1,2}(s,t)|$ is bounded from above by a finite sum of terms, which are either products of two terms of type (\ref{prodtype}), or a product of a term of type
  (\ref{prodtype}) and one of the terms inside one of the curly brackets in (\ref{eqn9.13}) or (\ref{eqn9.14}). From Proposition \ref{propossumseltaxx}, it follows that
  the term figuring in (\ref{lambda1}), multiplied with $w_a(s)w_a(t)$ and integrated over $\mathbb{R}^d \times \mathbb{R}^d$,
converges to zero in probability.\hfill $\square$

\end{appendix}
\begin{table}[b]
\setlength{\tabcolsep}{.5mm}
\centering
\footnotesize
\begin{tabular}{lc|cccccc|cccccc}
  \multicolumn{2}{c}{} & \multicolumn{6}{c}{$T_{n,a}$} & \multicolumn{6}{c}{} \\
 Alt. & $n\backslash a$ & 0.25 & 0.5 & 1 & 1.5 & 2 & 3 & BE & SW & BCMR & BHEP & AD & SF \\
  \hline
  \multirow{3}{*}{N$(0,1)$} & 20 & 5 & 5 & 5 & 5 & 5 & 5 & 5 & 5 & 5 & 5 & 5 & 5\\
  & 50 & 5 & 5 & 5 & 5 & 5 & 5 & 5 & 5 & 5 & 5 & 5 & 5 \\
  & 100 & 5 & 5 & 5 & 5 & 5 & 5 & 5 & 5 & 5 & 5 & 5 & 5 \\
  \hline
  \multirow{3}{*}{NMix$(0.3,1,0.25)$} & 20 & 13 & 15 & 18 & 19 & 19 & 19 & 23 & 28 & 28 & 27 & {\bf 30} & 25 \\
  & 50 & 34 & 38 & 44 & 46 & 46 & 46 & 54 & 60 & 60 & 62 & {\bf 68} & 57 \\
  & 100 & 67 & 71 & 77 & 78 & 78 & 76 & 85 & 89 & 89 & 90 & {\bf 94} & 88 \\
  \multirow{3}{*}{NMix$(0.5,1,4)$} & 20 & 38 & 39 & 40 & 39 & 38 & 36 & 36 & 40 & 43 & 42 & 46 & {\bf 48} \\
  & 50 & 67 & 69 & 70 & 68 & 65 & 61 & 64 & 78 & 80 & 80 & {\bf 86} & 83 \\
  & 100 & 92 & 94 & 94 & 93 & 91 & 87 & 92 & 97 & 98 & 98 & {\bf 99} & 98 \\
  \hline
  \multirow{3}{*}{t$_3$} & 20 & 39 & 39 & 39 & 38 & 38 & 37 & 34 &  35 & 37 & 34 & 33 & {\bf 40} \\
  & 50 & {\bf 69} & {\bf 69} & 68 & 67 & 66 & 64 & 57 & 64 & 65 & 61 & 60 & {\bf 69} \\
  & 100 & 89 & 89 & 89 & 88 & 87 & 86 & 81 & 88 & 89 & 86 & 85 & {\bf 91} \\
  \multirow{3}{*}{t$_5$} & 20 & {\bf 23} & {\bf 23} & 22 & 22 & 21 & 21 &20 & 19 & 20 & 18 & 17 & 22 \\
  & 50 & {\bf 41} & 40 & 39 & 38 & 37 & 36 & 31 & 35 & 37 & 32 & 31 & {\bf 41} \\
  & 100 & 61 & 61 & 61 & 59 & 58 & 55 & 46 & 56 & 58 & 50 & 48 & {\bf 63} \\
  \multirow{3}{*}{t$_{10}$} & 20 & {\bf 12} & {\bf 12} & {\bf 12} & {\bf 12} & {\bf 12} & 11 & 11 & 10 & 11 & 9 & 9 & {\bf 12}  \\
  & 50 & 19 & 19 & 19 & 18 & 17 & 17 & 14 & 16 & 17 & 13 & 12 & {\bf 20} \\
  & 100 & {\bf 28} & {\bf 28} & {\bf 28} & 27 & 26 & 25 & 19 & 22 & 24 & 16 & 15 & {\bf 28} \\
  \hline
  \multirow{3}{*}{U$(-\sqrt{3},\sqrt{3})$} & 20 & 1 & 1 & 1 & 1 & 1 & 1 & 3 &  {\bf 21} & 17 & 13 & 17 & 8 \\
  & 50 & 26 & 15 & 13 & 9 & 5 & 2 & 8 & {\bf 75} & 70 & 55 & 58 & 47 \\
  & 100 & 96 & 89 & 81 & 73 & 59 & 24 & 45 & {\bf $\ast$} & 99 & 95 & 95 & 97 \\
  \hline
  \multirow{3}{*}{$\chi^2_5$} & 20 & 35 & 37 & 39 & 41 & 41 & 42 & {\bf 44} & {\bf 44} & {\bf 44} & 42 & 38 & 42 \\
  & 50 & 79 & 81 & 83 & 84 & 85 & 85 & 87 & {\bf 89} & 88 & 84 & 80 & 85 \\
  & 100 & 99 & 99 & 99 & 99 & 99 & 99 & {\bf $\ast$} & {\bf $\ast$} & {\bf $\ast$} & 99 & 99 & {\bf $\ast$} \\
  \multirow{3}{*}{$\chi^2_{15}$} & 20 & 16 & 17 & 17 & 18 & 18 & 18 & {\bf 19} & 18 & 18 & 17 & 16 & 18 \\
  & 50 & 36 & 39 & 42 & 43 & 44 & 44 & {\bf 45} & 42 & 42 & 39 & 33 & 40 \\
  & 100 & 67 & 71 & 73 & 75 & 75 & 76 & {\bf 77} & 75 & 74 & 68 & 61 & 71 \\
  \hline
  \multirow{3}{*}{B$(1,4)$} & 20 & 34 & 38 & 42 & 44 & 45 & 45 &  50 & {\bf 59} & 58 & 52 & 51 & 53 \\
  & 50 & 90 & 91 & 92 & 93 & 93 & 92 & 94 & {\bf 98} & {\bf 98} & 94 & 95 & 97 \\
  & 100 & {\bf $\ast$} & {\bf $\ast$} & {\bf $\ast$} & {\bf $\ast$} & {\bf $\ast$} & {\bf $\ast$} & {\bf $\ast$} & {\bf $\ast$} & {\bf $\ast$} & {\bf $\ast$} & {\bf $\ast$} & {\bf $\ast$} \\
  \multirow{3}{*}{B$(2,5)$} & 20 & 11 & 12 & 13 & 14 & 15 & 15 & 15 & {\bf 16} & {\bf 16} & {\bf 16} & 14 & 14 \\
  & 50 & 31 & 33 & 37 & 39 & 40 & 40 & 43 & {\bf 50} & 47 & 45 & 39 & 40 \\
  & 100 & 78 & 79 & 79 & 80 & 80 & 79 & 81 & {\bf 90} & 89 & 80 & 76 & 82 \\
  \hline
  \multirow{3}{*}{$\Gamma(1,5)$} & 20 & 64 & 68 & 72 & 73 & 74 & 74 & 77 &  {\bf 83} & {\bf 83} & 77 & 77 & 80 \\
  & 50 & 99 & 99 & 99 & {\bf $\ast$} & {\bf $\ast$} & 99 & {\bf $\ast$} & {\bf $\ast$} & {\bf $\ast$} & {\bf $\ast$} & {\bf $\ast$} & {\bf $\ast$} \\
  & 100 & {\bf $\ast$} & {\bf $\ast$} & {\bf $\ast$} & {\bf $\ast$} & {\bf $\ast$} & {\bf $\ast$} & {\bf $\ast$} & {\bf $\ast$} & {\bf $\ast$} & {\bf $\ast$} & {\bf $\ast$} & {\bf $\ast$} \\
  \multirow{3}{*}{$\Gamma(5,1)$} & 20 & 21 & 22 & 23 & 24 & {\bf 25} & {\bf 25} & {\bf 25} & 24 & 24 & 23 & 20 & 24 \\
  & 50 & 50 & 54 & 57 & 59 & 59 & 60 & {\bf 61} & 59 & 59 & 55 & 49 & 56 \\
  & 100 & 84 & 87 & 88 & 89 & 89 & 90 & {\bf 91} & 90 & 90 & 85 & 81 & 88 \\
  \hline		
  \multirow{3}{*}{W$(1,0.5)$} & 20 & 65 & 68 & 72 & 74 & 75 & 74 & {\bf $\ast$} &  84 & 83 & 78 & 77 & 80 \\
  & 50 & 99 & 99 & 99 & {\bf $\ast$} & {\bf $\ast$} & {\bf $\ast$} & {\bf $\ast$} & {\bf $\ast$} & {\bf $\ast$} & {\bf $\ast$} & {\bf $\ast$} & {\bf $\ast$} \\
  & 100 & {\bf $\ast$} & {\bf $\ast$} & {\bf $\ast$} & {\bf $\ast$} & {\bf $\ast$} & {\bf $\ast$} & {\bf $\ast$} & {\bf $\ast$} & {\bf $\ast$} & {\bf $\ast$} & {\bf $\ast$} & {\bf $\ast$} \\
  \multirow{3}{*}{Gum$(1,2)$} & 20 & 28 & 30 & 32 & 32 & 33 & 33 & {\bf 34} & 31 & 32 & 31 & 28 & 32 \\
  & 50 & 63 & 67 & 69 & 70 & 71 & {\bf 72} & {\bf 72} & 69 & 69 & 66 & 60 & 67 \\
  & 100 & 92 & 94 & 94 & 95 & 95 & 95 & {\bf 96} & 94 & 94 & 91 & 89 & 93 \\
  \multirow{3}{*}{LN$(0,1)$} & 20 & 84 & 86 & 88 & 89 & 89 & 89 & 91 &  {\bf 93} & {\bf 93} & 91 & 90 & 91 \\
  & 50 &  {\bf $\ast$} & {\bf $\ast$} & {\bf $\ast$} & {\bf $\ast$} & {\bf $\ast$} & {\bf $\ast$} &  {\bf $\ast$} & {\bf $\ast$} & {\bf $\ast$} & {\bf $\ast$} & {\bf $\ast$} & {\bf $\ast$} \\
  & 100 &  {\bf $\ast$} & {\bf $\ast$} & {\bf $\ast$} & {\bf $\ast$} & {\bf $\ast$} & {\bf $\ast$} & {\bf $\ast$} & {\bf $\ast$} & {\bf $\ast$} & {\bf $\ast$} & {\bf $\ast$} & {\bf $\ast$}
\end{tabular}
\caption{Empirical power of $T_{n,a}$ against competitors ($d=1$, $\alpha = 0.05$, 10~000 replications)}\label{pow.T.1}
\end{table}

\begin{table}[b]
\setlength{\tabcolsep}{0.5mm}
\centering
\footnotesize
\begin{tabular}{lc|ccc|ccccccccc|ccccccccc}
 \multicolumn{5}{c}{}            & \multicolumn{9}{c}{BHEP$_a$} & \multicolumn{9}{c}{$T_{n,a}$} \\
 Alt. & $n$ & HV$_5$ & HV$_\infty$ & HJG$_{1.5}$  & 0.1 & 0.25 & 0.5 & 0.75 & 1 & 2 & 3 & 5 & 10 & 0.1 & 0.25 & 0.5 & 0.75 & 1 & 2 & 3 & 5 & 10\\
  \hline
  \multirow{2}{*}{NMix$(0.1,3,{\rm I}_2)$} & 20 & 32 & 37 & 33 & 38 & 39 & 41 & {\bf 42} & 38 & 23 & 17 & 11 & 8 & 31 & 33 & 34 & 35 & 35 & 38 & 39 & 38 & 37 \\
  & 50 & 69 & 85 & 73 & 86 &  87 & {\bf 88} & 86 & 83 & 62 & 47 & 29 & 16 & 73 & 80 & 82 & 83 & 84 & 86 & 86 & 85 & 83 \\

  \multirow{2}{*}{NMix$(0.5,0,{\rm B}_2)$} & 20 & {\bf 27} & 25 & {\bf 27} & 25 & 25 & 24 & 23 & 20 & 13 & 10 & 8 & 7 & {\bf 27} & {\bf 27} & {\bf 27} & {\bf 27} & {\bf 27} & 26 & 25 & 24 & 24 \\
  & 50& 53 & 43 & 52 & 43 & 45 & 47 & 43 & 37 & 23 & 17 & 12 & 9 & 54 & {\bf 55} & {\bf 55} & {\bf 55} & {\bf 55} & 53 & 50 & 48 & 45 \\

  \multirow{2}{*}{NMix$(0.9,0,{\rm B}_2)$} & {\bf 20} & 16 & 16 & 17 & 16 & 16 & 17 & 16 & 16 & 12 & 11 & 9 & 8 & {\bf 20} & 19 & 19 & 19 & 18 & 18 & 17 & 16 & 15 \\
  & 50& 25 & 21 & 24 & 21 & 21 & 26 & 30 & 32 & 28 & 23 & 16 & 11 & {\bf 40} & {\bf 40} & 37 & 35 & 34 & 30 & 27 & 23 & 21 \\
  \hline
  \multirow{2}{*}{$t_3(0,{\rm I}_2)$}& 20 & 54 & 53 & 54 & 52 & 52 & 52 & 51 & 48 & 37 & 31 & 24 & 15 & {\bf 58} & 57 & 57 & 57 & {\bf 58} & 57 & 55 & 53 & 51 \\
  & 50& 85 & 78 & 85 & 78 & 80 & 84 & 85 & 83 & 74 & 67 & 54 & 37 & {\bf 89} & 88 & 88 & {\bf 89} & {\bf 89} & {\bf 89} & 88 & 85 & 81 \\

  \multirow{2}{*}{$t_5(0,{\rm I}_2)$}& 20 & 33 & 31 & 33 & 31 & 31 & 30 & 28 & 25 & 16 & 13 & 10 & 8 & {\bf 35} & 34 & 34 & 34 & 34 & 34 & 32 & 31 & 30 \\
  & 50& 60 & 53 & 60 & 52 & 53 & 55 & 53 & 49 & 34 & 27 & 19 & 12 & {\bf 64} & 63 & 63 & {\bf 64} & {\bf 64} & {\bf 64} & 61 & 57 & 53 \\

  \multirow{2}{*}{$t_{10}(0,{\rm I}_2)$} & 20 & 16 & 16 & 16 & 15 & 15 & 14 & 13 & 11 & 8 & 7 & 6 & 6 & {\bf 17} & 16 & {\bf 17} & {\bf 17} & {\bf 17} & 16 & 16 & 15 & 15 \\
  & 50& 29 & 25 & 28 & 25 & 24 & 23 & 21 & 17 & 11 & 10 & 8 & 7 & {\bf 30} & 29 & 29 & {\bf 30} & {\bf 30} & {\bf 30} & 28 & 26 & 24 \\
  \hline
  \multirow{2}{*}{C$^2(0,1)$} & 20& 95 & 93 & 95 & 93 & 94 & 95 & 96 & {\bf 97} & 96 & 94 & 91 & 82 & {\bf 97} & {\bf 97} & {\bf 97} & {\bf 97} & {\bf 97} & {\bf 97} & 96 & 95 & 94 \\
  & 50& {\bf $\ast$} & 99 & {\bf $\ast$} & {\bf $\ast$} & {\bf $\ast$} & {\bf $\ast$} & {\bf $\ast$} & {\bf $\ast$} & {\bf $\ast$} & {\bf $\ast$} & {\bf $\ast$} & {\bf $\ast$} & {\bf $\ast$} & {\bf $\ast$} & {\bf $\ast$} & {\bf $\ast$} & {\bf $\ast$} & {\bf $\ast$} & {\bf $\ast$} & {\bf $\ast$} & {\bf $\ast$} \\

  \multirow{2}{*}{L$^2(0,1)$} & 20& 17 & 17 & 17 & 16 & 16 & 15 & 14 & 12 & 8 & 7 & 6 & 6 & 17 & 17 & 17 & {\bf 18} & {\bf 18} & 17 & 16 & 15 & 15 \\
  & 50& 29 & 24 & 29 & 24 & 24 & 24 & 21 & 18 & 12 & 10 & 8 & 7 & 31 & 31 & 31 & {\bf 32} & {\bf 32} & 30 & 28 & 26 & 24 \\

  \multirow{2}{*}{$\Gamma^2(0.5,1)$} & 20& 90 & 96 & 92 & 96 & 97 & 98 & {\bf 99} & {\bf 99} & 98 & 97 & 95 & 88 & 92 & 93 & 93 & 94 & 95 & 97 & 97 & 96 & 96 \\
  & 50& {\bf $\ast$} & {\bf $\ast$} & {\bf $\ast$} & {\bf $\ast$} & {\bf $\ast$} & {\bf $\ast$} & {\bf $\ast$} & {\bf $\ast$} & {\bf $\ast$} & {\bf $\ast$} & {\bf $\ast$} & {\bf $\ast$} & {\bf $\ast$} & {\bf $\ast$} & {\bf $\ast$} & {\bf $\ast$} & {\bf $\ast$} & {\bf $\ast$} & {\bf $\ast$} & {\bf $\ast$} & {\bf $\ast$} \\

  \multirow{2}{*}{$\Gamma^2(5,1)$} & 20& 22 & 25 & 23 & 26 & 27 & {\bf 28} & 27 & 25 & 16 & 12 & 9 & 7 & 20 & 21 & 22 & 22 & 23 & 25 & 26 & 26 & 26 \\
  & 50& 53 & 68 & 56 & 69 & 70 & {\bf 72} & 70 & 64 & 40 & 28 & 18 & 10 & 47 & 53 & 56 & 58 & 60 & 67 & 68 & 68 & 67 \\

  \multirow{2}{*}{P$_{VII}^2(10)$} & 20& 27 & 26 & 27 & 26 & 26 & 25 & 23 & 20 & 12 & 10 & 8 & 7 & {\bf 28} & {\bf 28} & {\bf 28} & {\bf 28} & {\bf 28} & {\bf 28} & 27 & 25 & 25 \\
  & 50& 50 & 44 & 50 & 43 & 44 & 44 & 42 & 38 & 25 & 20 & 15 & 10 & {\bf 55} & 54 & {\bf 55} & {\bf 55} & {\bf 55} & 54 & 51 & 47 & 44 \\

  \multirow{2}{*}{P$_{VII}^2(20)$} & 20& {\bf 14} & 13 & {\bf 14} & 13 & 13 & 12 & 11 & 10 & 7 & 6 & 6 & 5 & {\bf 14} & {\bf 14} & {\bf 14} & {\bf 14} & {\bf 14} & {\bf 14} & 13 & 13 & 13 \\
  & 50& 23 & 20 & 23 & 20 & 20 & 18 & 16 & 13 & 9 & 7 & 7 & 6 & {\bf 24} & {\bf 24} & {\bf 24} & {\bf 24} & {\bf 24} & {\bf 24} & 22 & 21 & 20 \\
  \hline
  \multirow{2}{*}{$\mathcal{S}^2(\mbox{Exp}(1))$} & 20& 67 & 66 & 67 & 65 & 66 & 70 & 74 & 77 & {\bf 79} & 78 & 72 & 59 & 77 & 75 & 75 & 76 & 77 & 76 & 74 & 69 & 64 \\
  & 50& 92 & 83 & 92 & 83 & 89 & 96 & 99 & 99 & {\bf $\ast$} & {\bf $\ast$} & 99 & 97 & 98 & 98 & 98 & 99 & 99 & 99 & 98 & 96 & 91 \\

  \multirow{2}{*}{$\mathcal{S}^2({\rm B}(1,2))$}& 20& 14 & 15 & 14 & 14 & 15 & 17 & 20 & 24 & 33 & {\bf 36} & 35 & 28 & 24 & 22 & 21 & 21 & 22 & 21 & 18 & 16 & 14 \\
  & 50& 10 & 12 & 10 & 11 & 13 & 23 & 38 & 52 & 75 & {\bf 78} & 76 & 66 & 34 & 32 & 32 & 33 & 34 & 32 & 26 & 20 & 14 \\

  \multirow{2}{*}{$\mathcal{S}^2(\chi^2_5)$}& 20& 22 & 21 & 22 & 21 & 21 & 20 & 18 & 16 & 10 & 9 & 8 & 7 & {\bf 23} & {\bf 23} & {\bf 23} & {\bf 23} & {\bf 23} & {\bf 23} & 22 & 21 & 20 \\
  & 50& 38 & 32 & 38 & 32 & 32 & 33 & 32 & 29 & 20 & 15 & 12 & 9 & 43 & 42 & 42 & 43 & {\bf 44} & 43 & 40 & 36 & 32 \\

\end{tabular}
\caption{Empirical power of $T_{n,a}$ against competitors ($d=2$, $\alpha = 0.05$, 10~000 replications)}\label{pow.T.2}
\end{table}

\begin{table}[b]
\setlength{\tabcolsep}{0.5mm}
\centering
\footnotesize
\begin{tabular}{lc|ccc|ccccccccc|ccccccccc}
 \multicolumn{5}{c}{}            & \multicolumn{9}{c}{BHEP$_a$} & \multicolumn{9}{c}{$T_{n,a}$} \\
 Alt. & $n$ & HV$_5$ & HV$_\infty$ & HJG$_{1.5}$  & 0.1 & 0.25 & 0.5 & 0.75 & 1 & 2 & 3 & 5 & 10 & 0.1 & 0.25 & 0.5 & 0.75 & 1 & 2 & 3 & 5 & 10\\
  \hline
  \multirow{2}{*}{NMix$(0.1,3,{\rm I}_3)$} & 20 & 33 & 39 & 34 & 39 & 41 & {\bf 44} & {\bf 44} & 38 & 20 & 14 & 9 & 7 & 31 & 34 & 35 & 35 & 35 & 37 & 39 & 39 & 38 \\
  & 50 & 67 & 92 & 72 & 91 & 93 & {\bf 95} & 93 & 89 & 64 & 42 & 21 & 10 & 71 & 85 & 87 & 86 & 86 & 90 & 91 & 90 & 88 \\

  \multirow{2}{*}{NMix$(0.5,0,{\rm B}_3)$} & 20 & {\bf 43} & 42 & {\bf 43} & 41 & 41 & 39 & 35 & 28 & 16 & 12 & 9 & 7 & 41 & 42 & 42 & 42 & 42 & {\bf 43} & 42 & 40 & 38 \\
  & 50 & 80 & 72 & 79 & 72 & 73 & 74 & 68 & 60 & 36 & 25 & 15 & 9 & 80 & 80 &{\bf  81} & {\bf 81} & {\bf 81} & 80 & 79 & 76 & 73 \\

  \multirow{2}{*}{NMix$(0.9,0,{\rm B}_3)$} & 20 & 28 & 29 & 28 & 28 & 28 & 29 & 30 & 28 & 22 & 17 & 12 & 9 & {\bf 37} & {\bf 37} & 35 & 34 & 33 & 32 & 30 & 27 & 25 \\
  & 50 & 45 & 40 & 44 & 38 & 42 & 53 & 63 & 67 & 62 & 49 & 28 & 13 & 75 & {\bf 76} & 72 & 68 & 67 & 62 & 58 & 49 & 40 \\
  \hline
  \multirow{2}{*}{$t_3(0,{\rm I}_3)$} & 20 & 65 & 66 & 65 & 64 & 64 & 64 & 61 & 56 & 41 & 32 & 21 & 13 & {\bf 71} & 70 & 69 & 69 & 69 & 69 & 67 & 64 & 61 \\
  & 50 & 94 & 91 & 94 & 90 & 92 & 94 & 94 & 93 & 85 & 75 & 56 & 28 & {\bf 98} & 97 & 97 & 97 & 97 & 97 & 96 & 95 & 92 \\

  \multirow{2}{*}{$t_5(0,{\rm I}_3)$} & 20 & 40 & 40 & 40 & 39 & 39 & 38 & 34 & 29 & 17 & 13 & 10 & 8 & {\bf 45} & 44 & 43 & 43 & 43 & 43 & 42 & 38 & 36 \\
  & 50 & 72 & 68 & 72 & 67 & 68 & 69 & 66 & 61 & 42 & 32 & 20 & 11 & {\bf 80} & 78 & 78 & 78 & 78 & 79 & 77 & 73 & 67 \\

  \multirow{2}{*}{$t_{10}(0,{\rm I}_3)$} & 20 & 20 & 20 & 20 & 19 & 19 & 18 & 15 & 12 & 8 & 7 & 7 & 6 & {\bf 21} & {\bf 21} & {\bf 21} & {\bf 21} & {\bf 21} & {\bf 21} & 20 & 18 & 18 \\
  & 50 & 39 & 35 & 39 & 34 & 33 & 32 & 27 & 22 & 13 & 10 & 8 & 6 & {\bf 45} & 42 & 42 & 42 & 43 & 43 & 41 & 36 & 33 \\
  \hline
  \multirow{2}{*}{C$^3(0,1)$} & 20 & 98 & 98 & 98 & 97 & 98 & 98 & {\bf 99} & {\bf 99} & 98 & 96 & 91 & 77 & {\bf 99} & {\bf 99} & {\bf 99} & {\bf 99} & {\bf 99} & {\bf 99} & {\bf 99} & 98 & 97 \\
  & 50 & {\bf $\ast$} & {\bf $\ast$} & {\bf $\ast$} & {\bf $\ast$} & {\bf $\ast$} & {\bf $\ast$} & {\bf $\ast$} & {\bf $\ast$} & {\bf $\ast$} & {\bf $\ast$} & {\bf $\ast$} & {\bf $\ast$} & {\bf $\ast$} & {\bf $\ast$} & {\bf $\ast$} & {\bf $\ast$} & {\bf $\ast$} & {\bf $\ast$} & {\bf $\ast$} & {\bf $\ast$} & {\bf $\ast$} \\

  \multirow{2}{*}{L$^3(0,1)$} & 20 & 16 & 16 & 16 & 16 & 16 & 15 & 13 & 11 & 7 & 6 & 6 & 6 & {\bf 18} & {\bf 18} & {\bf 18} & {\bf 18} & {\bf 18} & 17 & 16 & 16 & 15 \\
  & 50 & 31 & 29 & 31 & 27 & 27 & 26 & 22 & 18 & 11 & 10 & 8 & 6 & {\bf 39} & 37 & 36 & 36 & 36 & 36 & 34 & 29 & 26 \\

  \multirow{2}{*}{$\Gamma^3(0.5,1)$} & 20 & 92 & 98 & 93 & 98 & 98 & 99 & {\bf $\ast$} & {\bf $\ast$} & 98 & 96 & 89 & 68 & 94 & 96 & 95 & 95 & 95 & 97 & 98 & 97 & 97 \\
  & 50 & {\bf $\ast$} & {\bf $\ast$} & {\bf $\ast$} & {\bf $\ast$} & {\bf $\ast$} & {\bf $\ast$} & {\bf $\ast$} & {\bf $\ast$} & {\bf $\ast$} & {\bf $\ast$} & {\bf $\ast$} & {\bf $\ast$} & {\bf $\ast$} & {\bf $\ast$} & {\bf $\ast$} & {\bf $\ast$}
   & {\bf $\ast$} & {\bf $\ast$} & {\bf $\ast$} & {\bf $\ast$} & {\bf $\ast$} \\

  \multirow{2}{*}{$\Gamma^3(5,1)$} & 20 & 22 & 26 & 23 & 27 & 28 & {\bf 29} & 28 & 24 & 13 & 10 & 8 & 6 & 18 & 20 & 21 & 22 & 22 & 24 & 26 & 27 & 26 \\
  & 50 & 54 & 74 & 58 & 75 & 77 & {\bf 78} & 74 & 65 & 36 & 23 & 13 & 8 & 47 & 56 & 60 & 62 & 63 & 69 & 72 & 72 & 71 \\

  \multirow{2}{*}{P$_{VII}^3(10)$} & 20 & 31 & 30 & 30 & 29 & 29 & 28 & 24 & 19 & 11 & 9 & 7 & 7 & {\bf 33} & 32 & 32 & 32 & 32 & 32 & 31 & 28 & 27 \\
  & 50 & 59 & 54 & 59 & 52 & 53 & 53 & 48 & 41 & 26 & 18 & 12 & 8 & {\bf 67} & 66 & 65 & 65 & 65 & 64 & 62 & 57 & 52 \\

  \multirow{2}{*}{P$_{VII}^3(20)$} & 20 & 14 & 14 & 14 & 13 & 13 & 12 & 11 & 9 & 6 & 6 & 6 & 6 & {\bf 15} & {\bf 15} & {\bf 15} & {\bf 15} & {\bf 15} & 14 & 14 & 13 & 12 \\
  & 50 & 26 & 24 & 26 & 23 & 22 & 21 & 17 & 13 & 8 & 8 & 7 & 6 & 29 & 27 & {\bf 28} & {\bf 28} & {\bf 28} & {\bf 28} & 26 & 24 & 22 \\
  \hline
  \multirow{2}{*}{$\mathcal{S}^3(\mbox{Exp}(1))$} & 20 & 89 & 89 & 88 & 87 & 89 & 91 & 94 & 95 & 95 & 94 & 90 & 76 & {\bf 96} & 95 & 94 & 94 & 94 & 95 & 94 & 91 & 86 \\
  & 50 & 99 & 98 & 99 & 98 & 99 & {\bf $\ast$} & {\bf $\ast$} & {\bf $\ast$} & {\bf $\ast$} & {\bf $\ast$} & {\bf $\ast$} & {\bf $\ast$} & {\bf $\ast$} & {\bf $\ast$} & {\bf $\ast$} & {\bf $\ast$} & {\bf $\ast$} & {\bf $\ast$} & {\bf $\ast$} & {\bf $\ast$} & 99 \\

  \multirow{2}{*}{$\mathcal{S}^3({\rm B}(1,2))$} & 20 & 45 & 49 & 44 & 47 & 48 & 53 & 60 & 65 & {\bf 73} & {\bf 73} & 65 & 49 & 68 & 64 & 61 & 61 & 62 & 62 & 58 & 51 & 44 \\
  & 50 & 56 & 54 & 56 & 50 & 59 & 82 & 93 & 97 & {\bf 99} & {\bf 99} & 98 & 92 & 94 & 91 & 90 & 91 & 92 & 93 & 90 & 81 & 63 \\

  \multirow{2}{*}{$\mathcal{S}^3(\chi^2_5)$} & 20 & 43 & 44 & 42 & 43 & 43 & 43 & 41 & 38 & 29 & 23 & 16 & 11 & {\bf 54} & 52 & 50 & 50 & 50 & 50 & 48 & 43 & 40 \\
  & 50 & 73 & 68 & 73 & 66 & 69 & 76 & 79 & 79 & 73 & 62 & 44 & 20 & {\bf 89} & 86 & 85 & 85 & 86 & 87 & 85 & 79 & 69 \\
 \end{tabular}
\caption{Empirical power of $T_{n,a}$ against competitors ($d=3$, $\alpha = 0.05$, 10~000 replications)}\label{pow.T.3}
\end{table}

\begin{table}[t]
\setlength{\tabcolsep}{0.5mm}
\centering
\footnotesize
\begin{tabular}{lc|ccc|ccccccccc|ccccccccc}
 \multicolumn{5}{c}{}            & \multicolumn{9}{c|}{BHEP$_a$} & \multicolumn{9}{c}{$T_{n,a}$} \\
 Alt. & $n$ & HV$_5$ & HV$_\infty$ & HJG$_{1.5}$  & 0.1 & 0.25 & 0.5 & 0.75 & 1 & 2 & 3 & 5 & 10 & 0.1 & 0.25 & 0.5 & 0.75 & 1 & 2 & 3 & 5 & 10\\
  \hline
  \multirow{2}{*}{NMix$(0.1,3,{\rm I}_5)$} & 20  & 31 & 34 & 32 & {\bf 35} & {\bf 35} & {\bf 35} & 31 & 24 & 12 & 9 & 8 & 3 & 26 & 27 & 29 & 30 & 30 & 31 & 33 & 34 & 33 \\
  & 50 & 51 & 86 & 58 & 87 & 90 & {\bf 95} & 93 & 85 & 38 & 17 & 10 & 8 & 48 & 65 & 77 & 78 & 76 & 75 & 81 & 85 & 83 \\

  \multirow{2}{*}{NMix$(0.5,0,{\rm B}_5)$}& 20  & {\bf 62} & 60 & {\bf 62} & 59 & 58 & 54 & 43 & 31 & 14 & 10 & 9 & 3 & 56 & 57 & 58 & 59 & 59 & 60 & 60 & 58 & 55 \\
  & 50 & {\bf 96} & 94 & {\bf 96} & 94 & 94 & 92 & 86 & 76 & 43 & 24 & 13 & 10 & {\bf 96} & 95 & 95 & 95 & 95 & {\bf 96} & {\bf 96} & 95 & 94 \\

  \multirow{2}{*}{NMix$(0.9,0,{\rm B}_5)$}& 20 & 55 & 59 & 54 & 57 & 57 & 59 & 59 & 55 & 37 & 24 & 18 & 4 & 71 & {\bf 72} & 70 & 68 & 66 & 65 & 63 & 58 & 51 \\
  & 50 & 78 & 78 & 78 & 74 & 81 & 93 & 97 & 99 & 95 & 77 & 35 & 20 & 98 & {\bf 99} & 98 & 98 & 97 & 96 & 95 & 92 & 81 \\
  \hline
  \multirow{2}{*}{$t_3(0,{\rm I}_5)$}& 20 & 80 & 81 & 79 & 79 & 79 & 77 & 70 & 62 & 39 & 26 & 20 & 5 & {\bf 85} & 84 & 83 & 83 & 83 & 83 & 82 & 79 & 74 \\
  & 50 & 99 & 99 & 99 & 98 & 99 & 99 & 99 & 98 & 92 & 76 & 42 & 24 & {\bf $\ast$} & {\bf $\ast$} & {\bf $\ast$} & {\bf $\ast$} & {\bf $\ast$} & {\bf $\ast$} & {\bf $\ast$} & {\bf $\ast$} & 99 \\

  \multirow{2}{*}{$t_5(0,{\rm I}_5)$}& 20 & 54 & 55 & 54 & 53 & 53 & 49 & 40 & 30 & 15 & 11 & 10 & 3 & {\bf 59} & {\bf 59} & 58 & 58 & 58 & 57 & 57 & 53 & 48 \\
  & 50 & 89 & 89 & 89 & 87 & 87 & 88 & 84 & 78 & 52 & 31 & 15 & 11 & {\bf 96} & {\bf 96} & 95 & 94 & 94 & 95 & 94 & 92 & 87 \\

  \multirow{2}{*}{$t_{10}(0,{\rm I}_5)$}& 20 & 25 & 25 & 24 & 24 & 24 & 22 & 17 & 12 & 8 & 7 & 7 & 2 & {\bf 28} & {\bf 28} & {\bf 28} & {\bf 28} & 27 & 27 & 26 & 24 & 22 \\
  & 50 & 55 & 54 & 55 & 51 & 51 & 47 & 38 & 29 & 14 & 10 & 8 & 7 & {\bf 67} & 65 & 63 & 62 & 63 & 63 & 62 & 58 & 49 \\
  \hline
  \multirow{2}{*}{C$^5(0,1)$}& 20 & {\bf $\ast$} & {\bf $\ast$} & {\bf $\ast$} & 99 & {\bf $\ast$} & {\bf $\ast$} & {\bf $\ast$} & 99 & 98 & 94 & 88 & 61 & {\bf $\ast$} & {\bf $\ast$} & {\bf $\ast$} & {\bf $\ast$} & {\bf $\ast$} & {\bf $\ast$} & {\bf $\ast$} & {\bf $\ast$} & 99 \\
  & 50 & {\bf $\ast$} & {\bf $\ast$} & {\bf $\ast$} & {\bf $\ast$} & {\bf $\ast$} & {\bf $\ast$} & {\bf $\ast$} & {\bf $\ast$} & {\bf $\ast$} & {\bf $\ast$} & {\bf $\ast$} & {\bf $\ast$} & {\bf $\ast$} & {\bf $\ast$} & {\bf $\ast$} & {\bf $\ast$} & {\bf $\ast$} & {\bf $\ast$} & {\bf $\ast$} & {\bf $\ast$} & {\bf $\ast$} \\

  \multirow{2}{*}{L$^5(0,1)$}& 20 & 17 & 18 & 17 & 17 & 17 & 15 & 12 & 9 & 6 & 6 & 6 & 3 & {\bf 19} & {\bf 19} & {\bf 19} & {\bf 19} & {\bf 19} & {\bf 19} & 18 & 17 & 15 \\
  & 50 & 36 & 35 & 35 & 32 & 32 & 29 & 22 & 16 & 9 & 8 & 7 & 6 & {\bf 46} & 45 & 43 & 43 & 43 & 43 & 42 & 37 & 30 \\

  \multirow{2}{*}{$\Gamma^5(0.5,1)$}& 20 & 93 & 98 & 94 & 98 & 99 & 99 & {\bf $\ast$} & 99 & 93 & 81 & 69 & 34 & 93 & 96 & 97 & 96 & 96 & 96 & 97 & 98 & 97 \\
  & 50 & {\bf $\ast$} & {\bf $\ast$} & {\bf $\ast$} & {\bf $\ast$} & {\bf $\ast$} & {\bf $\ast$} & {\bf $\ast$} & {\bf $\ast$} & 100 & {\bf $\ast$} & 99 & 91 & {\bf $\ast$} & {\bf $\ast$} & {\bf $\ast$} & {\bf $\ast$} & {\bf $\ast$} & {\bf $\ast$} & {\bf $\ast$} & {\bf $\ast$} & {\bf $\ast$} \\

  \multirow{2}{*}{$\Gamma^5(5,1)$}& 20 & 19 & 23 & 21 & 24 & 24 & {\bf 25} & 23 & 18 & 9 & 7 & 7 & 4 & 17 & 18 & 19 & 19 & 19 & 20 & 23 & 24 & 24 \\
  & 50 & 50 & 75 & 55 & 76 & 78 & 80 & 73 & 59 & 23 & 12 & 7 & 7 & 44 & 51 & 59 & 61 & 61 & 63 & 68 & {\bf 73} & 72 \\

  \multirow{2}{*}{P$_{VII}^5(10)$}& 20 & 32 & 32 & 32 & 31 & 30 & 28 & 21 & 15 & 9 & 8 & 7 & 2 & {\bf 35} & 34 & 34 & 34 & 34 & 34 & 33 & 30 & 27 \\
  & 50 & 67 & 65 & 67 & 63 & 62 & 59 & 50 & 40 & 20 & 13 & 9 & 8 & {\bf 77} & 76 & 75 & 75 & 75 & 74 & 73 & 69 & 60 \\

  \multirow{2}{*}{P$_{VII}^5(20)$}& 20 & 13 & {\bf 14} & 13 & 13 & 12 & 11 & 10 & 7 & 6 & 6 & 6 & 3 & {\bf 14} & {\bf 14} & {\bf 14} & {\bf 14} & {\bf 14} & {\bf 14} & {\bf 14} & 13 & 11 \\
  & 50 & 28 & 27 & 28 & 26 & 25 & 22 & 16 & 12 & 7 & 6 & 6 & 6 & 34 & 33 & 31 & {\bf 32} & {\bf 32} & {\bf 32} & 31 & 28 & 24 \\
  \hline
  \multirow{2}{*}{$\mathcal{S}^5(\mbox{Exp}(1))$}& 20 & 99 & 99 & 99 & 99 & 99 & 99 & 99 & 99 & 99 & 98 & 96 & 83 & {\bf $\ast$} & {\bf $\ast$} & {\bf $\ast$} & {\bf $\ast$} & {\bf $\ast$} & {\bf $\ast$} & {\bf $\ast$} & 99 & 98 \\
  & 50 & {\bf $\ast$} & {\bf $\ast$} & {\bf $\ast$} & {\bf $\ast$} & {\bf $\ast$} & {\bf $\ast$} & {\bf $\ast$} & {\bf $\ast$} & {\bf $\ast$} & {\bf $\ast$} & {\bf $\ast$} & {\bf $\ast$}
   & {\bf $\ast$} & {\bf $\ast$} & {\bf $\ast$}
    & {\bf $\ast$} & {\bf $\ast$} & {\bf $\ast$} & {\bf $\ast$} & {\bf $\ast$} & {\bf $\ast$} \\

  \multirow{2}{*}{$\mathcal{S}^5({\rm B}(1,2))$}& 20 & 86 & 91 & 86 & 88 & 89 & 91 & 93 & 94 & 94 & 91 & 86 & 66 & {\bf 97} & {\bf 97} & 96 & 96 & 95 & 95 & 94 & 91 & 84 \\
  & 50 & 98 & 98 & 99 & 97 & 99 & {\bf $\ast$} & {\bf $\ast$} & {\bf $\ast$} & {\bf $\ast$} & {\bf $\ast$} & {\bf $\ast$} & 99 & {\bf $\ast$} & {\bf $\ast$} & {\bf $\ast$} & {\bf $\ast$} & {\bf $\ast$} & {\bf $\ast$} & {\bf $\ast$} & {\bf $\ast$} & 99 \\

  \multirow{2}{*}{$\mathcal{S}^5(\chi^2_5)$}& 20 & 77 & 80 & 76 & 78 & 78 & 77 & 74 & 71 & 59 & 46 & 36 & 12 & {\bf 89} & {\bf 89} & 87 & 86 & 86 & 85 & 84 & 79 & 71 \\
  & 50 & 98 & 98 & 98 & 96 & 98 & 99 & {\bf $\ast$} & {\bf $\ast$} & 99 & 95 & 74 & 50 & {\bf $\ast$} & {\bf $\ast$} & {\bf $\ast$} & {\bf $\ast$} & {\bf $\ast$} & {\bf $\ast$} & {\bf $\ast$} & {\bf $\ast$} & 98 \\
   \hline
\end{tabular}
\caption{Empirical power of $T_{n,a}$ against competitors ($d=5$, $\alpha = 0.05$, 10~000 replications)}\label{pow.T.5}
\end{table}

\bibliographystyle{unsrt}


\end{document}